\DeclareMathOperator{\interior}{int}
\newtheorem{definition}{Definition}
\newtheorem{assumption}{Assumption}
\newtheorem{theorem}{Theorem}
\newtheorem{lemma}[theorem]{Lemma}
\newtheorem{proposition}{Proposition}
\newtheorem{example}{Example}
\DeclareMathOperator*{\argmax}{argmax}
\DeclareMathOperator{\diag}{diag} 
\DeclareMathOperator{\Normal}{N}
\DeclareMathOperator{\Uniform}{U}
\DeclareMathOperator{\var}{Var}
\newcommand{\conv}[1]{\overset{{\mathrm{#1}}}{\to}}
\newcommand{\asto}{\conv{as}}                     
\newcommand{\dto}{\conv{d}}                       
\newcommand{\E}{\mathrm{E}}
\def\npliter{l}
\def\Npliter{L}
\def\beliefspace{\Xi}
\title{Nested Pseudo Likelihood Estimation of Continuous-Time Dynamic Discrete Games}
\date{\vspace{-0.2cm}
January 4, 2023
}
\author{ 
Jason Blevins\thanks{Email: \texttt{blevins.141@osu.edu}}\\
    \vspace{-4mm} The Ohio State University
\and  Minhae Kim\thanks{Email: \texttt{minhae.kim@okstate.edu}} \\
    \vspace{-4mm} Oklahoma State University
}
\begin{document}

\maketitle


\begin{abstract} \onehalfspacing \noindent We introduce a sequential estimator for continuous time dynamic discrete choice models (single-agent models and games) by adapting the nested pseudo likelihood (NPL) estimator of \cite{AM02,AM07}, developed for discrete time models with discrete time data, to the continuous time case with data sampled either discretely (i.e., uniformly-spaced snapshot data) or continuously.  We establish conditions for consistency and asymptotic normality of the estimator, a local convergence condition, and, for single agent models, a zero Jacobian property assuring local convergence.  We carry out a series of Monte Carlo experiments using an entry-exit game with five heterogeneous firms to confirm the large-sample properties and demonstrate finite-sample bias reduction via iteration.  In our simulations we show that the convergence issues documented for the NPL estimator in discrete time models are less likely to affect comparable continuous-time models.  We also show that there can be large bias in economically-relevant parameters, such as the competitive effect and entry cost, from estimating a misspecified discrete time model when in fact the data generating process is a continuous time model. \\

\noindent \textbf{Keywords}: continuous time, dynamic discrete games, dynamic discrete choice, nested pseudo likelihood, entry games\\
\noindent \textbf{JEL Classification}: C13, C35, C73
\end{abstract}

\newpage 

\section{Introduction}

This paper introduces a new sequential estimator for continuous-time dynamic discrete choice models. In \emph{dynamic} models, as compared to static models, forward-looking agents make decisions based on expected future payoffs each period. \emph{Discrete choice} models focus on agents' discrete choices such as a firm's entry-exit decision or a worker's retirement. Dynamic discrete choice models have been rigorously studied in labor economics and empirical industrial organization since they were pioneered by \cite{Miller84}, \cite{wolpin}, \cite{pakes1986}, and \cite{rust}. They have been used subsequently in the areas of consumer inventory \citep{Song2003, Hendel2006}, dynamic demand of durable goods \citep{Schiraldi2011, Gowrisankaran2012}, and firm entry and exit \citep{benkard, holmes}.\footnote{\cite{dorapakes} and \cite{AM10} provide the survey of applied works in dynamic discrete choice models in discrete time.} These examples, however, are all based on discrete time models.

More recently, authors have worked with \emph{continuous time} dynamic discrete choice models due to their flexibility in separating the timing of actions in the model from the timing of data sampling and to their computational advantages in large-scale games.
In such models, the state variables change in a stochastic, sequential manner which helps avoid the ``curse of dimensionality'' arising from the need to compute players' expectations over future states \citep{Dora2012}.\footnote{In discrete time models all state variables change at the same time, so if there are $n$ state variables (i.e., one for each firm) and each state variable can change to one of $\kappa$ alternative values, the number of possible future states is $\kappa^n$. In contrast, in continuous time models only one state variable can change at any given instant, meaning that there are only $\kappa (n-1)$ possible future states for a player to consider.}
\citet*{Arcidiacono2016}---henceforth \citetalias{Arcidiacono2016}---showed that continuous time models remain empirically tractable, can be estimated with continuous or discrete time data, can better approximate the economic reality of certain industries, and that the computational advantages can allow researchers to solve and simulate counterfactuals in large-scale models.
\cite{Blevins16} presents theoretical, computational and econometric properties of continuous time dynamic discrete choice games and extends results from \citetalias{Arcidiacono2016}. Since then, CCP estimation in dynamic discrete choice models has been applied in various areas, including the transportation industry \citep{mazur, qin}, supermarket industry \citep{Schiraldi2012}, TV advertising \citep{deng}, nightlife \citep{cosman}, online games \citep{nevskaya}, and job search \citep{Arcidiacono2019}.

The goal of this paper is to introduce a nested pseudo likelihood (NPL) estimator for continuous time models, which we refer to as the CTNPL estimator.
\citetalias{Arcidiacono2016} developed a two-step estimator for continuous time discrete choice models, but here we extend this estimator using insights from the discrete time NPL estimator of \cite{AM02,AM07} to iteratively impose the equilibrium conditions from the game to improve estimates of the structural parameters.
\cite{AM02} originally developed the NPL algorithm and estimator in a single agent, discrete time setting setting and \cite{AM07} extended the algorithm and estimator to discrete time dynamic discrete choice games.
\cite{bugni} introduced the $K$-MD estimator, a sequential version of the minimum-distance estimator proposed by \cite{PS08}, based on the NPL mapping.

In this paper we introduce the CTNPL estimator, derive the large sample properties, and evaluate its performance in the setting of a familiar dynamic oligopoly model.
Compared to the two-step pseudo maximum likelihood (PML) estimator, the NPL estimator has a number of advantages.
First, it does not require consistent estimates for the initial conditional choice probabilities.
Second, the estimator performs better in finite sample as nonparametric estimates of choice probabilities (CCPs) can be imprecise, leading two-step estimators to be biased.
This paper shows that these advantages also carry over to the CTNPL estimator for continuous time models.
We also compare continuous time and discrete time models and (i) show that the CTNPL estimator has advantages in terms of convergence\footnote{Despite its advantages, \cite{PS10} showed that the NPL algorithm may fail to converge to the NPL estimator when the NPL mapping is unstable.
They provided a simple two-firm example model illustrating the issue and subsequent authors such as \cite{Egesdal} and \cite{Aguirregabiria2019} have studied the problem in more complex models. 
\cite{KS12} provided a local convergence condition for the NPL estimator in discrete time models, however, \cite{Aguirregabiria2019} show that even when the population mapping is stable the algorithm can fail to converge when the sample counterpart mapping is unstable.
We derive a local convergence condition for continuous time models, which is motivated by that of \cite{KS12}, and in our simulation results the condition is more likely to be satisfied in continuous time.} and (ii) evaluate the potential for biased estimates and misleading economic conclusions when estimating a discrete time model using discrete time snapshot data that was actually generated by a continuous time model.



The rest of the paper is organized as follows. Section 2 develops a dynamic discrete game in continuous time with an example. Section 3 introduces the CTNPL algorithm to estimate the parameters in the model. In Section~\ref{sec:mc}, we simulate a five-player game in retail industry and present the performance of the CTNPL estimator. Section 5 concludes.

\section{Model}

In this section, we briefly review the continuous time dynamic discrete game framework of \citetalias{Arcidiacono2016}.
To provide a concrete example, we also introduce a continuous-time version of the five-player entry/exit model of \cite{AM07}.
This model serves not only as a running example but also as the basis for our Monte Carlo experiments in Section~\ref{sec:mc}.

\subsection{Setting}

There are $N$ agents in the model indexed by $i=1, \ldots, N$ who choose actions $j$ from a choice set $\mathcal{A}=\{0, 1, \ldots, J-1\}$ at decision times that occur in continuous time indexed by $t \in [0, \infty)$. Such decisions by agents result in endogenous state changes. The model also allows exogenous state variables. Decision times and exogenous state changes occur according to Poisson processes as detailed below.

\textbf{State space.} The states are discrete and finite, so every state at time $t$ can be represented by a state vector $x_k$ in a finite state space $\mathcal{X}=\{x_1, \ldots, x_K\}$ where each state is indexed by $k=1, 2, \ldots, K$.

\begin{example}
  Suppose that there are $N=5$ firms competing in a small local retail market where firms have at most one store.
  Based on incumbency status $a_i \in \{0,1\}$ each firm $i$ can either choose to enter/exit ($j=1$) or do nothing ($j=0$).
  Exit is not permanent: firms that exit have the opportunity to re-enter in the future.
  Market size is denoted by an exogenous state variable $s_k \in \{1,2,\ldots, 5\}$.
  Therefore, there are $K=5 \times 2^5 = 160$ distinct states in the model.
  Each state can be represented by a $6 \times 1$ vector $x_k$:
  \begin{equation*}
    x_k = (s_k, a_{1k}, a_{2k}, a_{3k}, a_{4k}, a_{5k}).
  \end{equation*}
\end{example}

\textbf{State changes.} There are two types of state changes in the model---endogenous state changes which are relying on each agent's decision and exogenous state changes driven by ``nature.'' As endogenous state changes only when an agent makes a choice, we can define a state continuation function $l(i,j,k)$ which maps an agent $i$'s choice $j$ and current state $k$ to next state $l$. We assume that choice $j=0$, where $l(i,0,k)=k$, is costless and that all choices $j$ are meaningfully distinct $l(i,j,k) \neq l(i,j',k)$ for all states $k=1, \ldots, K$.
On the other hand, although exogenous state changes are not the result of an action by a player per se, we can think of them as actions by nature.  When treated as a player in this way we will index nature by $i = 0$.

\addtocounter{example}{-1}
\begin{example}[continued] If firm 2 ($i=2$) decides to enter a market ($j=1$) while the market is in state $k$, then the continuation state index will be denoted $l(2,1,k)$.  The new state in vector form is $x_{l(2,1,k)} = (s_k, a_{1k}, 1, \ldots, a_{5k})$. To give an example of an exogenous state change, when market size increases by 1 in state $k$, the state moves from $k$ to $l$ which can be written in vector form as $x' = (s_k + 1, a_{1k}, a_{2k}, \ldots, a_{5k})$. Note that only one state variable moves at any single instant and other state variables remain the same as in the previous state. \qed
\end{example}

\textbf{Poisson and Markov jump processes.\footnote{Definitions and notations in this section are based on \cite{Schuette}.}}
%
%
Corresponding to the two types of state changes in the model there are also two types of Poisson processes governing the times of those state changes. First, in each state $k$ each player $i$ makes decisions about changing its endogenous state according to a Poisson processes with a rate parameter $\lambda_{ik}$. Recall that for such a Poisson process, the expected number of events during a one-unit time interval is $1/\lambda_{ik}$. For example, if the unit of time is one calendar year and we assume $\lambda_{ik}=\lambda=1$ for all $i$ and $k$, then each agent moves on average once per year. This would correspond closely to the usual discrete-time assumption of exactly one move per year. However, the Poisson assumption is more general since it allows, stochastically, for each player to make more or fewer moves per period. Because the probabilities of each choice vary, in general, across states, the state-to-state transition rates are endogenous and are state-and-firm specific.  Second, in each state $k$ the exogenous state variables change according to another state-specific Poisson process.
We assume that these Poisson processes are independent so that the probability that more than one component of the state vector $x_k$ changes at a single instant is zero. However, the processes are not identically distributed since they have different rate parameters.

\textbf{Intensity matrices.} Since Poisson processes are continuous-time Markov processes and the state space of the model is finite, the overall state transition process is a finite Markov jump process. Such a process can be characterized by a $K \times K$ intensity matrix (also known as a transition rate matrix or infinitesimal generator matrix) which contains the rates of all possible state-to-state transitions.  This matrix is the instantaneous counterpart of the one-step transition probability matrix in a discrete time model.

Each component $q_{kl}$ of the intensity matrix $Q = (q_{kl})$ represents the rate of leaving state $k$ and transitioning to state $l$. The difference is that the intensity rate gives the transition rate for an instant instead of over one period of time.  So, when we define transition probability matrix over some small time interval $h$ as $P(h)$, we can write $P(h) = I + Qh$.  Note that as $h \to 0$, $P(h)$ approaches the identity matrix (under which no transitions occur).
Then, for states $k$, $l$ with $k \neq l$, we have
\begin{equation*}
\Pr(x_{t+h}=l \mid x_t=k) =
\begin{cases} q_{kl}h & l \neq k, \\
1 - \sum_{l\neq k}q_{kl}h & l = k.
\end{cases}
\end{equation*}

We can formally define the intensity matrix $Q$:
\begin{equation*}
Q= \begin{bmatrix}
q_{11} & q_{12} & q_{13} & \cdots & q_{1K} \\
q_{21} & q_{22} & q_{23} & \cdots & q_{2K} \\
\vdots & \vdots & \vdots & \ddots & \vdots \\
q_{K1} & q_{K2} & q_{K3} & \cdots & q_{KK}
\end{bmatrix},
\end{equation*}
where
\begin{equation*}
q_{kl} =
\begin{cases} 
\lim_{h \to 0} \frac{\Pr(x_{t+h} = l \mid x_t = k)}{h} & \text{for} \ l \neq k,\\
- \sum_{l \neq k} q_{kl} & \text{otherwise}.
\end{cases}
\end{equation*}

For endogenous state changes, we can write $q_{kl}$ as the product of the decision rate in state $k$ and a choice probability. First, each agent decides whether or not to change the state according to Poisson process at rate $\lambda_{ik}$.  To determine the choice probability, we must look closer at the player's problem. Players are forward-looking and discount future payoffs at rate $\rho \in (0, \infty)$.  They observe other players' moves and optimally choose their own action $j\in \mathcal{A}$ based on their beliefs of other agents' actions, including state changes by nature.  We denote the conditional choice probability of agent $i$ choosing action $j$ in state $k$ as $\sigma_{ijk}$. So, if at some state $k$ a player makes decisions at rate $\lambda_{ik}$ and chooses action $j$ by probability $\sigma_{ijk}$, then the rate of state changes due to player $i$ taking action $j$ in state $k$ is $\lambda_{ik} \sigma_{ijk}$.

Under the assumptions of the model so far the overall intensity matrix $Q$ can be written as a sum of individual player-specific intensity matrices: $Q = Q_0 + Q_1 + \dots + Q_N$, where $Q_0 = (q_{kl})$ denotes the intensity matrix for exogenous state changes by nature and the elements of $Q_0$ are denoted $q_{kl}$ (i.e., we drop the $i = 0$ subscript for nature).
For players $i= 1, \dots, N$, each element of $Q_i$, denoted $q_{ikl}$, can be written as follows:
\begin{equation*}
q_{ikl} =
\begin{cases}
\lambda_{ik} \sigma_{ijk} & \text{if} \ l \neq k \ \text{and} \ l(i,j,k)=l, \\
- \sum_{l \neq k} q_{ikl} & l(i,j,k) = k, \\
0 & \text{otherwise}.
\end{cases}
\end{equation*}
The first case represents state transitions out of the current state.
The rate of a transition from $k$ to $l$ is the product of the decision rate in $k$ and the probability of choosing action $j$ which results a transition to state $l$.
The second case is the net rate of leaving state $k$, and therefore is the negative of the sum of the off-diagonal elements in row $k$.
Because firms may find it optimal to remain in the same state (choosing $j = 0$) we must allow for self-transitions, so the total rate of leaving state $k$ due to action by player $i$ need not equal the rate of decisions $\lambda_{ik}$.
The third case covers states $l$ for which transitions from $k$ to $l$ cannot occur in a single instant.

\addtocounter{example}{-1}
\begin{example}[continued] In the previous example, nature changes the market size according to Markov jump process characterized by intensity matrix $Q_0$. We impose a natural restriction that the market size can only change by 1 level at an instant.
  Firms make decisions at rate $\lambda$ and will then choose actions according to the choice probabilities $\sigma_{ijk}$.
  So, $q_{ikl}$ for firm $i$'s intensity matrix can be written as:
  \begin{equation*}
    q_{ikl} = \begin{cases} \lambda \sigma_{i1k} & \text{if firm $i$ enters or exits and $l(i,1,k)=l$}, \\
      - \lambda \sigma_{i1k} & \text{if firm $i$ does nothing}, \\
      0 & \text{otherwise}.\end{cases}
  \end{equation*}
  \qed
\end{example}

\textbf{Payoffs.} In the continuous time setting we distinguish agents' flow payoffs from instantaneous choice-specific payoffs. First, agents receive flow payoffs at each state $k$ regardless of their choices. We denote the flow payoffs by $u_{ik}$ and assume $|u_{ik}|< \infty$ for $i=1, \ldots, N$ and $k=1, \ldots, K$. On the other hand, at a decision time when player $i$ chooses action $j$ in state $k$, the player receives a lump-sum instantaneous choice-specific payoff which is the sum of a deterministic component $\psi_{ijk}$, where $|\psi_{ijk}|<\infty$, and a stochastic component $\varepsilon_{ijk} \in \mathbb{R}$. The deterministic component $\psi_{ijk}$ is common knowledge to other players but the second part $\varepsilon_{ijk}$ is only observed by player $i$. However, players have common knowledge of the distribution of the stochastic components.

\addtocounter{example}{-1}
\begin{example}[continued] We consider a log-linear profit function for the flow payoffs including competition effects and market size. 
\begin{equation}\label{eq:profit}u_{ik} =\theta_{\text{RS}}\ln(s_k) - \theta_{\text{RN}}\ln\big(1+\sum_{m\neq i}a_{mk}\big)- \theta_{FC,i} \end{equation}
We estimate the fixed costs $\theta_{FC,i}$ for $i=1, \ldots, 5$ and other profit function parameters $\theta_{\text{RS}}$ and $\theta_{\text{RN}}$.
When a firm decides an action $j$, they receive an instantaneous payoff $\psi_{ijk}+\varepsilon_{ijk}$ where
\begin{equation}
\label{eq:switching}
\psi_{ijk} = \begin{cases} -\theta_{\text{EC}} & \text{if} \ j = 1 \  \text{and} \  a_{ik} = 0,
\\ 0 & \text{otherwise}.
\end{cases}
\end{equation}
Note that there is a distinction between payoffs received while in a particular state and payoffs associated with actions.
For example, the entry cost is paid as a lump sum in \eqref{eq:switching} instead of appearing in the flow profit equation \eqref{eq:profit}.
By comparison, the entry cost parameter appears together with product market profits in the aggregated discrete time period profit equation given in (48) in \cite{AM07}.
\qed
\end{example}

\subsection{Bellman Optimality and CCP Representation}

In a Markovian game, every agent plays a stationary Markov strategy based on their beliefs regarding other agents' choice probabilities. Let $\varsigma_i$ denote the beliefs held by player $i$ about each of the $(N-1)$ other players playing each of their $J$ choices in each of $K$ states. Therefore, $\varsigma_i$ is a collection of $(N-1) \times J \times K$ probabilities, denoted $\varsigma_{imjk}$ for each player $m \neq i$, choice $j$, and state $k$.

We now introduce the value functions and CCPs. Agents establish the value function based on their expectations about endogenous state changes due to other agents' moves, exogenous state changes due to nature, and their own future decisions.  We follow the derivation of instantaneous Bellman equation in \cite{Blevins16}.  The probability that agent $i$ makes a decision during a small increment of time of length $h$ while in state $k$ is $\lambda_{ik} h$ under the Poisson assumption.  Denoting each agent $i$'s discount rate as $\rho_i$, the discount factor for the time increment $h$ is $1/(1+\rho_i h)$. Given agent $i$'s beliefs, $\varsigma_i$, we can express the Bellman equation as below:
\begin{multline*}
  V_{ik} = \frac{1}{1+\rho_i h} \left[u_{ik} h + \sum_{l \neq k} q_{kl} h V_{il} 
  + \sum_{m \neq i} \lambda_{mk}h\sum_{j=0}^J \varsigma_{imjk}V_{i,l(m,j,k)} \right. \\ \left.
  + \lambda_{ik} h \E \left( \max_j \{\psi_{ijk} + \varepsilon_{ijk}+V_{l(i,j,k)}\} \right)
  + \left(1-\sum_{i=1}^N\lambda_{ik} h - \sum_{l \neq k} q_{0kl} h\right) V_{ik} + o(h)\right].
\end{multline*}
We now turn to each of the terms inside square brackets in the previous equation.
First, over the small amount of time $h$ that the model remains in state $k$ player $i$ receives an integrated flow payoff of $u_{ik}h$.
Second, if the state changes from $k$ to $l$ due to nature, which occurs with probability proportional $q_{kl}$ over the interval $h$, then player $i$ receives the continuation value $V_{il}$.
Third, each of player $i$'s rival players, indexed by $m \neq i$, may make a decision with probability proportional to $\lambda_{mk}$ over the time interval $h$.
When this occurs, player $i$ believes that player $m$ plays action $j$ in state $k$ with probability $\varsigma_{imjk}$.
When this happens, player $i$ receives a continuation value $V_{i, l(m,j,k)}$ where $l(m,j,k)$ denotes the index of the continuation state following the action of player $m$.
Fourth, at rate $\lambda_{ik}$ player $i$ makes a decision and optimally chooses an action $j$. However, the values of the stochastic choice-specific shocks have not yet been realized, so player $i$ receives receives the expected maximum value of $\psi_{ijk}+\varepsilon_{ijk} + V_{l(i,j,k)}$ where $j$ is chosen optimally at the future decision time.
Finally, the last (non-negligible) term accounts for the case where no state changes occur over the time interval $h$ and so the continuation value is simply $V_{ik}$.

By rearranging and letting $h \to 0$, we arrive at a simpler expression for the Bellman equation:
\begin{equation*}
V_{ik} = \frac{u_{ik}+\sum_{l \neq k}q_{kl} V_{il} + \sum_{m \neq i} \lambda_{mk} \sum_j \varsigma_{imjk}V_{i, l(m,j,k)} + \lambda_{ik} E\max_j\{\psi_{ijk} + \epsilon_{ijk} + V_{i, l(i,j,k)}\}}{\rho + \sum_{l \neq k} q_{kl} + \sum_{i=1}^N\lambda_{ik}}.
\end{equation*}
Now we define the best response function for each agent.
A Markov strategy for agent $i$ is a mapping $\delta_{i}$ from each state $(k, \epsilon_{ik}) \in \mathcal{X} \times \mathbb{R}^J$ to an action $j \in \mathcal{A}$. Then, the best response for an agent $i$ is defined as:
\begin{equation*}
\delta_i(k, \epsilon_{ik}; \varsigma_i) = j \Longleftrightarrow \psi_{ijk} + \epsilon_{ijk} + V_{i,l(i,j,k)} \geq \psi_{ij'k} + \epsilon_{ij'k} + V_{i, l(i,j',k)} \quad \forall j' \in \mathcal{A}.
\end{equation*}

Then, the choice probability for each agent becomes:
\begin{equation*}
\sigma_{ijk} = \Pr[\delta_i(k, \epsilon_{ik}; \varsigma_i) =j|k].
\end{equation*}

We now collect assumptions needed for the existence of an equilibrium. 

\begin{assumption}[Discrete states]
\label{a:state}
The state space is finite: $K=|\mathcal{X}|<\infty$.
\end{assumption}

\begin{assumption}[Bounded rates and payoffs]
\label{a:bounded} The discount rate $\rho_i$, move arrival rate, rates of state changes due to nature, and payoffs are all bounded for all $i=1, \ldots, N$, $j=0, \ldots, J-1$, $k=1, \ldots, K$, $l=1, \ldots, K$ with $l \neq k$: (a) $0<\rho_i<\infty$, (b) $0<\lambda_{ik}<\infty$, (c) $0 \leq q_{kl}<\infty$, (d) $|u_{ik}|<\infty$, and (e) $|\psi_{ijk}|<\infty$. 
\end{assumption}

Assumptions \ref{a:state} and \ref{a:bounded} are needed to define a continuous-time finite Markov jump process.
In Assumption \ref{a:bounded}, we rule out infinite rates and payoffs to make it possible to estimate by restricting the value to a finite number.

\begin{assumption}[Additive separability]
\label{a:additive}
In each state $k$, the instantaneous payoff associated with choice $j$ is additively separable as $\psi_{ijk}+\varepsilon_{ijk}$.
\end{assumption}

\begin{assumption}[Distinct actions]
\label{a:distinct}
For all $i=1, \ldots, N$ and $k=1, \ldots, K$, the continuation state function $l(i,j,k)$ and the choice-specific payoffs $\psi_{ijk}$ satisfy the following two properties: (a) choice $j=0$ is a costless continuation choice with $l(i,j,k)=k$ and $\psi_{ijk}=0$, and (b) all choices $j$ are meaningfully distinct in the sense that the continuation states differ: $l(i,j,k) \neq l(i,j',k)$ for all $j=0, \ldots, J-1$ and $j'\neq j$. 
\end{assumption}

\begin{assumption}[Private information]
\label{a:private}
The errors $\varepsilon_{ik}$ are i.i.d. over time with joint distribution $F$ which is absolutely continuous with respect to Lebesgue measure (with joint density $f$), has finite first moments, and has support equal to $\mathbb{R}^J$. 
\end{assumption}

Assumption \ref{a:additive} is common in the literature, as in the case of Assumption 1 of \cite{AM02} and \cite{AM07}. Assumption \ref{a:distinct} requires that different choices should result in observationally distinct state changes. For example, choices $j=1$ and $j=2$ should not be defined separately if they in fact always result in the same state changes. It also involves a normalization on the baseline payoff/cost associated with inaction. Finally, the distributional assumption on $\varepsilon$ is widely in the previous literature. For example, see Assumption 2 of \cite{AM02} and \cite{AM07}.

We define Markov perfect equilibrium (MPE) as in \cite{AM07} and \citetalias{Arcidiacono2016}.
\begin{definition}[Markov perfect equilibrium] A collection of Markov strategies $\{\delta_1, \ldots, \delta_N\}$ and beliefs $\{\varsigma_1, \ldots, \varsigma_N\}$ is a Markov perfect equilibrium if for all $i$:
\begin{enumerate}
    \item $\delta_i(k, \epsilon_i)$ is a best response given beliefs $\varsigma_i$, for all $k$ and almost every $\epsilon_i$;
    \item the beliefs $\varsigma_{im}$ for all players $m\neq i$ are consistent with the best response probabilities implied by the best response mapping $\delta_m$.
\end{enumerate}
\end{definition}

By Proposition 5 in \citetalias{Arcidiacono2016}, under Assumptions~1--5, a Markov perfect equilibrium exists. Moreover, under Assumptions~1--5 and assuming that $\lambda_{ik}$ is constant across all agents and states denoted by $\lambda$, \citetalias{Arcidiacono2016} established the following linear representation of the value function, where $V_i(\theta,\sigma)$ is a $K$-vector:
\begin{equation}
\label{eq:value}
V_i(\theta, \sigma)=\Big[(\rho+N\lambda)I-\lambda \sum_{m=1}^N \Sigma_m(\sigma_m) -Q_0\Big]^{-1}[u_i(\theta)+\lambda_i E_i(\theta, \sigma)],
\end{equation}
where $\Sigma_m(\sigma_m)$ is the $K\times K$ state transition matrix induced by the actions of player $m$ given the choice probabilities $\sigma_m$ and where $E_i(\theta, \sigma)$ is a $K \times 1$ vector where each element $k$ is the ex-ante expected value of the choice-specific payoff in state $k$, $\sum_j \sigma_{ijk}[\psi_{ijk}+e_{ijk}(\theta, \sigma)]$ where $e_{ijk}(\theta, \sigma)$ is the expected value of $\varepsilon_{ijk}$ given that choice $j$ is optimal, 
\begin{equation*}
\frac{1}{\sigma_{ijk}}\int \varepsilon_{ijk} \cdot 1\{\varepsilon_{ij'k}-\varepsilon_{ijk} \leq \psi_{ijk} - \psi_{ij'k}+V_{l(i,j,k)}(\theta, \sigma)-V_{l(i,j',k)}(\theta, \sigma) \ \forall j' \} f(\varepsilon_{ik}) d \varepsilon_{ik}.
\end{equation*}

We now have two important equations that forms a players' decision problem. 

\begin{enumerate}
\item Bellman optimality: We stack \eqref{eq:value} across players $i = 1, \dots, N$ and define the policy valuation operator $\Upsilon(\theta, \sigma)$:
\begin{equation}
\Upsilon (\theta, \sigma) \equiv V(\theta, \sigma)=\Big[(\rho+N\lambda)I-\lambda \sum_{m=1}^N \Sigma_m(\theta, \sigma_m) -Q_0\Big]^{-1}[u+\lambda E(\theta, \sigma)].
\end{equation}

\item Conditional choice probability
\begin{equation*}
  \Gamma(v) \equiv \sigma
\end{equation*}
where $\sigma$ is a $NJK \times 1$ vector with $\sigma_{ijk}=\Pr[\delta_i(k, \varepsilon)=j|k]$. 
Assuming $\varepsilon$ follows i.i.d. $TIEV(0,1)$, 
\begin{equation}\label{eq:ccpexp}
\sigma_{ijk} = \frac{\exp(\psi_{ijk}+V_{l(i,j,k)})}{\sum_{j'}\exp(\psi_{ij'k}+V_{l(i,j',k)})}
\end{equation}

\end{enumerate}

We now introduce a policy iteration operator $\Psi$ in the space of conditional choice probabilities.  Substituting (1) into (2),  
\begin{equation}
\label{eq:fixed}
\sigma = \Psi(\theta, \sigma) \equiv \Gamma(\Upsilon(\theta, \sigma))
\end{equation}
where $\Upsilon$ is a policy valuation operator that maps an $NJK\times 1$ vector of conditional choice probabilities into an $NK\times 1$ vector in value function space. $\Gamma$ is a policy improvement operator that maps an $NK \times 1$ vector in value function space into $NJK\times 1$ vector of conditional choice probabilities. 

\subsection{Sequential vs. Simultaneous State Changes}
\label{sec:sequential}

The main fundamental difference between the continuous-time model we consider and traditional discrete-time models is not in the way time is measured or how payoffs are discounted and accrued, but rather that state changes occur in a stochastic, sequential manner.
This leads to both a reduction in computational complexity and a dampening in strategic interaction.
This does not, however, mean that in general there is unique equilibrium or that all fixed points are stable, but rather it can lead to improvements in convergence.

In terms of computation, in the continuous time model there is a reduction in the number of future possible states to consider because at any point in time the state can only be affected by a single agent's choice or by nature.
Therefore, the number of future states is linear in the number of players.
On the other hand, in discrete time all agents and nature move simultaneously and the number of possible future states increases exponentially in the number of players.
This leads to a relatively dense transition matrix over a discrete unit of time as opposed to a relatively sparse intensity matrix in continuous time.

To give a simple example, suppose there are $N$ players and each player can choose from $J$ actions.
In the continuous-time model, there are only $N \times J$ possible future states at a given instant (i.e., which player moves next and which choice do they make).
In discrete-time with simultaneous moves, there are $J^N$ possible future states to consider.
The difference will increase dramatically as either $N$ or $J$ get larger.
So with $N = 5$ players and $J=2$ as in our application, the difference is already three-fold: $5 \times 2 = 10$ continuation states in continuous time versus $2^5 = 32$ in discrete time.

The two main differences with our model and a typical discrete time model are the way payoffs are accrued as flows over time and the sequential nature of moves.
The latter, however, is the key distinction.
To show this, we briefly consider a simultaneous move model in continuous time and show that it is equivalent to a transformation of a discrete time model.
The Bellman equation in this model can be expressed as follows:
\begin{equation*}
V_{ik} = \mathrm{E}\left[ \int_0^\tau e^{-\rho t} u_{ik}\,dt + e^{-\rho \tau} \max_{j_i} \left\{ \psi_{i j_i k} + \varepsilon_{i j_i k} + \sum_{j_{-i}} \varsigma_{ik}(j_{-i}) V_{i,l(j_1,\cdots, j_N, k)} \right\} \middle\vert k \right]
\end{equation*}
where
$\tau$ is the random holding time until the next simultaneous move arrival,
$j_i$ is firm $i$'s choice when the next move occurs,
$\varsigma_{ik}(j_{-i})$ denotes player $i$'s belief that rival players will choose actions
$j_{-i} = (j_1, \dots, j_{i-1}, j_{i+1}, \dots, j_N)$ resulting in the continuation state
$k' = l(j_{1},\dots,j_{N}, k)$.
This expression is from the perspective of an arbitrary point in time which is not necessarily a move arrival time.
The first term inside the expectation is the flow utility accrued until the random next move increment $\tau$ and the second term is the discounted expected continuation value received when choosing the next action, given beliefs about rival actions. As in discrete time (and contrary to our continuous time model) there are no state changes between decision times.

To see that this model can be thought of as transformation of a standard discrete time model, define a transformed discount factor
$\beta = \mathrm{E}\left[ e^{-\rho \tau} \right]$,
where we assume the distribution of the holding time $\tau$ is independent of $k$,
and a transformed choice-specific utility function
$\tilde{u}_{ijk}(\varepsilon_{i}) = \psi_{ijk} + \mathrm{E}\left[ \int_0^\tau e^{-\rho t} u_{ik'}\,dt \mid k, j_i = j \right]$.
Then, we can restate the Bellman from the perspective of a decision time, as is usually done in discrete time.
Given choice-specific errors $\varepsilon_{ik}$ for player $i$, which are observed at a decision time, we have:
\begin{align*}
  W_{ik}(\varepsilon_{ik})
  &= \max_j \left\lbrace \psi_{ijk} + \varepsilon_{ijk} + \mathrm{E}\left[ V_{ik'} \mid k, j_i = j \right] \right\rbrace \\
  &= \max_j \left\lbrace \tilde{u}_{ijk} + \varepsilon_{ijk} + \beta \mathrm{E}\left[ W_{ik'}(\varepsilon_{ik'}) \mid k, j_i = j \right] \right\rbrace.
\end{align*}
The representation on the last line is in the form of a standard dynamic discrete choice game in discrete time.
Therefore, the key difference between the continuous-time model discussed in the paper and a standard discrete-time model is the stochastic, sequential nature of moves rather than the rather than the way time is measured, or that payoffs are accrued as flows.

One could also specify a sequential-move model in discrete time, but in that case one would need to also specify the order of moves, the number of moves by each player during a period, and a fixed time period between moves.
Some advantages of the current continuous time specification are the mathematical tractability in terms of a Markov jump process, the fact that the order of moves is stochastic and fully flexible (e.g., allowing multiple actions by a single player per ``period''), and that the times between actions are stochastic rather than fixed.

\section{The CTNPL Estimator}

\subsection{CTNPL Algorithm}

Before describing the CTNPL algorithm, we first introduce two more assumptions for identification and estimation with continuous time data observed only at fixed time intervals (Assumptions 6, 7 in ABBE, 2016; Assumptions 8, 9 in Blevins, 2016).

\begin{assumption}
The mapping $Q \to \{Q_0, Q_1, \ldots, Q_N\}$ is known. 
\end{assumption}

\begin{assumption}
  (i) In every market $m=1, \ldots, M$, every agent expects a single equilibrium to be played which results an intensity matrix $Q$.
  (ii) The distribution of state transition at any time $t \in [0, \infty)$ is consistent with the intensity matrix $P(\Delta) = \exp(\Delta Q)$. 
\end{assumption}

For continuous time models, we consider two types of data. The first type is continuous time data which is observed every instant. Stock prices can be an example close to this type. Second, some data generated by a continuous time model can be only observed at discrete times. For example, population can change any time during the year but annual population data may only be observed once per year. We show how to construct log likelihood functions for estimation using both types of data.
We assume throughout that we have independent and identically distributed data across markets and that the Markov process is irreducible.

\textbf{Fully observed continuous time data.}
Consider a dataset of observations in market $m$ and $n$-th move, $\{k_{mn}, t_{mn}: m=1, \ldots, M, n= 1, \ldots, T_m\}$ sampled in interval $[0, \bar{T}]$. State $k_{mn}$ denotes the state immediately before state change at time $t_{mn}$ and time $t_{mn}$ is the time of $n$-th state change in market $m$.

We follow the derivation in Section 6.1. in \citetalias{Arcidiacono2016} to construct the likelihood function. Let $h$ denote a $K(K-1)+NJK$ vector of hazard rates for state change:
\begin{equation*}
h = (q_{12}, q_{13}, \ldots, q_{K-1, K}, \lambda_{11}\sigma_{111}, \ldots, \lambda_{11}\sigma_{1Jk}, \ldots, \lambda_{N1}\sigma_{N11}, \ldots, \lambda_{NK}\sigma_{NJK}).
\end{equation*}

The probability that there is a state change within $\tau$ units of time is:
\begin{equation*}
\left(\sum_{l \neq k} q_{kl} + \sum_i \lambda_{ik}\sum_{j \neq 0}\sigma_{ijk}\right)\exp\left[-\tau\left(\sum_{l \neq k} q_{kl} + \sum_i \lambda_{ik}\sum_{j \neq 0}\sigma_{ijk}\right)\right].
\end{equation*}
This is because state changes due to nature occur in state $k$ according to Poisson processes with rate parameters $q_{kl}$, for each state $l \neq k$, and each player $i$ takes some action according to a Poisson process with rate parameter $\lambda_{ik}$.

When there is a state change from state $k$, conditional probability that the change is due to agent $i$'s action $j$ is
\begin{equation*}
\frac{\lambda_{ik}\sigma_{ijk}}{\sum_{l \neq k}q_{kl} + \sum_i \lambda_{ik}\sum_{j \neq 0}\sigma_{ijk}}
\end{equation*}
and conditional probability that it is due to nature is 
\begin{equation*}
\frac{q_{kl}}{\sum_{l \neq k}q_{kl} + \sum_i \lambda_{ik}\sum_{j \neq 0}\sigma_{ijk}}.
\end{equation*}

Defining $g(\tau, k; h)=\exp\left[-\tau(\sum_{l \neq k} q_{kl} + \sum_i \lambda_{ik}\sum_{j \neq 0}\sigma_{ijk})\right]$, the likelihood of state change is
\begin{equation}
\label{eq:likelihoodchange}
\begin{cases} \lambda_{ik}\sigma_{ijk}g(\tau, k; h) & \text{if agent $i$ chooses action $j$}, \\
q_{kl}g(\tau, k; h) & \text{if nature moves}.
\end{cases}
\end{equation}

Let $\tau_{m,T+1}$ be the time interval between the last state change at time $T$ and $\bar{T}$, and $k_{m,T+1}\equiv k_{m,T}$. We introduce an indicator function $I_{mn}(i,j)$ which is 1 when $n$-th state change at market $m$ is induced by agent $i$'s action $j$ and $0$ otherwise. We also define $I_{mn}(0,l)$ as 1 when $n$-th state change at market $m$ is by nature's move to state $l$ and $0$ otherwise.

Combining the results above, the likelihood function becomes
\begin{multline}
    L(\theta, h) = \frac{1}{M}\sum_{m=1}^M \left[\sum_{n=1}^T \left\{\ln g(\tau, k; h) + \sum_{l \neq k_{mn}}I_{mn}(0,l)\ln q_{k_{mn},l} + \sum_i\sum_{j \neq 0}I_{mn}(i,j)\ln (\lambda_{imk}\sigma_{ijk}) \right\} \right. \\ \left.
    + \ln g(\tau_{m,T+1}, k_{m,T+1}; h)\right].
\end{multline}

The first term inside the summation is the common term in \eqref{eq:likelihoodchange} and the second term is from the likelihood of state changes due to nature. The third term is considering agents' moves and the last term is the likelihood of observing no state change after time $t_{m,T}$.

Since state changes due to nature are exogenous, the log-likelihood function of the model can be decomposed into terms involving the conditional choice probabilities of players and nature's transition probabilities. Consistent estimates for elements of $Q_0$, $q=(q_{12}, \ldots, q_{K-1,K})$, can be obtained from transition data without having to solve the dynamic game. Therefore, we assume in the remainder of the paper that $q$ is known and focus on the estimation of $\theta$ and $\sigma=(\sigma_{111}, \ldots, \sigma_{NJK})$. Then, the likelihood function becomes
\begin{equation*}
L_M(\theta, \sigma) = \frac{1}{M}\sum_{m=1}^M \left[\sum_{n=1}^T \left\{\ln g(\tau, k; \sigma) + \sum_i\sum_{j \neq 0}I_{mn}(i,j)\ln (\lambda_{imk}\sigma_{ijk}) \right\} + \ln g(\tau_{m,T+1}, k_{m,T+1}; \sigma)\right].
\end{equation*}

\textbf{Discretely observed data.} Consider a dataset $\{k_{mn}: m=1, \ldots, M, n=0, \ldots, T\}$ consisting of observations that are sampled at times on the lattice $\{n\Delta:n=0, \ldots, T\}$. As before, we assume that $q$ is known. Given $q$ and $\sigma$, we can construct a transition matrix. Define the pseudo likelihood function 
\begin{equation}\label{eq:likelihooddis} L_M(\theta, \sigma)=\frac{1}{M}\sum_{m=1}^M \sum_{n=1}^T \ln P_{k_{m,n-1},k_{mn}}(\Delta; \Psi(\theta, \sigma))\end{equation}
where $P_{k,l}(\Delta;\Psi(\theta, \sigma))$ denotes the $(k,l)$ element of the transition matrix induced by $\sigma=\Psi(\theta, \sigma)$. 

We calculate $P(\Delta)$ using matrix exponential as in \citetalias{Arcidiacono2016}. First, assume that we have a continuous time data observed only at discrete times with unit interval, $\Delta=1$. Let $Z(\theta, \sigma)^r$ denote the transition matrix from some state $k$ to another state $k'$ after an unit interval of time $\Delta=1$ in $r$ steps. Then, let $a_{m,n}$ denote an indicator vector for an observation $n$ in market $m$ which is one in position $k_{m,n}$ and zero elsewhere. We can write the likelihood function as below:
\begin{equation}
  L_M(\theta, \sigma)=\frac{1}{M}\sum_{m=1}^M \sum_{n=1}^T \ln\left[ \sum_{r=0}^\infty \frac{\lambda^r \exp(-\lambda)}{r!}a_{m,n}' Z(\theta, \sigma)^r a_{m,n+1}\right].
\end{equation}
Then, using the definition of matrix exponential, we can rewrite the previous equation as:
\begin{equation}
L_M(\theta, \sigma)=\frac{1}{M}\sum_{m=1}^M \sum_{n=1}^T \ln \left[a_{m,n}' \exp\left(Q(\theta, \sigma)\right) a_{m,n+1}\right],
\end{equation}
where $Q$ is the aggregate intensity matrix for the model.

\textbf{CTNPL algorithm.} We adapt the NPL algorithm to obtain the CTNPL estimator in the present continuous time model.
Let $\hat{\sigma}^0$ be an initial estimate or guess of the vector of players' choice probabilities. Given $\hat{\sigma}^0$, for $\npliter \geq 1$, perform the following steps:
\begin{enumerate}
\item Given $\hat{\sigma}^{\npliter-1}$, update $\hat{\theta}$ by 
\begin{equation}\label{eq:npl1} \hat{\theta}^{\npliter} = \argmax_{\theta \in \Theta} L_M(\theta, \hat{\sigma}^{\npliter-1})\end{equation}
\item Update $\hat{\sigma}^\npliter$ using the equilibrium condition, i.e.
\begin{equation}\label{eq:npl2}\hat{\sigma}^{\npliter} = \Psi(\hat{\theta}^{\npliter}, \hat{\sigma}^{\npliter-1}) \end{equation}
\end{enumerate}
Iterate in $\npliter$ until convergence in $\sigma$ and $\theta$ is reached.
Since we usually have observe continuous time data at discrete times in practice, we focus on properties of the CTNPL estimator using the likelihood function in \eqref{eq:likelihooddis} in next section.

\textbf{Remark on the definition of the NPL estimator.} \cite{AM02} define $\hat{\theta}^\npliter$ as policy iteration (PI) estimator and the estimation procedure as the NPL algorithm.  In \cite{AM07}, they define the same estimator as $K$-stage estimator\footnote{They call it $K$-stage estimator as they use $K$ as a notation for the number of stages. They also define ($\hat{\theta}^1$, $\hat{\sigma}^1)$ as the Pseudo Maximum Likelihood (PML) estimator.} and define the NPL estimator as the limit of the sequence $\{\hat{\theta}^\npliter, \hat{\sigma}^\npliter\}_{\npliter=1}^\Npliter$ with certain conditions. Specifically, if the sequence $\{\hat{\theta}^\npliter, \hat{\sigma}^\npliter\}_{\npliter=1}^\Npliter$ converges, its limit $(\tilde{\theta}, \tilde{\sigma})$, which is called NPL fixed point, satisfies two conditions:
\[\tilde{\theta} = \argmax_{\theta \in \Theta} L(\theta, \tilde{\sigma}) \quad \text{and} \quad \tilde{\sigma}=\Psi(\tilde{\theta}, \tilde{\sigma}).\]
It is possible that there exist multiple NPL fixed points, so they choose the one with highest likelihood and define it as the NPL estimator, $(\hat{\theta}_{\text{NPL}}, \hat{\sigma}_{\text{NPL}})$. Subsequent studies use the definition of NPL estimator in \cite{AM07}, so we follow the same definition \citep{PS10, KS12}. Henceforth, we define $(\hat{\theta}^\npliter, \hat{\sigma}^\npliter)$ as the PI estimator and $(\hat{\theta}, \hat{\sigma})=(\hat{\theta}_{\text{CTNPL}}, \hat{\sigma}_{\text{CTNPL}})$ as the CTNPL estimator, dropping the subscript for simplicity. 

\subsection{Large Sample Properties}

In this section, we consider the large sample properties of the CTNPL estimator. We start with the CTNPL estimator in the case of single agent continuous time models, then we consider both the PI estimator (for a finite number of iterations $\npliter$) and the CTNPL estimator continuous time dynamic games.

\subsubsection{Single Agent Dynamic Discrete Choice Models}

The asymptotic properties of the CTNPL estimator for single agent models depend on a zero Jacobian property for $\Psi(\theta, \sigma)$.  Specifically, we show that the Jacobian matrix
$\Psi_\theta \equiv \nabla_{\theta}\Psi(\theta^*, \sigma^*)$ is equal to the zero matrix in agent models.  This result is presented formally in Proposition~\ref{prop:zero} immediately below, and we then show that it facilitates adaptive two-step estimation in Proposition~\ref{prop:single}, where we show the estimator is efficient.  The same zero Jacobian property will be used later, in Section~\ref{sec:convergence}, where we show that it also ensures local stability of the estimator.
All proofs are provided in Appendix~\ref{sec:proofs}.

\begin{proposition}\label{prop:zero}
In a single agent model, the Jacobian matrix $\Psi_\sigma$ is zero at the fixed point $(\theta^*,\sigma^*)$.
\end{proposition}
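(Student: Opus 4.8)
The plan is to reduce to the single-agent case and exploit the chain-rule factorization $\Psi_\sigma = (\nabla_V \Gamma)(V^*)\cdot \nabla_\sigma \Upsilon(\theta^*,\sigma^*)$, and then show that the policy-valuation Jacobian $\nabla_\sigma \Upsilon$ annihilates every simplex-feasible perturbation of the choice probabilities at the fixed point. With $N=1$ there are no rival beliefs, so the representation \eqref{eq:value} collapses to a linear system $V(\sigma) = (\rho I - \mathcal{L}(\sigma))^{-1} r(\sigma)$, where $\mathcal{L}(\sigma) = Q_0 + \lambda(\Sigma(\sigma) - I)$ is the generator induced by nature and the agent's own policy and $r(\sigma)_k = u_k + \lambda \sum_j \sigma_{jk}[\psi_{jk} + \gamma - \ln \sigma_{jk}]$ collects the flow payoff and the expected choice-specific payoff, using the logit inversion $e_{jk} = \gamma - \ln \sigma_{jk}$ (with $\gamma$ the Euler constant). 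The guiding intuition is an envelope argument: since $\sigma^*$ is the entropy-regularized optimal policy, the value of following the stationary policy $\sigma$ is stationary at $\sigma^*$ along feasible changes of the policy, so $\nabla_\sigma \Upsilon$ ought to vanish on the tangent space of the simplex.

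Next I would make this precise by implicitly differentiating $(\rho I - \mathcal{L}(\sigma)) V = r(\sigma)$ with respect to a single coordinate $\sigma_{jk'}$. Because the distinct-actions property (Assumption~\ref{a:distinct}(b)) makes $j \mapsto l(j,k')$ injective, the matrix $\Sigma(\sigma)$ depends on $\sigma_{jk'}$ through a single entry (row $k'$, column $l(j,k')$), and $r$ depends on $\sigma_{jk'}$ only in component $k'$; all contributions therefore collapse onto the $k'$th coordinate, giving
\[
\frac{\partial V}{\partial \sigma_{jk'}} = \lambda\big[\psi_{jk'} + V_{l(j,k')} + \gamma - \ln \sigma_{jk'} - 1\big]\,\big[(\rho I - \mathcal{L}(\sigma))^{-1}\big]_{\cdot,\,k'},
\]
that is, a scalar factor times the $k'$th column of the resolvent, which is well defined since $\rho > 0$ and $\mathcal{L}$ is a conservative generator with bounded entries (Assumption~\ref{a:bounded}).

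The crux is the cancellation at the fixed point. Evaluating at $\sigma^* = \Gamma(V^*)$ and substituting the logit inversion $\ln \sigma^*_{jk'} = \psi_{jk'} + V^*_{l(j,k')} - \ln \sum_{j''} \exp(\psi_{j''k'} + V^*_{l(j'',k')})$ makes the bracketed scalar collapse to $\gamma - 1 + \ln \sum_{j''} \exp(\psi_{j''k'} + V^*_{l(j'',k')})$, which no longer depends on $j$; the check for $j=0$ uses $\psi_{0k'} = 0$ and $l(0,k') = k'$ from Assumption~\ref{a:distinct}(a). Hence $\partial V/\partial \sigma_{jk'}$ is identical across all actions $j$ in state $k'$. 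Since feasible perturbations satisfy $\sum_j \delta\sigma_{jk'} = 0$ (equivalently, the total derivative with respect to a free probability subtracts the $j=0$ partial, and the two partials coincide), this common column is annihilated, so $\nabla_\sigma \Upsilon(\theta^*,\sigma^*) = 0$ on the simplex tangent space and therefore $\Psi_\sigma = (\nabla_V \Gamma)(V^*)\cdot 0 = 0$.

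The main obstacle, and the step deserving the most care, is the correct bookkeeping of the adding-up constraint: the unconstrained partials $\partial V/\partial \sigma_{jk'}$ are not individually zero but only constant across $j$, so the conclusion genuinely relies on restricting to simplex-preserving directions (or on parametrizing by the $J-1$ free probabilities). Two supporting facts must be verified: interiority $\sigma^*_{jk} \in (0,1)$, guaranteed by the full-support assumption on $\varepsilon$ (Assumption~\ref{a:private}), so that $\ln \sigma^*_{jk}$ and the inversion are valid with no boundary cases; and invertibility of $\rho I - \mathcal{L}$. I would also stress that the argument is special to $N=1$: with rivals, perturbing $\sigma_{mjk}$ for $m \neq i$ alters $\mathcal{L}$ and the continuation term through transitions for which player $i$'s own optimality condition supplies no offsetting cancellation, so the scalar factor is no longer constant across actions and $\Psi_\sigma$ need not vanish.
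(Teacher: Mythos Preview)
Your argument is correct and follows essentially the same route as the paper: factor $\Psi_\sigma$ through the chain rule and show that $\nabla_\sigma \Upsilon(\theta^*,\sigma^*)$ vanishes at the fixed point by an envelope-type cancellation, so that the scalar multiplying the resolvent column becomes action-independent and is killed by the simplex constraint. The paper's version differs only in generality: rather than plugging in the logit closed forms $e_{jk} = \gamma - \ln\sigma_{jk}$ and the explicit CCP expression, it carries out the same computation for an arbitrary error distribution satisfying Assumption~\ref{a:private} via the Hotz--Miller inversion $\tilde H_k^{-1}$, proving two preliminary lemmas (the derivative of $\sum_j \sigma_{jk} e_{jk}$ equals $-\tilde H_k^{-1}(\sigma_k)$, and $\partial G_k/\partial\sigma_k = [\cdot]^{-1}\lambda[\tilde v_k - \tilde H_k^{-1}(\sigma_k)]$) before observing that $\tilde v_k = \tilde H_k^{-1}(\sigma^*_k)$ at the fixed point. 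Your direct logit calculation is the special case of this and is more transparent, but as written it establishes the proposition only under the Type~I extreme value specification, whereas the paper's statement and proof hold under the general Assumption~\ref{a:private}.
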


Next, Proposition~\ref{prop:single} shows that, under certain regularity conditions, the NPL estimator in a continuous time single agent discrete choice model is consistent, asymptotically normal, and efficient.
\begin{proposition}
\label{prop:single}
Let $\Theta$ denote the parameter space, let $\sigma^*$ denote the true CCPs, and define $\beliefspace \equiv \underbrace{\Delta^{J} \times \dots \times \Delta^{J}}_{K}$ to be the set of possible values of $\sigma$, where $\Delta^J$ denotes the standard $J$-simplex. Consider the following regularity conditions:

\begin{enumerate}[(a)]
\item $\Theta$ is a compact set.

\item $\Psi(\theta, \sigma)$ is continuous and twice continuously differentiable in $\theta$ and $\sigma$. 

\item Let $\Psi_{jk}(\theta, \sigma)$ be the corresponding element of $\Psi(\theta, \sigma)$ for some choice $j$ and state $k$. Then, $\Psi_{jk}(\theta, \sigma)>0$ for any $j=0, 1, \ldots, J-1$, $k = 1, \ldots, K$, and any $(\theta, \sigma) \in \Theta \times \beliefspace$.

\item There is a unique $\theta^* \in \interior(\Theta)$ such that $\sigma^* = \Psi(\theta^*, \sigma^*)$.

\item  $\hat{\sigma}^0$ is a strongly consistent estimator of $\sigma^*$ such that 
\begin{equation*}
 [ \sqrt{M} \nabla_\theta L_M(\theta^*, \sigma^*); \sqrt{M}(\hat{\sigma}^0-\sigma^*)'\Big]'\xrightarrow{\text{d}} \Normal(0, \Omega)
\end{equation*}
\end{enumerate}
Then, 
\begin{equation}
\label{eq:singlenormal}
\sqrt{M}(\hat{\theta}^{\npliter}-\theta^*)\xrightarrow{\text{d}}\Normal(0,\Omega_{\theta \theta'}^{-1})
\end{equation}
where $\Omega_{\theta \theta'}=\E[\nabla_\theta s_m \nabla_{\theta'}s_m]$ and $s_m$ is defined as
\[s_m \equiv \sum_{n=1}^T \ln P_{k_{m,n-1}k_{mn}}\big(\Delta; \Psi(\theta^*, \sigma^*)\big). \]
\end{proposition}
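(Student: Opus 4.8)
The plan is to treat $\hat\theta^{\npliter}$ as an M-estimator that maximizes the pseudo-likelihood $L_M(\theta,\hat\sigma^{\npliter-1})$ with the first-stage nuisance estimate $\hat\sigma^{\npliter-1}$ plugged in, and then to show that, because of the zero-Jacobian property $\Psi_\sigma=0$ established in Proposition~\ref{prop:zero}, the sampling error in this first stage is asymptotically irrelevant, so that $\hat\theta^{\npliter}$ attains the efficiency of the infeasible estimator that uses the true $\sigma^*$. I would proceed in three stages: consistency of the iterates, a mean-value expansion of the first-order condition, and the demonstration that the cross term carrying the first-stage error vanishes.

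For consistency I would argue by induction on $\npliter$. The base case is condition (e), which gives $\hat\sigma^0 \asto \sigma^*$. Given $\hat\sigma^{\npliter-1}\pto\sigma^*$, compactness of $\Theta$ (condition a), continuity of $L_M$ in $(\theta,\sigma)$ (conditions b and c, the latter ensuring the log is well defined since $\Psi_{jk}>0$), and a uniform law of large numbers give $L_M(\theta,\hat\sigma^{\npliter-1})\pto L_0(\theta,\sigma^*)$ uniformly in $\theta$; the unique-maximizer condition (d) then yields $\hat\theta^{\npliter}\pto\theta^*$ by the standard $\argmax$ argument. Applying the continuous map $\Psi$ gives $\hat\sigma^{\npliter}=\Psi(\hat\theta^{\npliter},\hat\sigma^{\npliter-1})\pto\Psi(\theta^*,\sigma^*)=\sigma^*$, closing the induction, and a first-order expansion of $\Psi$ propagates the $\sqrt{M}$-boundedness of condition (e) to show $\sqrt{M}(\hat\sigma^{\npliter-1}-\sigma^*)=O_p(1)$ for each finite $\npliter$.

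For asymptotic normality I would expand the first-order condition $\nabla_\theta L_M(\hat\theta^{\npliter},\hat\sigma^{\npliter-1})=0$ by the mean value theorem, first in $\theta$ and then the resulting score in $\sigma$, to obtain
\begin{equation*}
\sqrt{M}(\hat\theta^{\npliter}-\theta^*)=-\big[\nabla_{\theta\theta'}L_M\big]^{-1}\Big[\sqrt{M}\,\nabla_\theta L_M(\theta^*,\sigma^*)+\nabla_{\theta\sigma'}L_M(\theta^*,\bar\sigma)\,\sqrt{M}(\hat\sigma^{\npliter-1}-\sigma^*)\Big],
\end{equation*}
with the Hessian and cross-derivative evaluated at intermediate points. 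The crucial step is to show $\nabla_{\theta\sigma'}L_M(\theta^*,\sigma^*)\pto 0$. Writing each summand as $g(\phi)$ with $\phi=\Psi(\theta,\sigma)$ the CCP argument and $g$ the log transition probability, the chain rule decomposes this cross-derivative into (A) a term proportional to $\Psi_\sigma$, which is identically zero by Proposition~\ref{prop:zero}, and (B) a term proportional to the mixed second derivative $\nabla_\sigma\Psi_\theta$ multiplied by the score $\nabla_\phi g$; since $\Psi(\theta^*,\sigma^*)=\sigma^*$ delivers the true transition probabilities, this score belongs to a correctly specified likelihood and has mean zero, so term (B) has zero probability limit. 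Combined with the $O_p(1)$ bound on the first-stage error, the entire second bracketed term is $o_p(1)$.

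The main obstacle is precisely this cancellation, and I expect the work to lie in justifying that term (B) vanishes in probability limit rather than merely in expectation, i.e.\ invoking the mean-zero-score identity together with the twice-continuous differentiability of condition (b) and the consistency of the iterates so that the mean-value point $\bar\sigma$ may be replaced by $\sigma^*$. Once the cross term is negligible, what remains is $\sqrt{M}(\hat\theta^{\npliter}-\theta^*)=-[\nabla_{\theta\theta'}L_M]^{-1}\sqrt{M}\,\nabla_\theta L_M(\theta^*,\sigma^*)+o_p(1)$. The first block of the joint limit in condition (e) gives $\sqrt{M}\,\nabla_\theta L_M(\theta^*,\sigma^*)\dto\Normal(0,\Omega_{\theta\theta'})$, and the identical chain-rule-plus-mean-zero-score computation applied to the Hessian yields $\nabla_{\theta\theta'}L_M\pto-\Omega_{\theta\theta'}$ via the information matrix equality for the $\theta$-score $\nabla_\theta s_m=\Psi_\theta'\nabla_\phi g$. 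Slutsky's theorem then delivers $\sqrt{M}(\hat\theta^{\npliter}-\theta^*)\dto\Normal(0,\Omega_{\theta\theta'}^{-1})$, the efficient variance.
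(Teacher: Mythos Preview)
Your proposal is correct and follows essentially the same approach as the paper: consistency by induction via a uniform LLN and argmax argument, then a mean-value expansion of the first-order condition in which the first-stage contribution vanishes because $\Psi_\sigma=0$ forces $\Omega_{\theta\sigma'}=0$. The only cosmetic difference is that the paper observes $\nabla_\sigma s_m = \Psi_\sigma'\,\nabla_\phi g = 0$ directly and then invokes the information matrix equality to get $\nabla_{\theta\sigma'}L_M\pto 0$, whereas you expand the cross-Hessian by the chain rule into your terms (A) and (B); these are equivalent computations, and your induction on the $\sigma$-iterates (carrying only $O_p(1)$ boundedness rather than the paper's full joint normality at each step) is a harmless streamlining.
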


Conditions (a), (b), and (c) are usual regularity conditions, while (d) is needed for identification. Condition (e) requires that the initial CCP estimates are consistent. As can be seen from the limiting distribution in \eqref{eq:singlenormal}, the asymptotic variance is not affected by the first-step estimation of the CCPs. In other words, in a single agent model there is no efficiency loss from two-step estimation, as was established in discrete time models by \cite{AM02}. This is due to the so-called zero Jacobian property, which we show also holds in continuous-time models below in Proposition~\ref{prop:zero}, where we also relate the property to convergence of the algorithm.

\subsubsection{Dynamic Discrete Games}

We now extend the model to a dynamic discrete game with $N$ players and present the large sample theory first for the PI estimator for $\npliter = 1, \dots, \Npliter$ in Proposition~\ref{prop:largegame} and then the CTNPL estimator in Proposition~\ref{prop:largegame2}.

\begin{proposition}
\label{prop:largegame}
Let $\sigma^*$ be the true set of conditional choice probabilities and let $\Theta$ and $\beliefspace \equiv \underbrace{\Delta^J \times \dots \times \Delta^J}_{NK}$ be the set of possible values of $\theta$ and $\sigma$, where $\Delta^J$ denotes the standard $J$-simplex. Consider the following regularity conditions:

(a) $\Theta$ is a compact set. 

(b) $\Psi(\theta, \sigma)$ is continuous and twice continuously differentiable in $\theta$ and $\sigma$. 

(c) $\Psi_{ijk}(\theta, \sigma)$ be the corresponding element of $\Psi(\theta, \sigma)$ for player $i$'s choice $j$ at state $k$. Then, $\Psi_{ijk}(\theta, \sigma)>0$ for any $i=1, \ldots, N$, $j=0, 1, \ldots, J-1$, $k=1, \ldots, K$, and any $\{\theta, \sigma\} \in \Theta \times \beliefspace$. 
    
(d) There is a unique $\theta^* \in \interior(\Theta)$ such that $\sigma^* = \Psi(\theta^*, \sigma^*)$.

(e) $\hat{\sigma}^0 = M^{-1} \sum_{m=1}^M r_m$ is a strongly consistent nonparametric estimator of $\sigma^*$ such that $\sqrt{M}(\hat{\sigma}^0 - \sigma^*) \xrightarrow{d} \Normal(0, \Sigma^0)$.

 Then, for all $\npliter \leq \Npliter$, 
\[\sqrt{M}(\hat{\theta}^{\npliter} - \theta^*) \xrightarrow{d} \Normal\left(0, \Omega_{\theta \theta'}^{-1} + \Omega_{\theta\theta'}^{-1}\Omega_{\theta \sigma'}\Sigma^{\npliter-1}\Omega_{\theta \sigma'}'\Omega_{\theta \theta'}^{-1} \right) \]
where $s_m = \sum_{n=1}^T P_{k_{m,n-1},k_{m,n}}(\Delta; \Psi(\theta, \sigma))$,
$\Omega_{\theta\theta'} = \E[\nabla_\theta s_m \nabla_{\theta'}s_m]$, $\Omega_{\theta \sigma'} = \E[\nabla_{\theta}s_m\nabla_{\sigma'}s_m]$, and $\Sigma^\npliter = \nabla_\theta\Psi\left(\Omega_{\theta\theta'}^{-1} + \Omega_{\theta\theta'}^{-1}\Omega_{\theta\sigma'}\Sigma^{\npliter-1}\Omega_{\theta\sigma'}'\Omega_{\theta\theta'}^{-1}\right)\nabla_\theta\Psi' + \nabla_\sigma\Psi \Sigma^{\npliter-1}\nabla_\sigma\Psi'$ is the asymptotic variance of $\sqrt{M}(\hat{\sigma}^\npliter - \sigma^*)$ with $\Psi_\theta \equiv \nabla_\theta \Psi(\theta^*, \sigma^*)$ and $\Psi_\sigma \equiv \nabla_\sigma \Psi(\theta^*, \sigma^*)$ denoting Jacobian matrices.
\end{proposition}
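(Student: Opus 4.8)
The plan is to follow the template of \citet{AM07} for the $K$-stage estimator, adapting each step to the continuous-time pseudo-likelihood in \eqref{eq:likelihooddis}. The argument rests on three ingredients: (i) consistency of $(\hat{\theta}^{\npliter}, \hat{\sigma}^{\npliter})$ for each fixed $\npliter$, by induction; (ii) a first-order influence-function expansion of the two updating steps \eqref{eq:npl1}--\eqref{eq:npl2}; and (iii) propagation of these expansions through the iteration, with the variance obtained by induction on $\npliter$ starting from the base case $\Sigma^0$ supplied by condition (e).

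First I would establish consistency by induction. The base case $\hat{\sigma}^0 \pto \sigma^*$ is condition (e). Given $\hat{\sigma}^{\npliter-1}\pto\sigma^*$, uniform convergence of $L_M$ to its population limit on the compact set $\Theta\times\beliefspace$ (conditions (a)--(c), continuity of $\Psi$ and of the matrix exponential, plus a uniform law of large numbers under the i.i.d.-across-markets assumption) together with the identification condition (d) gives $\hat{\theta}^{\npliter}\pto\theta^*$ by the standard extremum-estimator argument; then $\hat{\sigma}^{\npliter}=\Psi(\hat{\theta}^{\npliter},\hat{\sigma}^{\npliter-1})\pto\Psi(\theta^*,\sigma^*)=\sigma^*$ by continuity. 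It is important here that $\theta^*$ uniquely maximizes the limiting pseudo-likelihood $L(\theta,\sigma^*)$, which holds because the transitions generating the data are exactly those implied by $\sigma^*=\Psi(\theta^*,\sigma^*)$.

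Next I would linearize the two steps around $(\theta^*,\sigma^*)$. From the first-order condition $\nabla_\theta L_M(\hat{\theta}^{\npliter},\hat{\sigma}^{\npliter-1})=0$, a two-term mean-value expansion (first in $\theta$, then in $\sigma$) yields
\begin{equation*}
\sqrt{M}(\hat{\theta}^{\npliter}-\theta^*)=\Omega_{\theta\theta'}^{-1}\left[\sqrt{M}\,\nabla_\theta L_M(\theta^*,\sigma^*)-\Omega_{\theta\sigma'}\sqrt{M}(\hat{\sigma}^{\npliter-1}-\sigma^*)\right]+o_p(1).
\end{equation*}
The crucial identities are the information-matrix equalities $\nabla_{\theta\theta'}L_M\pto-\Omega_{\theta\theta'}$ and $\nabla_{\theta\sigma'}L_M\pto-\Omega_{\theta\sigma'}$, which hold because at the fixed point the pseudo-likelihood coincides with the correctly specified likelihood (so $\E[\nabla_\theta s_m]=0$ and the joint expected Hessian in $(\theta,\sigma)$ equals the negative outer product of scores). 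A central limit theorem for the i.i.d.-across-markets score gives $\sqrt{M}\,\nabla_\theta L_M(\theta^*,\sigma^*)\dto\Normal(0,\Omega_{\theta\theta'})$. For the second step, expanding $\hat{\sigma}^{\npliter}=\Psi(\hat{\theta}^{\npliter},\hat{\sigma}^{\npliter-1})$ and using $\sigma^*=\Psi(\theta^*,\sigma^*)$ gives the recursion
\begin{equation*}
\sqrt{M}(\hat{\sigma}^{\npliter}-\sigma^*)=\Psi_\theta\,\sqrt{M}(\hat{\theta}^{\npliter}-\theta^*)+\Psi_\sigma\,\sqrt{M}(\hat{\sigma}^{\npliter-1}-\sigma^*)+o_p(1).
\end{equation*}
Substituting the first display into the second and taking variances produces the stated expressions for the variance of $\sqrt{M}(\hat{\theta}^{\npliter}-\theta^*)$ and for $\Sigma^{\npliter}$ by induction on $\npliter$.

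I expect the main obstacle to be the covariance bookkeeping that collapses everything to the clean, separable formula. The expansion writes $\sqrt{M}(\hat{\theta}^{\npliter}-\theta^*)$ as a linear combination of the current-stage pseudo-score and the propagated error $\sqrt{M}(\hat{\sigma}^{\npliter-1}-\sigma^*)$; forming the quadratic form generates cross-covariance terms between these two objects (and analogous cross-terms in the recursion for $\Sigma^{\npliter}$) that must be shown to be asymptotically negligible for the two-term expression $\Omega_{\theta\theta'}^{-1}+\Omega_{\theta\theta'}^{-1}\Omega_{\theta\sigma'}\Sigma^{\npliter-1}\Omega_{\theta\sigma'}'\Omega_{\theta\theta'}^{-1}$ to emerge. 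The clean way to organize this is to carry the \emph{joint} asymptotic distribution of $[\sqrt{M}\nabla_\theta L_M(\theta^*,\sigma^*);\,\sqrt{M}(\hat{\sigma}^0-\sigma^*)]$, as in the single-agent condition (e), and to verify the asymptotic orthogonality of the pseudo-score to the initial CCP sampling error so that the relevant off-diagonal blocks drop out; the remaining propagation through the $\Psi$ Jacobians $\Psi_\theta,\Psi_\sigma$ is then mechanical linear algebra. Establishing this orthogonality (or invoking it following \citet{AM07}) and confirming that no cross-terms survive at each iteration is the delicate part of the argument.
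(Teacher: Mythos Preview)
Your proposal is correct and follows essentially the same route as the paper: induction for consistency via a uniform law of large numbers and the extremum-estimator argument, a mean-value expansion of the first-order condition together with the information-matrix equalities $\nabla_{\theta\theta'}L_M\pto-\Omega_{\theta\theta'}$ and $\nabla_{\theta\sigma'}L_M\pto-\Omega_{\theta\sigma'}$, and then the delta-method expansion of $\hat{\sigma}^{\npliter}=\Psi(\hat{\theta}^{\npliter},\hat{\sigma}^{\npliter-1})$ to obtain the recursion for $\Sigma^{\npliter}$. The orthogonality you flag as the delicate point is exactly what the paper invokes: at the base step it uses the generalized information-matrix equality to conclude $\E[\nabla_\sigma r_m\,\nabla_{\theta'} s_m^0]=0$, and at the inductive step it recasts the two updates as a stacked moment system and applies the same equality to kill the cross term, so your plan to verify score--CCP orthogonality and then propagate via $\Psi_\theta,\Psi_\sigma$ is precisely the paper's argument.
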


The results above are for the PI estimator $\hat{\theta}^\npliter$ for $\npliter =1, \ldots, \Npliter$. \cite{AM07} provide large sample properties for the NPL estimator (i.e., at convergence, if the algorithm converges). We now present similar properties for the CTNPL estimator $\hat{\theta}$ in our continuous time model. A local convergence condition for $\hat{\theta}^\npliter$ will be provided in the next section.

Proposition~\ref{prop:largegame2} below establishes the large sample properties for $(\hat{\theta}, \hat{\sigma})$, but we first introduce a few definitions and notation from \cite{AM07}.
Let $\phi_M(\cdot)$ denote the NPL operator used to represent an iteration of the algorithm:
\[\phi_M(\sigma) = \Psi(\tilde{\theta}_M(\sigma), \sigma) \quad \text{where} \quad \tilde{\theta}_M(\sigma) \equiv \argmax_{\theta \in \Theta} L_M(\theta, \sigma).\]

Let $\Upsilon_M$ be the set of fixed points of $\phi$, $\Upsilon_M \equiv \{(\theta, \sigma) \in \Theta \times [0,1]^{NJK}:\theta = \tilde{\theta}(\sigma) \ \text{and} \ \sigma = \phi_M(\sigma)\}$. Then, choosing the fixed point that has the maximum value of the likelihood,
\[(\hat{\theta}, \hat{\sigma}) \equiv \argmax_{(\theta, \sigma) \in \Upsilon_M} L_M(\theta, \sigma).\]
We denote the population counterparts of $\tilde{\theta}_M(\sigma)$ and $\tilde{\phi}_M(\sigma)$ as $\tilde{\theta}^*(\sigma)\equiv \argmax_{\theta \in \Theta} L(\theta, \sigma)$ and $\phi^*(\sigma) = \Psi(\tilde{\theta}^*(\sigma),\sigma)$.

\begin{proposition}\label{prop:largegame2} Let $\sigma^*$ be the true set of conditional choice probabilities and let $\Theta$ and $\beliefspace \equiv [0,1]^{NJK}$ be the set of possible values of $\theta$ and $\sigma$, respectively. Consider the following regularity conditions:

(a) $\Theta$ is a compact set. 

(b) $\Psi(\theta, \sigma)$ is continuous and twice continuously differentiable in $\theta$ and $\sigma$. 

(c) $\Psi_{ijk}(\theta, \sigma)$ be the corresponding element of $\Psi(\theta, \sigma)$ for player $i$'s choice $j$ at state $k$. Then, $\Psi_{ijk}(\theta, \sigma)>0$ for any $i=1, \ldots, N$, $j=1, \ldots, J-1$, $k=1, \ldots, K$, and any $\{\theta, \sigma\} \in \Theta \times \beliefspace$. 
    
(d) There is a unique $\theta^* \in \interior(\Theta)$ such that $\sigma^* = \Psi(\theta^*, \sigma^*)$.

(e) $(\theta^*, \sigma^*)$ is an isolated population NPL fixed point. 

(f) There exists a closed neighborhood of $\sigma^*$, $\mathcal{N}(\sigma)$, such that, for all $\sigma \in \mathcal{N}(\sigma^*)$, $L(\theta, \sigma)$ is globally concave in $\theta$ and $\partial^2 L(\theta, \sigma^*)/\partial\theta \partial\theta'$ is a nonsingular matrix.

(g) The operator $\phi^*(\sigma)-\sigma$ has a nonsingular Jacobian matrix at $\sigma^*$.

Then,
\[\sqrt{M}(\hat{\theta} - \theta^*) \xrightarrow{d} \Normal\big(0, [\Omega_{\theta \theta'} + \Omega_{\theta\sigma'}(I-\Psi_\sigma)^{-1}\Psi_\theta]^{-1}\Omega_{\theta \theta'}[\Omega_{\theta \theta'} + \Psi_\theta'(I-\Psi'_\sigma)^{-1}\Omega_{\theta\sigma'}']^{-1} \big) \]
where $\Psi_\theta \equiv \nabla_\theta \Psi(\theta^*, \sigma^*)$ and $\Psi_\sigma \equiv \nabla_\sigma \Psi(\theta^*, \sigma^*)$ denote Jacobian matrices.
\end{proposition}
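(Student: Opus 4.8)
The plan is to adapt the large-sample argument for the NPL estimator in \cite{AM07} to our continuous-time pseudo-likelihood $L_M(\theta,\sigma)$ from \eqref{eq:likelihooddis}, treating the CTNPL estimator as the maximum-likelihood element $(\hat\theta,\hat\sigma)$ of the set $\Upsilon_M$ of sample NPL fixed points. The argument splits into a consistency stage and an asymptotic-linearization stage, with the continuous-time model entering only through the specific forms of $\Psi(\theta,\sigma)$ and of the matrix-exponential likelihood; the structure of the asymptotics is otherwise identical to the discrete-time case.

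For consistency, I would first establish that $\tilde\theta_M(\sigma)\to\tilde\theta^*(\sigma)$ and $\phi_M(\sigma)\to\phi^*(\sigma)$ uniformly on a neighborhood of $\sigma^*$, using compactness of $\Theta$ (condition (a)), the smoothness and strict positivity of $\Psi$ (conditions (b)--(c)), and a uniform law of large numbers for the i.i.d.\ across-markets pseudo-likelihood. Condition (f), which gives global concavity of $L(\cdot,\sigma)$ in $\theta$ with nonsingular Hessian near $\sigma^*$, makes $\tilde\theta^*(\sigma)$ well defined and continuous there, so $\phi^*$ is continuous. Identification (d) together with the isolated-fixed-point condition (e) then implies that the max-likelihood sample fixed point $\hat\sigma$ converges to $\sigma^*$ and hence $\hat\theta=\tilde\theta_M(\hat\sigma)\to\theta^*$, exactly as in the consistency argument of \cite{AM07}; the only continuous-time-specific content is checking uniform convergence and continuity for the likelihood built from $P(\Delta)=\exp(\Delta Q(\theta,\sigma))$.

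For the limiting distribution, I would use the two equations that pin down the estimator: the inner first-order condition $\nabla_\theta L_M(\hat\theta,\hat\sigma)=0$ (since $\hat\theta=\tilde\theta_M(\hat\sigma)$) and the fixed-point condition $\hat\sigma=\Psi(\hat\theta,\hat\sigma)$. A mean-value expansion of both around $(\theta^*,\sigma^*)$, combined with the information-matrix identities $\plim\nabla_{\theta\theta'}L_M=-\Omega_{\theta\theta'}$ and $\plim\nabla_{\theta\sigma'}L_M=-\Omega_{\theta\sigma'}$ (valid because the transition matrix is correctly specified at the truth), yields the linearized system
\[
\begin{pmatrix}\Omega_{\theta\theta'} & \Omega_{\theta\sigma'}\\ -\Psi_\theta & I-\Psi_\sigma\end{pmatrix}\begin{pmatrix}\sqrt{M}(\hat\theta-\theta^*)\\ \sqrt{M}(\hat\sigma-\sigma^*)\end{pmatrix}=\begin{pmatrix}\sqrt{M}\nabla_\theta L_M(\theta^*,\sigma^*)\\ 0\end{pmatrix}+o_p(1).
\]
The coefficient matrix is nonsingular precisely under (f) and (g): taking the Schur complement of the top-left block shows that its determinant equals $\det(\Omega_{\theta\theta'})$ times the determinant of the Jacobian of $\phi^*-\sigma$ at $\sigma^*$, both nonzero. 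Block elimination (using invertibility of $I-\Psi_\sigma$ at the regular fixed point) then gives $[\Omega_{\theta\theta'}+\Omega_{\theta\sigma'}(I-\Psi_\sigma)^{-1}\Psi_\theta]\sqrt{M}(\hat\theta-\theta^*)=\sqrt{M}\nabla_\theta L_M(\theta^*,\sigma^*)+o_p(1)$, and since the i.i.d.\ scores satisfy $\sqrt{M}\nabla_\theta L_M(\theta^*,\sigma^*)\xrightarrow{d}\Normal(0,\Omega_{\theta\theta'})$ by a central limit theorem, the stated sandwich variance follows.

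The main obstacle is the simultaneity introduced by the fixed-point constraint: unlike the PI estimator of Proposition~\ref{prop:largegame}, which is a finite composition of smooth maps applied to a consistent first stage, the CTNPL estimator solves $\hat\sigma=\phi_M(\hat\sigma)$, so I must linearize the coupled $(\theta,\sigma)$ system and argue that the $o_p(1)$ remainders are uniform in a shrinking neighborhood of $(\theta^*,\sigma^*)$, which is what conditions (b), (f), and (g) are there to provide. A secondary, continuous-time-specific obstacle is verifying the two information identities for the pseudo-likelihood based on $\exp(\Delta Q(\theta,\sigma))$; this reduces to showing that $\Psi$ and the matrix exponential are twice continuously differentiable in $(\theta,\sigma)$ (so scores and Hessians can be interchanged with expectations) and that $\Psi(\theta^*,\sigma^*)=\sigma^*$ reproduces the true transition matrix, after which the identities follow as in the discrete-time case.
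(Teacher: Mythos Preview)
Your proposal is correct and follows essentially the same approach as the paper: the paper's own proof explicitly defers to the argument in the Appendix of \cite{AM07}, noting that the CTNPL operator has the same required properties as the NPL operator and that the proof goes through after substituting $L(\theta,\sigma)$ for their $Q_0(\theta,P)$ and $\sigma$ for $P$. Your outline of the consistency step (via uniform convergence of $\tilde\theta_M$ and $\phi_M$ and the isolated-fixed-point argument) and the asymptotic-normality step (via joint linearization of the first-order and fixed-point conditions and block elimination) is exactly the \cite{AM07} scheme, so there is nothing to add beyond the continuous-time-specific regularity checks you already flag.
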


Assumptions from (a)--(d) are the same as Assumptions (a)--(d) in Proposition \ref{prop:largegame}.
Assumptions (e)--(g) correspond to Assumptions (v)--(vii) of Proposition 2 in \cite{AM07}. Note that consistency of $\hat{\sigma}^0$ is not required.

\subsection{Convergence of the CTNPL Algorithm}
\label{sec:convergence}

Originally introduced for discrete time models, the NPL algorithm is a sequential method to find the NPL estimator. This is why it is possible for the NPL algorithm to fail to converge to a consistent estimator even if the NPL estimator is consistent. Ideally, the NPL algorithm could converge to the consistent estimator if it can be evaluated at every fixed point, but this may be infeasible in practice.

Starting from \cite{PS10}, several papers have investigated the convergence of the NPL algorithm.
\cite{KS12} show that a key determinant of the convergence of the NPL algorithm is the contraction property of the mapping $\Psi$ and propose a local convergence condition for NPL estimator in discrete time.
\cite{Aguirregabiria2019} investigate this further and show that even when the population mapping is stable the algorithm can fail to converge when the sample NPL mapping is unstable.
We extend these arguments to the CTNPL estimator and derive a local convergence condition for continuous time models that is analogous to the condition of \cite{KS12}.
Although we see improved convergence in practice due to the sequential nature of moves, as discussed in Section~\ref{sec:sequential}, it is still possible to have multiple equilibria and unstable equilibria in the continous time setting.\footnote{We explore a simple $2 \times 2$ entry model with multiple equilibria in Appendix~\ref{sec:psd08}.}

For simplicity, we define $\Psi_\theta \equiv \nabla_{\theta}\Psi(\theta^*, \sigma^*)$ and $\Psi_\sigma \equiv \nabla_{\sigma}\Psi(\theta^*, \sigma^*)$. Let $P$ be an aggregate transition probability matrix between states where the elements $P_{kl}$ of $P$ are
\[P_{kl} = \begin{cases} \lambda_{ik}\sigma_{ijk} \quad \text{where} \ l(i,j,k) = l & \text{if state change from $k$ to $l$ is endogenous.} \\
[\exp(Q_0)]_{kl} & \text{if state change from $k$ to $l$ is exogenous.} \end{cases}.\]

Let $P^* = \vec P(\Delta; \Psi(\theta^*, \sigma^*))$ denote the $K^2 \times 1$ vectorized version of $P(\Delta; \Psi(\theta^*, \sigma^*))$.
Then, define a scalar $a_{ij,kl}$ as:
\[a_{ij,kl} = \begin{cases} 1 & \text{ if } l(i,j,k) = l \\
0 & \text{otherwise}.\end{cases}\]
We then construct a $NJK \times K^2$ matrix, denoted $A$, by stacking $a_{ij,kl}$ accordingly, and define $\Delta_\sigma=A \diag(P^{*})^{-1}A'$.
Then, we can rewrite $\Omega_{\theta \theta'} = \Psi_{\theta}^{\prime}\Delta_\sigma\Psi_{\theta}$ and $\Omega_{\theta \sigma'} = \Psi^{\prime}_{\theta}\Delta_\sigma\Psi_{\sigma}$. We also define $M_{\Psi_\theta}\equiv I-\Psi_{\theta}(\Psi_{\theta'}\Delta_\sigma \Psi_\theta)^{-1}\Psi_{\theta'} \Delta_\sigma$.


\begin{assumption}
\label{convergence assumption}
(a) The conditions of Proposition \ref{prop:largegame2} hold.
(b) $\Psi(\theta, \sigma)$ is three times continuously differentiable in a closed neighborhood $\mathcal{N}$ of $(\theta^*, \sigma^*)$.
(c) $\Omega_{\theta \theta'}$ is nonsingular.
\end{assumption}

We now present a local convergence condition for the CTNPL estimator $\hat{\sigma}^{\npliter}$. If $\hat{\sigma}^\npliter$ converges to $\hat{\sigma}$, $\hat{\theta}^\npliter$ converges to $\hat{\theta}$ by the continuity of $L(\theta, \sigma)$ and applying the Theorem of the Maximum in \eqref{eq:npl1}. So, it is sufficient to show the convergence for $\hat{\sigma}^\npliter$. The convergence condition below is analogous to Proposition 2 in \cite{KS12}.

\begin{proposition}\label{convergence}
Suppose that Assumption \ref{convergence assumption} holds and that $r(M_{\Psi_\theta}\Psi_\sigma)<1$. Then, there exists a neighborhood $\mathcal{N}_1$ of $\sigma^*$ such that, for any initial value $\hat{\sigma}^0 \in \mathcal{N}_1$, we have $\lim_{\Npliter \to \infty}\hat{\sigma}^{\npliter} = \hat{\sigma}$ almost surely. 
\end{proposition}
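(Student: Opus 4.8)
The plan is to recognize the NPL recursion \eqref{eq:npl1}--\eqref{eq:npl2} as a fixed-point iteration $\hat{\sigma}^{\npliter} = \phi_M(\hat{\sigma}^{\npliter-1})$ of the sample NPL operator $\phi_M$, whose fixed point is the CTNPL estimator $\hat{\sigma}$, and then to invoke a local point-of-attraction theorem (Ostrowski's theorem): if $\phi_M$ is continuously differentiable near $\hat{\sigma}$ and the spectral radius of its Jacobian there is below one, then $\hat{\sigma}$ attracts the iteration from every nearby start. The substance is therefore to (i) compute the Jacobian $\nabla_\sigma \phi^*(\sigma^*)$ of the population operator and show it equals $M_{\Psi_\theta}\Psi_\sigma$, and (ii) transfer the spectral-radius bound and the attraction neighborhood from the population operator to the sample operator almost surely.

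For step (i), I would differentiate $\phi_M(\sigma) = \Psi(\tilde{\theta}_M(\sigma), \sigma)$, giving $\nabla_\sigma \phi_M = \Psi_\theta \, \nabla_\sigma \tilde{\theta}_M + \Psi_\sigma$. Since $\tilde{\theta}_M(\sigma)$ solves the first-order condition $\nabla_\theta L_M(\tilde{\theta}_M(\sigma), \sigma) = 0$, which has a unique interior solution with nonsingular Hessian by condition (f) of Proposition~\ref{prop:largegame2}, the implicit function theorem yields $\nabla_\sigma \tilde{\theta}_M = -[\nabla_{\theta\theta'}L_M]^{-1}\nabla_{\theta\sigma'}L_M$. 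Evaluating at $(\theta^*, \sigma^*)$ and taking almost-sure limits, a uniform law of large numbers (justified by the smoothness in Assumption~\ref{convergence assumption}(b)) together with the information-type identities $\plim \nabla_{\theta\theta'}L_M = -\Omega_{\theta\theta'}$ and $\plim \nabla_{\theta\sigma'}L_M = -\Omega_{\theta\sigma'}$ gives $\nabla_\sigma \phi^*(\sigma^*) = -\Psi_\theta \Omega_{\theta\theta'}^{-1}\Omega_{\theta\sigma'} + \Psi_\sigma$. Substituting the rewrites $\Omega_{\theta\theta'} = \Psi_\theta' \Delta_\sigma \Psi_\theta$ and $\Omega_{\theta\sigma'} = \Psi_\theta' \Delta_\sigma \Psi_\sigma$ (using nonsingularity of $\Omega_{\theta\theta'}$ from Assumption~\ref{convergence assumption}(c)) collapses this to $[I - \Psi_\theta(\Psi_\theta'\Delta_\sigma\Psi_\theta)^{-1}\Psi_\theta'\Delta_\sigma]\Psi_\sigma = M_{\Psi_\theta}\Psi_\sigma$. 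Thus the hypothesis $\rho(M_{\Psi_\theta}\Psi_\sigma) < 1$ is precisely $\rho(\nabla_\sigma \phi^*(\sigma^*)) < 1$.

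For step (ii), I would use that $\hat{\sigma}$ is consistent, $\hat{\sigma} \to \sigma^*$ almost surely (an isolated population NPL fixed point by conditions (e),(g) of Proposition~\ref{prop:largegame2}), and that $\nabla_\sigma \phi_M(\cdot)$ converges uniformly and almost surely to $\nabla_\sigma \phi^*(\cdot)$ on a neighborhood of $\sigma^*$; this again leans on Assumption~\ref{convergence assumption}(b), which guarantees a uniform LLN for the first and second derivatives of $L_M$ and hence continuity of $\sigma \mapsto \nabla_\sigma \phi_M(\sigma)$. Since eigenvalues depend continuously on matrix entries, $\rho(\nabla_\sigma \phi_M(\hat{\sigma})) \to \rho(M_{\Psi_\theta}\Psi_\sigma) < 1$ almost surely, so for all large $M$ there is a $\kappa < 1$ with $\rho(\nabla_\sigma \phi_M(\hat{\sigma})) \le \kappa$. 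Choosing a vector norm whose induced operator norm of $\nabla_\sigma \phi_M(\hat{\sigma})$ is below one makes $\phi_M$ a local contraction around $\hat{\sigma}$, and the Ostrowski argument delivers a neighborhood $\mathcal{N}_1$ and almost-sure convergence $\lim_{\npliter \to \infty}\hat{\sigma}^{\npliter} = \hat{\sigma}$ for every $\hat{\sigma}^0 \in \mathcal{N}_1$, with $\hat{\theta}^{\npliter} \to \hat{\theta}$ following by the Theorem of the Maximum as noted before the proposition.

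The main obstacle I anticipate is the uniformity in $M$ in step (ii): the point-of-attraction argument produces, for each realization and each $M$, a neighborhood of $\hat{\sigma}$, but the claim requires a single neighborhood $\mathcal{N}_1$ of $\sigma^*$ that works for all large $M$ almost surely. Securing this requires that the contraction constant $\kappa$ and the radius on which $\|\nabla_\sigma \phi_M(\sigma) - \nabla_\sigma \phi_M(\hat{\sigma})\|$ stays small can both be chosen uniformly in $M$, which is where the three-times continuous differentiability of $\Psi$ in Assumption~\ref{convergence assumption}(b) does the real work: it bounds the second derivative of $\phi_M$ uniformly near $\sigma^*$ and lets me pass from pointwise almost-sure convergence of the Jacobians to an equicontinuous, uniformly contracting family, so that one fixed $\mathcal{N}_1$ suffices.
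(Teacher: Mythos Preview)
Your proposal is correct and follows essentially the same strategy as the paper, which in turn adapts \cite{KS12}: both identify the Jacobian of the NPL operator at the fixed point as $M_{\Psi_\theta}\Psi_\sigma$ via the implicit function theorem and the information-matrix identities, and then use the spectral-radius condition to obtain a local contraction. The only difference is packaging: where you invoke Ostrowski's theorem and argue uniform convergence of $\nabla_\sigma\phi_M$ to $\nabla_\sigma\phi^*$, the paper (following KS12) writes out the explicit second-order Taylor expansion $\hat{\sigma}^{\npliter}-\hat{\sigma}=M_{\Psi_\theta}\Psi_\sigma(\hat{\sigma}^{\npliter-1}-\hat{\sigma})+O(M^{-1/2}\|\hat{\sigma}^{\npliter-1}-\hat{\sigma}\|+\|\hat{\sigma}^{\npliter-1}-\hat{\sigma}\|^2)$ as a separate auxiliary proposition and then applies the Horn--Johnson operator-norm trick directly, which makes the uniformity-in-$M$ that you flagged as the main obstacle completely explicit in the remainder term.
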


In single agent models, the Jacobian matrix $\Psi_\sigma$ is zero at the fixed point $\sigma^* = \Psi(\theta^*, \sigma^*)$ by Proposition~\ref{prop:zero}, it follows that $r(M_\Psi \Psi_\sigma)$ is zero. Then, by Proposition~\ref{convergence}, the CTNPL estimator is stable and always converges locally to $(\hat{\theta}, \hat{\sigma})$ in single agent models.
Next, we will investigate many of the properties we have derived in the context of a simulation study.

\section{Monte Carlo Experiments}
\label{sec:mc}

In this section we present the results of a series of Monte Carlo experiments carried out using the model of Example 1 from Section 2, which is a continuous time version of the dynamic discrete game described in Section 4 of \cite{AM07}.
We also provide further implementation details of the CTNPL algorithm for this model in Appendix~\ref{app:detail}.

\subsection{Data Generating Process (DGP)}

We assume the profit functions of firms are:
\begin{equation*}
  u_{ik} =\theta_{\text{RS}}\ln(s_k) - \theta_{\text{RN}}\ln\left(1+\sum_{m\neq i}a_{mk}\right)- \theta_{FC,i}.
\end{equation*}
The parameters we estimate are the fixed costs $\theta_{FC,i}$, entry costs $\theta_{\text{EC}}$, competition effect $\theta_{\text{RN}}$, and market effect $\theta_{\text{RS}}$. Fixed costs $\theta_{FC,i}$ are set as $(\theta_{\text{FC,1}}, \theta_{\text{FC,2}}, \theta_{\text{FC,3}}, \theta_{\text{FC,4}}, \theta_{\text{FC,5}}) = (-1.9, -1.8, -1.7, -1.6, -1.5)$. The parameter $\theta_{\text{RS}}$ equals to 1. We run six experiments varying by entry costs $\theta_{\text{EC}} = (1.0, 2.0, 4.0)$ and $\theta_{\text{RN}}=(0.0, 1.0, 2.0)$. The stochastic component in instantaneous payoff $\epsilon_{ijk}$ are independently and identically distributed and follow Type 1 extreme value distribution.

The support of the logarithm of market size is $\{1,2,3,4,5\}$ and each player chooses an action from $a_{ik} = \{0,1\}$. So, the size of state space is $K = 5 \times 2^5 = 160$.
We assume that market size increases or decreases by at most 1 step at a single instant and that the rate is constant across states $k$. Let $q_1$ and $q_2$, respectively, denote the rate of market size increases and decreases, respectively. These two parameters fully characterize nature's intensity matrix $Q_0$. As mentioned above, we can obtain consistent estimates for $q_1$ and $q_2$, so for our experiments we assume that $Q_0$ is known and focus on estimating firms' payoff parameters. In particular, we choose the intensity matrix $Q_0$ so that the discrete-time transition probability matrix for market size is as in \cite{AM07}\footnote{The transition probability matrix used for market size $s_k$ in \cite{AM07} is
\[\begin{bmatrix} 0.8 & 0.2 & 0.0 & 0.0 & 0.0 \\ 0.2 & 0.6 & 0.2 & 0.0 & 0.0 \\ 0.0 & 0.2 & 0.6 & 0.2 & 0.0 \\ 0.0 & 0.0 & 0.2 & 0.6 & 0.2 \\ 0.0 & 0.0 & 0.0 & 0.2 & 0.8 \end{bmatrix}.\]}.
Throughout the experiments, we fix the discount rate at $\rho_{ik}=0.05$ and the move arrival rate at $\lambda_{ik}=1$ for all $i=1, \ldots, N$ and $k=1, \ldots, K$.

We assume in this section (unless specified otherwise), that the DGP is a continuous time model, however, we consider two distinct sampling regimes for estimation.
First, we consider continuous time data observed every instant.
Since the model is a finite-state jump process, it is equivalent to observe each jump time and the resulting state.
We refer to this as continuous time data.
To generate continuous time data, we calculate the MPE of the game, obtain the steady-state distribution, and draw the initial states $k_{m,0}$ for each market $m$.
From these initial states, we draw the subsequent actions and state $(a_{k_{m,n}}, x_{k_{m,n-1}})$ using the equilibrium conditional choice probabilities.
Second, we also consider discrete time data observed at fixed intervals of length $\Delta = 1$.
We refer to this as discrete time data, noting that the underlying DGP is a continuous time model.
In this case, initial states are also drawn using the steady-state distribution and then we draw the future states according to the transition probability matrix over the time interval, $P(\Delta)$.

For each experiment, we carry out 100 replications using a sample of $M = 400$ markets in each.
For simplicity, we focus on the case where there is one observed continuous-time event or discrete-time observation per market.
Table~\ref{table:dgpdis} summarizes the DGP in the case of discretely-observed data. The parameters for the first three experiments differ only by the strategic interaction parameter, $\theta_{\text{RN}}$, which is increasing from Experiment 1 to 3. Increasing $\theta_{\text{RN}}$ decreases the average number of active firms. 
In this setting, both the average number of entries and exits are increasing when $\theta_{\text{RN}}$ increases.
As excess turnover increases, there will be a larger change in the number of active firms between periods.
As a result, the AR(1) coefficient for the number of current active firms decreases.
As $\theta_{\text{RN}}$ decreases the flow payoff for all firms, firms are less likely to be active regardless of fixed costs.

In Experiments 4--6, $\theta_{\text{EC}}$ varies from 0 to 4. The number of active firms increases as $\theta_{\text{EC}}$ increases, which might be surprising, but this is because the probability of being active differs across firms. Specifically, more efficient firms tend to stay active as the entry cost increases.
This can be seen with Firms 4 and 5, which have lower fixed costs.

\begin{table}[htbp]
\centering
\hspace*{-0.7cm}
\begin{threeparttable}
\caption{DGP Steady State (Discrete Time Data)}
\label{table:dgpdis}
\begin{tabular}{lcccccc}
\toprule
                                      & Exp. 1            & Exp. 2            & Exp. 3            & Exp. 4            & Exp. 5            & Exp. 6            \\
Descriptive                                      & $\theta_{\text{EC}}=1.0$ & $\theta_{\text{EC}}=1.0$ & $\theta_{\text{EC}}=1.0$ & $\theta_{\text{EC}}=0.0$ & $\theta_{\text{EC}}=2.0$ & $\theta_{\text{EC}}=4.0$ \\    
Statistics             & $\theta_{\text{RN}}=0.0$ & $\theta_{\text{RN}}=1.0$ & $\theta_{\text{RN}}=2.0$ & $\theta_{\text{RN}}=1.0$ & $\theta_{\text{RN}}=1.0$ & $\theta_{\text{RN}}=1.0$ \\
\midrule
Average \#active firms\tnote{1}                            & 3.7107            & 2.7744            & 2.0468            & 2.7351            & 2.8027            & 2.8214            \\
& (1.4427)          & (1.5338)          & (1.2510)          & (1.3921)          & (1.6612)          & (1.8139)          \\
AR(1) for \#active firms\tnote{2}    & 0.8012            & 0.7879            & 0.6909            & 0.6720            & 0.8648            & 0.9381            \\
Average \#entrants     & 0.3783            & 0.5024            & 0.5388            & 0.6514            & 0.3653            & 0.1861            \\
 Average \#exits     &                  
 0.3779            & 0.5008            & 0.5385            & 0.6464            & 0.3667            & 0.1870            \\
Excess turnover\tnote{3}          &0.2025            & 0.3096            & 0.3798            & 0.4770            & 0.1768            & 0.0413            \\

Correlation between         &  -0.0030           & -0.0859           & -0.0669           & -0.1240           & -0.0607           & -0.0545           \\
\ entries and exits &                   &                   &                   &                   &                   &              \\
Prob. of being active            &                   &                   &                   &                   &                   &                   \\
\ Firm 1            & 0.7030            & 0.4980            & 0.3352            & 0.5032            & 0.4878            & 0.4567            \\
\ Firm 2            & 0.7237            & 0.5286            & 0.3694            & 0.5263            & 0.5244            & 0.5045            \\
\ Firm 3            & 0.7449            & 0.5530            & 0.4115            & 0.5468            & 0.5611            & 0.5549            \\
\ Firm 4            & 0.7602            & 0.5806            & 0.4443            & 0.5687            & 0.5953            & 0.6179            \\
\ Firm 5            & 0.7790            & 0.6141            & 0.4863            & 0.5902            & 0.6341            & 0.6875     \\ 
\bottomrule
\end{tabular}
    \begin{tablenotes}
\footnotesize
      \item[1] Values in parentheses are standard deviations. 
      \item[2] AR(1) for \#active is the autoregressive coefficient regressing the number of current active firms on the number of active firms in previous period. 
      \item[3] Excess turnover is defined as (\#entrants + \#exits) - abs(\#entrants - \#exits). 
    \end{tablenotes}
\end{threeparttable}
\end{table}  

\subsection{Estimation Results}
\label{sec:result}

We consider several choices for the initial values of the CCPs:
(i) true CCPs,
(ii) frequency estimates,
(iii) semi-parametric logit estimates (using market size and number of rival firms as explanatory variables), and
(iv) random $\Uniform(0,1)$ draws.
Although (i) is infeasible, it yields a $\sqrt{M}$-consistent, asymptotically normal, and efficient estimator that serves as a benchmark \citep[][p. 16]{AM07}.
For (ii), (iii), and (iv), we consider the two-step estimator ($\npliter=1$) and the converged CTNPL ($\npliter = 20$).

Table~\ref{table:mcconti} shows the means and standard deviations, in parentheses, of the Monte Carlo replications in the case of continuous time data.
Based on the same replications, Table~\ref{table:mseconti} displays the square-root of the mean square error (RMSE) of each estimator relative to the infeasible two-step PML estimates.
Using these two tables we can compare the performance of the various estimators.

First, we note that the CTNPL estimator converged to the same estimates within 20 iterations regardless of which CCP estimates were used for initialization (frequency estimates, semi-parametric logit estimates, or random $\Uniform(0,1)$ draws).
Therefore, we report these in a single row denoted ``CTNPL'' in the table.

Of the two-step estimators, initializing using the semi-parametric logit estimates appears to the best choice in terms of low biases, standard deviations, and therefore low relative RMSE values.
Not surprisingly, using inconsistent, random $\Uniform(0,1)$ starting values is a worse choice.
Perhaps surprisingly, the frequency estimates are the worst choice for the initial CCPs.
Even though frequency estimates are consistent, they result in a large bias with low variance across all parameters in Table~\ref{table:mcconti} which in turn leads to high relative RMSE values in Table~\ref{table:mseconti}.

Overall, the performance of the CTNPL estimator---regardless of how it is initialized---is on par with the infeasible infeasible two-step PML estimator (which is initialized using the true CCPs).
The RMSE values for the CTNPL estimator are close to one, meaning the estimators have similar performance, and in most cases the RMSEs are lower than those the other feasible two-step estimators.
For some parameters in some experiments, the two-step semi-parametric logit estimator performs slightly better than CTNPL, but the CTNPL estimator is much more robust than the two-step PML estimator in terms of how it is initialized.
Even if the frequency estimates are used to initialize the CTNPL estimator, the finite sample bias can be reduced by repeatedly imposing the MPE conditions through iteration.

We also carried out these experiments with a much smaller sample size $n=100$.
We report the results in Tables~\ref{table:mcconti100} and \ref{table:mseconti100}.
Here, the benefits of iterating are more pronounced:
Even the 2S-Logit estimates have large amounts of finite sample bias and the improvements from iterating using CTNPL are more substantial.



Finally, Table \ref{table:mcdis} and \ref{table:msedis} report the similar results for estimation using discretely observed data.
In this case, the results are even more distinct: the best two-step PML estimator (semi-parametric logit) exhibits more bias while the CTNPL estimator remains closer to the infeasible two-step estimator.
The bias is particularly acute for the strategic interaction parameter $\theta_{\text{RN}}$.

\begin{table}[htbp]
\centering
\begin{threeparttable}
\caption{Monte Carlo Results (Continuous Time Data)}
\label{table:mcconti}
\begin{tabular}{llcccc}
\toprule
        &                      & \multicolumn{4}{c}{Parameters}                                          \\
        \cline{3-6}
Exp. & Estimator & $\theta_{\text{FC,1}}$ & $\theta_{\text{RS}}$ & $\theta_{\text{EC}}$ & $\theta_{\text{RN}}$ \\
\midrule
\textbf{1} & \textbf{True values} & \textbf{-1.9000} & \textbf{1.0000} & \textbf{1.0000} & \textbf{0.0000} \\
 & 2S-True & -1.9112 (0.2933) & 1.0052 (0.0954) & 1.0274 (0.1342) & \ 0.0169 (0.2802) \\
 & 2S-Freq & -0.3623 (0.2410) & 0.3700 (0.1126) & 1.7908 (0.1435) & \ 0.3567 (0.3015) \\
 & 2S-Logit & -1.9136 (0.2948) & 1.0071 (0.0952) & 1.0272 (0.1346) & \ 0.0194 (0.2805) \\
 & 2S-Random & -2.1992 (0.3977) & 1.1214 (0.1031) & 1.0314 (0.1352) & -0.0427 (0.3415) \\
 & CTNPL & -1.9144 (0.2950) & 1.0072 (0.0953) & 1.0273 (0.1345) & \ 0.0190 (0.2819) \\
           \addlinespace[0.2cm]
\textbf{2} & \textbf{True values} & \textbf{-1.9000} & \textbf{1.0000} & \textbf{1.0000} & \textbf{1.0000} \\
 & 2S-True & -1.9099 (0.2009) & 1.0016 (0.0953) & 1.0183 (0.1306) & \ 1.0137 (0.2643) \\
 & 2S-Freq & -0.8255 (0.1896) & 0.4671 (0.1117) & 1.4694 (0.1211) & \ 0.5877 (0.2860) \\
 & 2S-Logit & -1.8932 (0.2113) & 1.0005 (0.1023) & 1.0185 (0.1304) & \ 1.0251 (0.2807) \\
 & 2S-Random & -1.7843 (0.2889) & 0.9587 (0.0724) & 1.0224 (0.1309) & \ 1.0140 (0.3030) \\
 & CTNPL & -1.9163 (0.2078) & 1.0032 (0.1030) & 1.0184 (0.1304) & \ 1.0145 (0.2761) \\
           \addlinespace[0.2cm]
\textbf{3} & \textbf{True values} & \textbf{-1.9000} & \textbf{1.0000} & \textbf{1.0000} & \textbf{2.0000} \\
 & 2S-True & -1.9401 (0.1770) & 1.0117 (0.0808) & 1.0070 (0.0960) & \ 2.0182 (0.2665) \\
 & 2S-Freq & -1.1756 (0.2523) & 0.5865 (0.1211) & 1.3317 (0.1055) & \ 1.1303 (0.3315) \\
 & 2S-Logit & -1.8700 (0.2034) & 1.0077 (0.0916) & 1.0070 (0.0962) & \ 2.0714 (0.2999) \\
 & 2S-Random & -1.2428 (0.2885) & 0.8001 (0.0624) & 1.0215 (0.0928) & \ 1.8789 (0.3228) \\
 & CTNPL & -1.9488 (0.2013) & 1.0147 (0.0930) & 1.0070 (0.0961) & \ 2.0196 (0.2867) \\
           \addlinespace[0.2cm]
\textbf{4} & \textbf{True values} & \textbf{-1.9000} & \textbf{1.0000} & \textbf{0.0000} & \textbf{1.0000} \\
 & 2S-True & -1.9208 (0.2382) & 1.0219 (0.0950) & -0.0071 (0.1149) & \ 1.0665 (0.3163) \\
 & 2S-Freq & -0.9979 (0.2103) & 0.5025 (0.1096) & \ 0.4399 (0.1097) & \ 0.6593 (0.3496) \\
 & 2S-Logit & -1.9081 (0.2458) & 1.0233 (0.1015) & -0.0073 (0.1146) & \ 1.0835 (0.3347) \\
 & 2S-Random & -1.5546 (0.2965) & 0.8156 (0.0697) & \ 0.0249 (0.1138) & \ 0.8344 (0.2922) \\
 & CTNPL & -1.9260 (0.2410) & 1.0247 (0.1021) & -0.0071 (0.1149) & \ 1.0727 (0.3297) \\
           \addlinespace[0.2cm]
\textbf{5} & \textbf{True values} & \textbf{-1.9000} & \textbf{1.0000} & \textbf{2.0000} & \textbf{1.0000} \\
 & 2S-True & -1.9320 (0.1902) & 1.0175 (0.1033) & 2.0155 (0.1334) & 1.0336 (0.2189) \\
 & 2S-Freq & -0.7240 (0.2086) & 0.4921 (0.1171) & 2.4653 (0.1393) & 0.6563 (0.2428) \\
 & 2S-Logit & -1.9079 (0.2029) & 1.0165 (0.1118) & 2.0152 (0.1337) & 1.0493 (0.2381) \\
 & 2S-Random & -2.1011 (0.3297) & 1.1627 (0.0987) & 1.9982 (0.1292) & 1.3001 (0.3330) \\
 & CTNPL & -1.9416 (0.2045) & 1.0219 (0.1132) & 2.0155 (0.1336) & 1.0390 (0.2346) \\
           \addlinespace[0.2cm]
\textbf{6} & \textbf{True values} & \textbf{-1.9000} & \textbf{1.0000} & \textbf{4.0000} & \textbf{1.0000} \\
 & 2S-True & -1.9277 (0.1966) & 1.0167 (0.1094) & 4.0347 (0.2232) & 1.0180 (0.1934) \\
 & 2S-Freq & -0.5088 (0.1907) & 0.4460 (0.1182) & 4.4598 (0.2255) & 0.5811 (0.2107) \\
 & 2S-Logit & -1.8957 (0.2064) & 1.0113 (0.1205) & 4.0319 (0.2237) & 1.0291 (0.2098) \\
 & 2S-Random & -2.8593 (0.4994) & 1.6734 (0.1611) & 3.9831 (0.2210) & 2.0784 (0.5526) \\
 & CTNPL & -1.9448 (0.2087) & 1.0214 (0.1206) & 4.0332 (0.2232) & 1.0180 (0.2040)                        \\ 
\bottomrule
\end{tabular}
    \begin{tablenotes}
\footnotesize
      \item Displayed values are means with standard deviations in parentheses.
    \end{tablenotes}
\end{threeparttable}
\end{table}

\begin{table}[htbp]
\centering
\begin{threeparttable}
\caption{RMSE Relative to Two-Step PML with True CCPs (Continuous Time Data)}
\label{table:mseconti}
\begin{tabular}{llccccc}
\toprule
     &           & \multicolumn{4}{c}{Parameters}                                  \\
     \cline{3-6}
Exp. & Estimator & $\theta_{\text{FC,1}}$ & $\theta_{\text{RS}}$ & $\theta_{\text{EC}}$ & $\theta_{\text{RN}}$ \\
\midrule
1 & 2S-Freq & 5.3031 & 6.6989 & 5.8694 & 1.6638 \\
 & 2S-Logit & 1.0056 & 0.9992 & 1.0027 & 1.0015 \\
 & 2S-Random & 1.6955 & 1.6671 & 1.0137 & 1.2260 \\
 & CTNPL & 1.0064 & 1.0004 & 1.0022 & 1.0063 \\
            \addlinespace[0.2cm]
2 & 2S-Freq & 5.4232 & 5.7112 & 3.6768 & 1.8962 \\
 & 2S-Logit & 1.0506 & 1.0727 & 0.9987 & 1.0650 \\
 & 2S-Random & 1.5469 & 0.8746 & 1.0070 & 1.1461 \\
 & CTNPL & 1.0362 & 1.0806 & 0.9991 & 1.0450 \\
            \addlinespace[0.2cm]
3 & 2S-Freq & 4.2254 & 5.2762 & 3.6157 & 3.4849 \\
 & 2S-Logit & 1.1325 & 1.1259 & 1.0016 & 1.1543 \\
 & 2S-Random & 3.9534 & 2.5644 & 0.9896 & 1.2909 \\
 & CTNPL & 1.1408 & 1.1534 & 1.0012 & 1.0759 \\
            \addlinespace[0.2cm]
4 & 2S-Freq & 3.8739 & 5.2251 & 3.9373 & 1.5103 \\
 & 2S-Logit & 1.0287 & 1.0685 & 0.9973 & 1.0674 \\
 & 2S-Random & 1.9037 & 2.0219 & 1.0114 & 1.0392 \\
 & CTNPL & 1.0140 & 1.0778 & 0.9993 & 1.0445 \\
            \addlinespace[0.2cm]
5 & 2S-Freq & 6.1921 & 4.9729 & 3.6155 & 1.9001 \\
 & 2S-Logit & 1.0525 & 1.0779 & 1.0018 & 1.0978 \\
 & 2S-Random & 2.0023 & 1.8152 & 0.9619 & 2.0241 \\
 & CTNPL & 1.0818 & 1.0998 & 1.0015 & 1.0738 \\
            \addlinespace[0.2cm]
6 & 2S-Freq & 7.0717 & 5.1198 & 2.2676 & 2.4137 \\
 & 2S-Logit & 1.0397 & 1.0937 & 1.0005 & 1.0901 \\
 & 2S-Random & 5.4465 & 6.2583 & 0.9815 & 6.2374 \\
 & CTNPL & 1.0752 & 1.1074 & 0.9993 & 1.0540     \\ 
\bottomrule
\end{tabular}
\end{threeparttable}
\end{table}

\begin{table}[tbph]
\centering
\begin{threeparttable}
\caption{Monte Carlo Results (Continuous Time Data, $n=100$)}
\label{table:mcconti100}
\begin{tabular}{llcccc}
\toprule
        &                      & \multicolumn{4}{c}{Parameters}                                          \\
        \cline{3-6}
Exp. & Estimator & $\theta_{\text{FC,1}}$ & $\theta_{\text{RS}}$ & $\theta_{\text{EC}}$ & $\theta_{\text{RN}}$ \\
\midrule
\textbf{1} & \textbf{True values} & \textbf{-1.9000} & \textbf{1.0000} & \textbf{1.0000}  & \textbf{0.0000}  \\
  & 2S-True              & -2.1018 (0.9937) & 1.1040 (0.3446) & 0.9456 (0.6370)  & \ 0.0216 (0.8111)  \\
  & 2S-Freq              & \ 0.2295 (0.4986)  & 0.3171 (0.1187) & 2.0410 (0.5382)  & \ 0.5863 (0.3954)  \\
  & 2S-Logit             & -1.9883 (0.9231) & 1.0436 (0.3135) & 0.9396 (0.6676)  & \ 0.0304 (0.6545)  \\
  & 2S-Random            & -2.4118 (1.2242) & 1.2346 (0.3154) & 0.9368 (0.6475)  & -0.0328 (1.0578) \\
  & CTNPL                & -2.1606 (1.0891) & 1.1208 (0.3496) & 0.9551 (0.6276)  & \ 0.0217 (0.8738)  \\
\textbf{2} & \textbf{True values} & \textbf{-1.9000} & \textbf{1.0000} & \textbf{1.0000}  & \textbf{1.0000}  \\
  & 2S-True              & -2.0290 (0.7257) & 1.0847 (0.3819) & 0.9574 (0.5941)  & 1.1370 (0.9101)  \\
  & 2S-Freq              & -0.5441 (0.4981) & 0.5121 (0.1918) & 1.8630 (0.5064)  & 0.8795 (0.4138)  \\
  & 2S-Logit             & -1.8820 (0.6793) & 0.9341 (0.3265) & 0.9550 (0.6072)  & 0.8434 (0.7188)  \\
  & 2S-Random            & -1.9295 (0.7970) & 1.0537 (0.2469) & 0.9073 (0.6378)  & 1.1632 (0.9019)  \\
  & CTNPL                & -2.1008 (0.7896) & 1.1160 (0.4660) & 0.9578 (0.5952)  & 1.1713 (1.0605)  \\
\textbf{3} & \textbf{True values} & \textbf{-1.9000} & \textbf{1.0000} & \textbf{1.0000}  & \textbf{2.0000}  \\
  & 2S-True              & -2.0736 (0.5717) & 1.0699 (0.3333) & 0.9536 (0.4541)  & 2.1462 (0.9528)  \\
  & 2S-Freq              & -0.8284 (0.4540) & 0.4904 (0.1804) & 1.9270 (0.3658)  & 0.9465 (0.4555)  \\
  & 2S-Logit             & -1.8239 (0.5270) & 0.8629 (0.2682) & 0.9546 (0.4581)  & 1.5919 (0.7613)  \\
  & 2S-Random            & -1.3648 (0.6798) & 0.8583 (0.2111) & 0.9348 (0.4690)  & 2.0179 (0.9448)  \\
  & CTNPL                & -2.1490 (0.6867) & 1.1023 (0.4116) & 0.9526 (0.4554)  & 2.2183 (1.1231)  \\
\textbf{4} & \textbf{True values} & \textbf{-1.9000} & \textbf{1.0000} & \textbf{0.0000}  & \textbf{1.0000}  \\
  & 2S-True              & -2.1030 (1.1315) & 1.1795 (0.499)  & -0.1843 (0.7842) & 1.3009 (1.2033)  \\
  & 2S-Freq              & -0.4899 (0.4885) & 0.4487 (0.1908) & 1.1170 (0.4428)  & 0.8117 (0.4851)  \\
  & 2S-Logit             & -1.9415 (1.0210)  & 0.9800 (0.4106) & -0.2136 (0.8581) & 0.8986 (0.8859)  \\
  & 2S-Random            & -1.5944 (1.0009) & 0.9107 (0.2640)  & -0.1169 (0.7311) & 1.0397 (0.9615)  \\
  & CTNPL                & -2.1926 (1.2283) & 1.2265 (0.5703) & -0.1817 (0.7789) & 1.3779 (1.3453)  \\
\textbf{5} & \textbf{True values} & \textbf{-1.9000} & \textbf{1.0000} & \textbf{2.0000}  & \textbf{1.0000}  \\
  & 2S-True              & -1.9874 (0.5975) & 1.0542 (0.3354) & 2.0038 (0.4998)  & 1.0762 (0.6414)  \\
  & 2S-Freq              & -0.6184 (0.5366) & 0.5642 (0.2447) & 2.6802 (0.5558)  & 0.8779 (0.4226)  \\
  & 2S-Logit             & -1.8277 (0.5542) & 0.9380 (0.2761) & 2.0046 (0.4991)  & 0.8852 (0.5318)  \\
  & 2S-Random            & -2.1763 (0.8047) & 1.2299 (0.2693) & 1.9683 (0.4961)  & 1.4027 (0.8789)  \\
  & CTNPL                & -2.0425 (0.6532) & 1.0891 (0.3809) & 2.0042 (0.4982)  & 1.1229 (0.7150)  \\
\textbf{6} & \textbf{True values} & \textbf{-1.9000} & \textbf{1.0000} & \textbf{4.0000}  & \textbf{1.0000}  \\
  & 2S-True              & -1.9388 (0.6295) & 1.0158 (0.3350)  & 4.1877 (0.5996)  & 1.0171 (0.5425)  \\
  & 2S-Freq              & -0.7903 (0.8146) & 0.6775 (0.3716) & 4.4727 (0.6089)  & 0.8876 (0.5212)  \\
  & 2S-Logit             & -1.8265 (0.6019) & 0.9373 (0.3132) & 4.1830 (0.6019)  & 0.8971 (0.5241)  \\
  & 2S-Random            & -3.1589 (1.1246) & 1.8696 (0.4750) & 4.1274 (0.6109)  & 2.4380 (1.3796)  \\
  & CTNPL                & -2.0335 (0.6859) & 1.0465 (0.3722) & 4.1792 (0.6004)  & 1.0277 (0.5927)  \\
\bottomrule
\end{tabular}
    \begin{tablenotes}
\footnotesize
      \item Displayed values are means with standard deviations in parentheses.
    \end{tablenotes}
\end{threeparttable}
\end{table}

\begin{table}[htbp]
\centering
\begin{threeparttable}
\caption{RMSE Relative to Two-Step PML with True CCPs (Continuous Time Data, $n=100$)}
\label{table:mseconti100}
\begin{tabular}{llccccc}
\toprule
     &           & \multicolumn{4}{c}{Parameters}                                  \\
     \cline{3-6}
Exp. & Estimator & $\theta_{\text{FC,1}}$ & $\theta_{\text{RS}}$ & $\theta_{\text{EC}}$ & $\theta_{\text{RN}}$ \\
\midrule
1         & 2S-Freq   & 2.1569 & 1.9255 & 1.8330 & 0.8715 \\
 & 2S-Logit  & 0.9144 & 0.8792 & 1.0484 & 0.8075 \\
 & 2S-Random & 1.3085 & 1.0919 & 1.0176 & 1.3043 \\
 & CTNPL     & 1.1044 & 1.0277 & 0.9841 & 1.0772 \\
2         & 2S-Freq   & 1.9598 & 1.3402 & 1.6800 & 0.4683 \\
 & 2S-Logit  & 0.9220 & 0.8514 & 1.0222 & 0.7993 \\
          & 2S-Random & 1.0821 & 0.6460 & 1.0821 & 0.9958 \\
 & CTNPL     & 1.1054 & 1.2278 & 1.0018 & 1.1671 \\
3         & 2S-Freq   & 1.9479 & 1.5876 & 2.1831 & 1.1907 \\
 & 2S-Logit  & 0.8912 & 0.8846 & 1.0085 & 0.8962 \\
 & 2S-Random & 1.4480 & 0.7464 & 1.0373 & 0.9804 \\
 & CTNPL     & 1.2226 & 1.2454 & 1.0029 & 1.1869 \\
4         & 2S-Freq   & 1.2981 & 1.1002 & 1.4916 & 0.4195 \\
 & 2S-Logit  & 0.8888 & 0.7752 & 1.0978 & 0.7189 \\
 & 2S-Random & 0.9103 & 0.5256 & 0.9192 & 0.7759 \\
 & CTNPL     & 1.0984 & 1.1572 & 0.9930 & 1.1266 \\
5         & 2S-Freq   & 2.3009 & 1.4712 & 1.7573 & 0.6810 \\
 & 2S-Logit  & 0.9255 & 0.8330 & 0.9985 & 0.8423 \\
          & 2S-Random & 1.4090 & 1.0422 & 0.9945 & 1.4966 \\
 & CTNPL     & 1.1071 & 1.1515 & 0.9967 & 1.1231 \\
6         & 2S-Freq   & 2.1825 & 1.4671 & 1.2268 & 0.9824 \\
 & 2S-Logit  & 0.9613 & 0.9524 & 1.0013 & 0.9842 \\
 & 2S-Random & 2.6764 & 2.9543 & 0.9932 & 3.6718 \\
 & CTNPL     & 1.1078 & 1.1184 & 0.9973 & 1.0932      \\
\bottomrule
\end{tabular}
\end{threeparttable}
\end{table}

\begin{table}[htbp]
\centering
\begin{threeparttable}
\caption{Monte Carlo Results (Discrete Time Data)}
\label{table:mcdis}
\begin{tabular}{llcccc}
\toprule
        &                      & \multicolumn{4}{c}{Parameters}                                          \\
        \cline{3-6}
Exp. & Estimator & $\theta_{\text{FC,1}}$ & $\theta_{\text{RS}}$ & $\theta_{\text{EC}}$ & $\theta_{\text{RN}}$ \\
\midrule
\textbf{1} & \textbf{True values} & \textbf{-1.9000} & \textbf{1.0000} & \textbf{1.0000} & \textbf{0.0000} \\
 & 2S-True & -1.8696 (0.4613) & 1.0321 (0.1864) & 1.0126 (0.3056) & \ 0.0710 (0.4025) \\
 & 2S-Freq & -0.2995 (0.2997) & 0.3971 (0.1259) & 1.9794 (0.3101) & \ 0.3994 (0.2739) \\
 & 2S-Logit & -1.7440 (0.4069) & 0.9755 (0.1655) & 1.0055 (0.3129) & \ 0.0860 (0.3060) \\
 & 2S-Random & -2.1636 (0.5416) & 1.1508 (0.1539) & 1.0134 (0.2922) & -0.0045 (0.4948) \\
 & CTNPL & -1.8830 (0.4711) & 1.0383 (0.1952) & 1.0146 (0.3040) & \ 0.0747 (0.4094) \\
     \addlinespace[0.2cm]
\textbf{2} & \textbf{True values} & \textbf{-1.9000} & \textbf{1.0000} & \textbf{1.0000} & \textbf{1.0000} \\
 & 2S-True & -1.9433 (0.3396) & 1.0385 (0.1613) & 0.9871 (0.2539) & 1.0938 (0.3765) \\
 & 2S-Freq & -0.6917 (0.2469) & 0.4395 (0.1124) & 1.6766 (0.2417) & 0.5812 (0.2729) \\
 & 2S-Logit & -1.7812 (0.3104) & 0.8945 (0.1339) & 0.9858 (0.2555) & 0.8286 (0.2973) \\
 & 2S-Random & -1.7902 (0.4500) & 0.9946 (0.1114) & 0.9748 (0.2538) & 1.1124 (0.4437) \\
 & CTNPL & -1.9558 (0.3573) & 1.0461 (0.1775) & 0.9880 (0.2527) & 1.1051 (0.4033) \\
     \addlinespace[0.2cm]
\textbf{3} & \textbf{True values} & \textbf{-1.9000} & \textbf{1.0000} & \textbf{1.0000} & \textbf{2.0000} \\
 & 2S-True & -1.9298 (0.2940) & 1.0302 (0.1646) & 0.9909 (0.2158) & 2.0662 (0.4787) \\
 & 2S-Freq & -0.8447 (0.2504) & 0.4236 (0.1133) & 1.5584 (0.2076) & 0.7669 (0.3106) \\
 & 2S-Logit & -1.6782 (0.2705) & 0.8325 (0.1329) & 0.9908 (0.2160) & 1.5422 (0.3803) \\
 & 2S-Random & -1.2187 (0.3895) & 0.8255 (0.1008) & 0.9921 (0.2087) & 1.9675 (0.5099) \\
 & CTNPL & -1.9521 (0.3397) & 1.0416 (0.1953) & 0.9912 (0.2159) & 2.0869 (0.5427) \\
     \addlinespace[0.2cm]
\textbf{4} & \textbf{True values} & \textbf{-1.9000} & \textbf{1.0000} & \textbf{0.0000} & \textbf{1.0000} \\
 & 2S-True & -1.9361 (0.4290) & 0.9941 (0.1813) & 0.0073 (0.3143) & 0.9462 (0.4605) \\
 & 2S-Freq & -0.6702 (0.2656) & 0.3328 (0.0856) & 0.8494 (0.2739) & 0.3373 (0.2498) \\
 & 2S-Logit & -1.7521 (0.3658) & 0.8327 (0.1387) & 0.0044 (0.3153) & 0.6578 (0.3278) \\
 & 2S-Random & -1.6005 (0.4331) & 0.8264 (0.1074) & 0.0254 (0.2973) & 0.7897 (0.4397) \\
 & CTNPL & -1.9443 (0.4427) & 0.9993 (0.1939) & 0.0068 (0.3146) & 0.9524 (0.4843) \\
     \addlinespace[0.2cm]
\textbf{5} & \textbf{True values} & \textbf{-1.9000} & \textbf{1.0000} & \textbf{2.0000} & \textbf{1.0000} \\
 & 2S-True & -1.9124 (0.2865) & 1.0140 (0.1627) & 1.9971 (0.2054) & 1.0125 (0.3300) \\
 & 2S-Freq & -0.6527 (0.2805) & 0.4642 (0.1309) & 2.5805 (0.2464) & 0.5930 (0.2421) \\
 & 2S-Logit & -1.7567 (0.2654) & 0.8959 (0.1376) & 1.9996 (0.2050) & 0.8150 (0.2730) \\
 & 2S-Random & -2.0970 (0.3999) & 1.1894 (0.1401) & 1.9635 (0.2184) & 1.3346 (0.4834) \\
 & CTNPL & -1.9285 (0.3105) & 1.0209 (0.1811) & 1.9961 (0.2065) & 1.0190 (0.3562) \\
     \addlinespace[0.2cm]
\textbf{6} & \textbf{True values} & \textbf{-1.9000} & \textbf{1.0000} & \textbf{4.0000} & \textbf{1.0000} \\
 & 2S-True & -1.9634 (0.2460) & 1.0438 (0.1470) & 4.0477 (0.2482) & 1.0649 (0.2560) \\
 & 2S-Freq & -0.7220 (0.3098) & 0.5674 (0.1777) & 4.4336 (0.2896) & 0.7483 (0.2930) \\
 & 2S-Logit & -1.8252 (0.2365) & 0.9631 (0.1388) & 4.0460 (0.2452) & 0.9616 (0.2545) \\
 & 2S-Random & -2.9300 (0.5703) & 1.8220 (0.2097) & 3.9921 (0.2574) & 2.4322 (0.7285) \\
 & CTNPL & -1.9921 (0.2694) & 1.0570 (0.1675) & 4.0473 (0.2497) & 1.0799 (0.2934) \\ 
\bottomrule
\end{tabular}
    \begin{tablenotes}
\footnotesize
      \item Displayed values are means with standard deviations in parentheses.
    \end{tablenotes}
\end{threeparttable}
\end{table}

\begin{table}[htbp]
\centering
\begin{threeparttable}
\caption{RMSE Relative to Two-Step PML with True CCPs (Discrete Time Data)}
\label{table:msedis}
\begin{tabular}{llccccc}
\toprule
     &           & \multicolumn{4}{c}{Parameters}                                  \\
     \cline{3-6}
Exp. & Estimator & $\theta_{\text{FC,1}}$ & $\theta_{\text{RS}}$ & $\theta_{\text{EC}}$ & $\theta_{\text{RN}}$ \\
\midrule
1 & 2S-Freq & 3.5224 & 3.2572 & 3.3591 & 1.1850 \\
 & 2S-Logit & 0.9426 & 0.8844 & 1.0232 & 0.7777 \\
 & 2S-Random & 1.3030 & 1.1393 & 0.9565 & 1.2108 \\
 & CTNPL & 1.0197 & 1.0519 & 0.9951 & 1.0182 \\
      \addlinespace[0.2cm]
2 & 2S-Freq & 3.6028 & 3.4469 & 2.8259 & 1.2885 \\
 & 2S-Logit & 0.9709 & 1.0280 & 1.0065 & 0.8844 \\
 & 2S-Random & 1.3533 & 0.6723 & 1.0030 & 1.1797 \\
 & CTNPL & 1.0566 & 1.1059 & 0.9949 & 1.0742 \\
      \addlinespace[0.2cm]
3 & 2S-Freq & 3.6707 & 3.5096 & 2.7585 & 2.6315 \\
 & 2S-Logit & 1.1837 & 1.2774 & 1.0009 & 1.2316 \\
 & 2S-Random & 2.6560 & 1.2038 & 0.9673 & 1.0574 \\
 & CTNPL & 1.1632 & 1.1926 & 1.0006 & 1.1373 \\
      \addlinespace[0.2cm]
4 & 2S-Freq & 2.9227 & 3.7093 & 2.8391 & 1.5275 \\
 & 2S-Logit & 0.9166 & 1.1981 & 1.0030 & 1.0222 \\
 & 2S-Random & 1.2232 & 1.1255 & 0.9492 & 1.0514 \\
 & CTNPL & 1.0335 & 1.0692 & 1.0012 & 1.0497 \\
      \addlinespace[0.2cm]
5 & 2S-Freq & 4.4578 & 3.3784 & 3.0697 & 1.4344 \\
 & 2S-Logit & 1.0517 & 1.0566 & 0.9978 & 0.9989 \\
 & 2S-Random & 1.5544 & 1.4431 & 1.0777 & 1.7804 \\
 & CTNPL & 1.0871 & 1.1165 & 1.0055 & 1.0802 \\
      \addlinespace[0.2cm]
6 & 2S-Freq & 4.7959 & 3.0490 & 2.0634 & 1.4625 \\
 & 2S-Logit & 0.9768 & 0.9366 & 0.9873 & 0.9747 \\
 & 2S-Random & 4.6357 & 5.5309 & 1.0191 & 6.0838 \\
 & CTNPL & 1.1210 & 1.1537 & 1.0055 & 1.1515    \\ 
\bottomrule
\end{tabular}
\end{threeparttable}
\end{table}  

\subsection{Strategic Interaction and Convergence}

In the specification of Experiment 1, the strategic interaction between agents, denoted by $\theta_{\text{RN}}$, is set to zero which means that this setting corresponds to a collection of single agent models.
In light of the zero Jacobian property of Proposition~\ref{prop:zero}, the $\Psi$ mapping is stable in this case and the algorithm will converge. However, as $\theta_{\text{RN}}$ increases it becomes more likely that the $\Psi$ mapping is unstable (both in a finite sample and in the population). \cite{KS12} showed, using a two-firm example, that a MPE for which the local convergence holds exists if the contemporaneous and dynamic interaction between firms is small.

In Section 2.3, \cite{KS12} showed that $r(M_{\Psi_P} \Psi_P)$ is close to $r(\Psi_P)$ in a typical setting where the number of states is larger than the number of parameters to estimate. This is also true in continuous-time games, so similarly we will focus here on $r(\Psi_\sigma)$ given that $\Psi_\sigma$ is often related to the characteristics of the economic model \citep{KS12}.

\cite{Aguirregabiria2019} study the relationship between $\theta_{\text{RN}}$ and the spectral radius $r(\Psi_P)$ for the discrete time NPL estimator using the five player example game of \cite{AM07}. We conjecture that $r(\Psi_\sigma)$ is smaller in continuous time games than the spectral radius in a comparable discrete time game since continuous time games do not allow simultaneous moves between agents, making it more likely that the best response mapping is more stable.  We investigate this in the context of the model used for our Monte Carlo experiments.

\begin{figure}[htbp]
\centering
\begin{subfigure}[b]{0.75\textwidth}
  \centering
  \includegraphics[width=\linewidth]{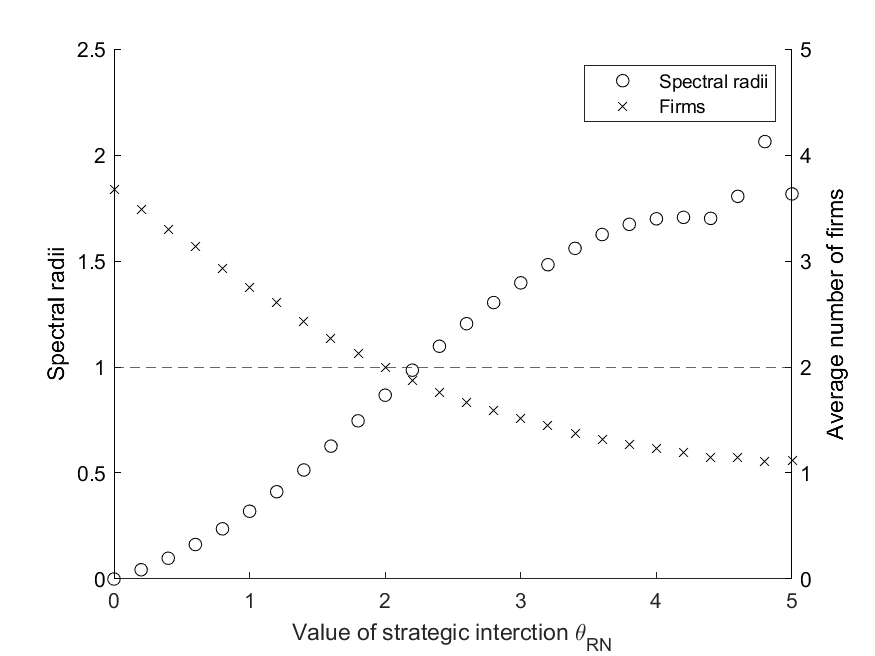}
  \caption{Discrete Time}
  \label{fig:spec_dtdt}
\end{subfigure}

\begin{subfigure}[b]{0.75\textwidth}
  \centering
  \includegraphics[width=\linewidth]{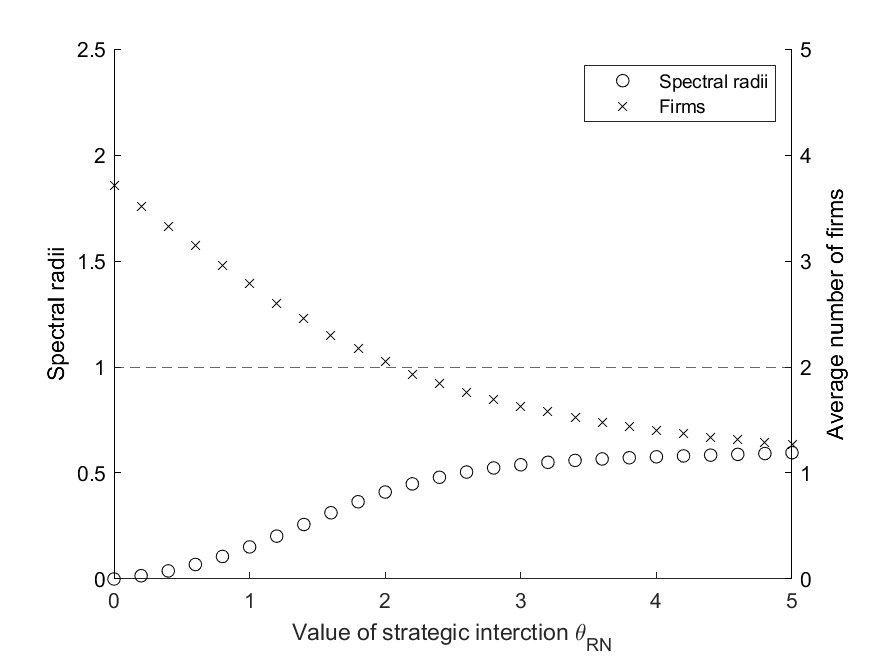}
  \caption{Continuous Time}
  \label{fig:spec_ctdt}
\end{subfigure}
\caption{Spectral Radius vs. $\theta_{\text{RN}}$}
\label{fig:test}
\end{figure}

Figure~\ref{fig:test} shows that the conjecture holds in the present model.
Displayed in the top panel is our replication of a similar figure in \cite{Aguirregabiria2019}.
This figure shows the relationship between the spectral radii for different values of $\theta_{\text{RN}}$ using a discrete time version of the five player game in \cite{AM07}.
In the figure, the spectral radius becomes larger than 1 when $\theta_{\text{RN}}$ is larger than 2.2.
The lower panel of Figure~\ref{fig:test} shows the relationship between the spectral radius and strategic interaction parameter in the continuous time model.
To construct the figure, we use the same setting of Experiment 1 above and examine how $r(\Psi_\sigma)$ changes with respect to $\theta_{\text{RN}}$.
To produce the figure, we generate different data sets for each value of $\theta_{\text{RN}}$ and calculate the spectral radius of $\Psi_{\sigma}$.
Even when $\theta_{\text{RN}}$ increases to 5 the spectral radius is 0.6, still far below the threshold for instability.
On the right y-axis, we also report the average number of active firms in each game.
As the strategic interaction parameter increases, the number of firms varies from almost 4 to around 1.

Although we can see that the number of active firms in the discrete time and continuous time models is similar, the meaning of a particular value for the strategic interaction parameter $\theta_{\text{RN}}$ is different in the two models.
In order to provide another means to compare the spectral radii, we also compare the discrete time model with the best-fit continuous time model in Figure~\ref{fig:spectral}.
To do so, we use the same discrete time data for each replication and use it to estimate the continuous time model.
Then we calculate the Jacobian $\Psi_\sigma$ as before and calculate the spectral radius.

\begin{figure}[htbp]
\centering
\caption{Spectral Radius vs $\theta_{\text{RN}}$: Best Fit Comparison}
\label{fig:spectral}
\includegraphics[width=12cm]{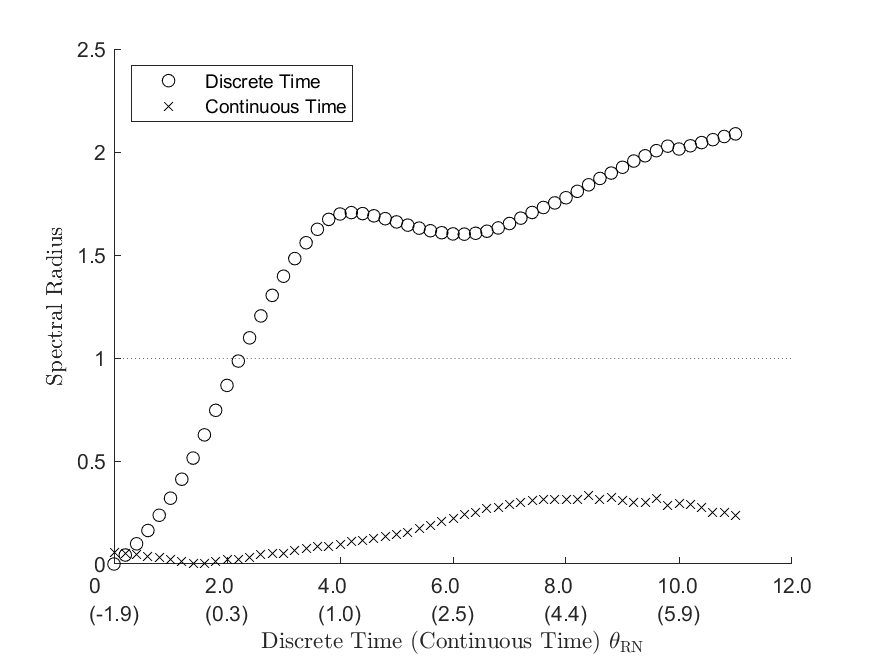}
\floatfoot{Note: The numbers in parentheses are corresponding estimates for $\theta_{\text{RN}}$ from a continuous time game.}
\end{figure}

As such, the horizontal axis of Figure~\ref{fig:spectral} has two labels. The upper labels denote the original values of $\theta_{\text{RN}}$ used to generate the discrete time data. The numbers in parentheses below are the corresponding estimates of $\theta_{\text{RN}}$ from the continuous time model. Note that when $\theta_{\text{RN}}$ is zero in continuous time, the spectral radius is also zero as established in Proposition~\ref{prop:zero}. Note that the continuous time estimates for the strategic interaction parameter are lower, and sometimes even negative, but the spectral radii are calculated using the absolute values of the eigenvalues so they are always positive. As $\theta_{\text{RN}}$ increases to 11, spectral radius is at most around 0.23, even farther from one than before, so it it highly unlikely that there will be any convergence issues. As a result, in this example we can allow for a wider range of strategic interaction values than in discrete time without worrying about convergence to an inconsistent estimator.
Even though the estimates always converged for the continuous time model, while the discrete time model failed to converge in many cases, this is not a universal result.
In general, there can be multiple equilibria in continuous time models and unstable CTNPL fixed points, as discussed in Appendix~\ref{sec:psd08}.
However, the sequential nature of moves in continuous time appears to improve convergence in practice in this heterogeneous agent entry-exit model.

\subsection{Estimation of Misspecified Discrete Time Models}
\label{sec:mis}

Finally, in this section we consider the effects of misspecification
when one estimates a simultaneous move, discrete time model but in
reality the DGP is a continuous time model with asynchronous moves.
We focus on this direction of misspecification because currently
the most common practice is for applied researchers to use
simultaneous move, discrete time models estimated using snapshot
(time-aggregated) data.
Continuous time models make up a small, but emerging share of the
literature and we consider the other direction of
misspecification in Appendix~\ref{app:dtct}.

It is not a foregone conclusion that estimating a misspecified model would lead all parameter estimates and counterfactuals to be incorrect.
Any finite-state, continuous-time Markov jump process, including those generated by the structural model we consider, has an embedded discrete time Markov chain, or jump chain, that characterizes transitions at jump times without regard for the time elapsed between jumps \citep[][p. 153]{karlin81second}.
Therefore, one may wonder whether the discrete time game could approximate the continuous time game well enough to avoid bias in some parameters.
We show that this is not the case in our example entry model: the transition matrix is only one component of the complete structural model and the specification of sequential vs simultaneous moves is also important and ultimately still leads to misspecification bias.

We report estimates for the misspecified discrete time NPL estimator in Table~\ref{table:mis:ctdt:dtdt} along with the correctly specified CTNPL estimates for comparison (the latter being reproduced from Table~\ref{table:mcdis}).
There is bias in all parameters, however it is important to note that there are severe biases in the important strategic interaction and entry cost parameters.
It is also of interest to learn how these biases behave in different settings.
In particular, the biases tend to be larger when entry costs are smaller.

Although the estimates for $\theta_{\text{RS}}$ have small standard deviations, the biases are larger than those of estimates from correctly specified models in all experiments. This result suggests that when a researcher wrongly chooses a discrete time model to estimate parameters from continuous data, it is possible to arrive at inconsistent estimates with large finite sample bias although seeming precise with small standard errors.

The underlying issue is that the discrete time model conflates higher rates of entry due to low entry costs with higher rates of entry due to low levels of competition.
To see this, consider a two-firm model and suppose the market is empty at time $t = 0$.
Suppose firm 1 first enters at time $t = 0.1$ and firm 2 later decides not to enter at time $t = 0.2$.
Suppose that this market structure remains until time $t=1$.
Although firm 1 has spent nearly the entire unit of time in the market, the discrete time model interprets this as no firms in the market at time $t = 0$ and a firm-1 monopoly beginning at time 1.
Through the discrete time lens, firm 2 chose to not enter an empty market and therefore entry costs must be high.
On the other hand, the continuous time model---even with discrete time data---allows for the possibility that firm 1 entered earlier and firm 2 may have chosen to remain out because of the competitive effect and the presence of firm 1 rather than high entry costs.

Figure~\ref{fig:mis} illustrates the possibility of misleading estimates.
We plot the misspecified discrete time model estimates from Experiment 1 and the true values.
Even though the confidence intervals for the discrete time estimates are small, the estimates are far from the true parameters. This suggests that a researcher should carefully specify the model by considering whether sequential moves are important and understanding how and when the state variables change.

\begin{figure}[htbp]
\centering
\begin{subfigure}[b]{0.45\textwidth}
  \centering
  \includegraphics[width=\linewidth]{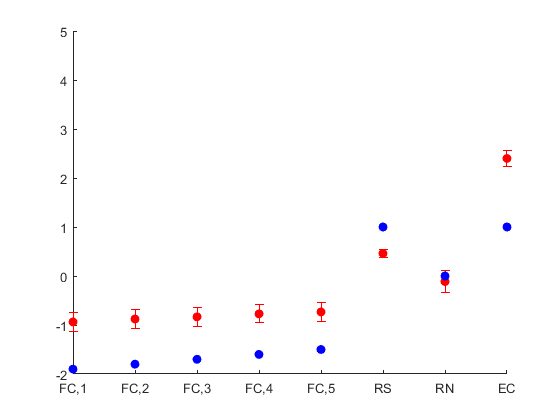}
  \caption{}
\end{subfigure}
\begin{subfigure}[b]{0.45\textwidth}
  \centering
  \includegraphics[width=\linewidth]{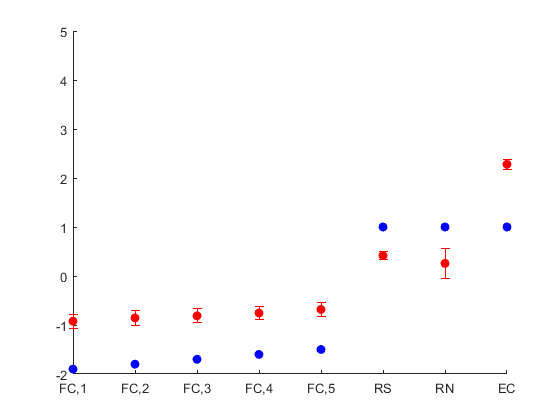}
  \caption{}
\end{subfigure}
\begin{subfigure}[b]{0.45\textwidth}
  \centering
  \includegraphics[width=\linewidth]{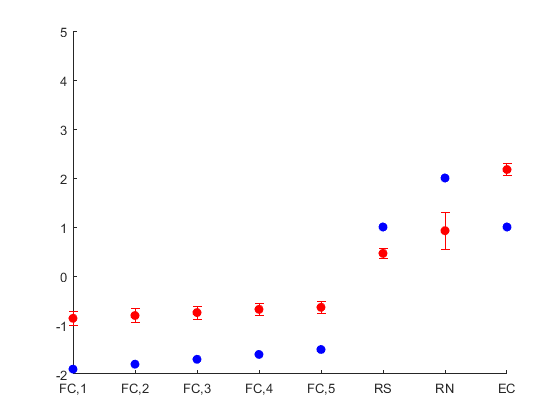}
  \caption{}
\end{subfigure}
\begin{subfigure}[b]{0.45\textwidth}
  \centering
  \includegraphics[width=\linewidth]{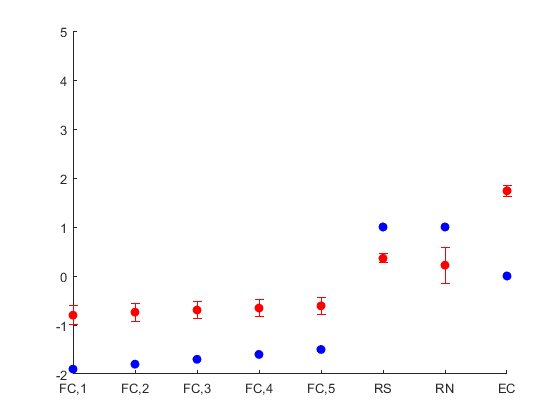}
  \caption{}
\end{subfigure}
\begin{subfigure}[b]{0.45\textwidth}
  \centering
  \includegraphics[width=\linewidth]{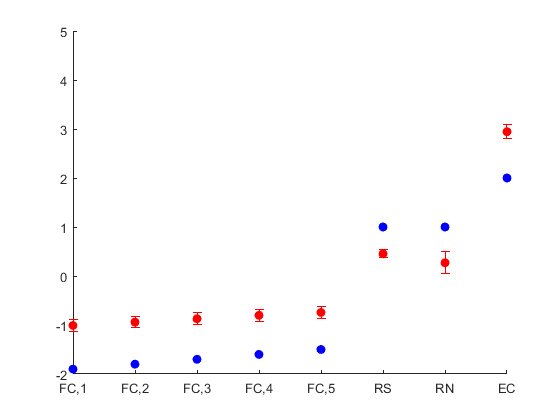}
  \caption{}
\end{subfigure}
\begin{subfigure}[b]{0.45\textwidth}
  \centering
  \includegraphics[width=\linewidth]{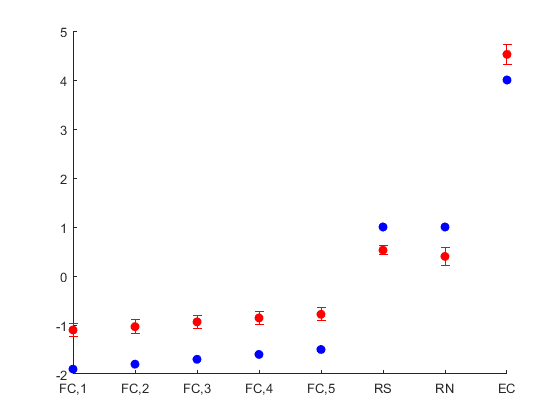}
  \caption{}
\end{subfigure}
\caption{Misspecified Discrete Time Model Estimates}
\label{fig:mis}
\floatfoot{Note: Blue dots are the true parameters from the continuous time model and red dots with bars are the estimates and 95\% confidence intervals from the estimated discrete time model.}
\end{figure}



\begin{table}[htbp]
\centering
\begin{threeparttable}
\caption{Misspecified Monte Carlo Results (Discrete Time Estimation of Continuous Time DGP)}
\label{table:mis:ctdt:dtdt}
\begin{tabular}{llccccc}
\toprule
        &                      & \multicolumn{4}{c}{Parameters}                                          \\
        \cline{3-6}
Exp. & Values  & $\theta_{\text{FC,1}}$ & $\theta_{\text{RS}}$ & $\theta_{\text{EC}}$ & $\theta_{\text{RN}}$ \\ \midrule
\textbf{1} & \textbf{True values} & \textbf{-1.9000} & \textbf{1.0000} & \textbf{1.0000} & \textbf{0.0000} \\
 & Correct & -1.8830 (0.4711) & 1.0383 (0.1952) & 1.0146 (0.3040) & 0.0747 (0.4094) \\
 & Misspecified & -0.9369 (0.1928) & 0.4604 (0.0762) & 2.3993 (0.1555) & -0.1135 (0.2300) \\
 & Bias & 0.9631 & -0.5396 & 1.3993 & -0.1135 \\
             \addlinespace[0.2cm]
\textbf{2} & \textbf{True values} & \textbf{-1.9000} & \textbf{1.0000} & \textbf{1.0000} & \textbf{1.0000} \\
 & Correct & -1.9558 (0.3573) & 1.0461 (0.1775) & 0.9880 (0.2527) & 1.1051 (0.4033) \\
 & Misspecified & -0.9221 (0.1444) & 0.4176 (0.0889) & 2.2785 (0.1083) & 0.2574 (0.3024) \\
 & Bias & 0.9779 & -0.5824 & 1.2785 & -0.7426 \\
             \addlinespace[0.2cm]
\textbf{3} & \textbf{True values} & \textbf{-1.9000} & \textbf{1.0000} & \textbf{1.0000} & \textbf{2.0000} \\
 & Correct & -1.9521 (0.3397) & 1.0416 (0.1953) & 0.9912 (0.2159) & 2.0869 (0.5427) \\
 & Misspecified & -0.8631 (0.1433) & 0.4615 (0.0999) & 2.1725 (0.1188) & 0.9247 (0.3744) \\
 & Bias & 1.0369 & -0.5385 & 1.1725 & -1.0753 \\
             \addlinespace[0.2cm]
\textbf{4} & \textbf{True values} & \textbf{-1.9000} & \textbf{1.0000} & \textbf{0.0000} & \textbf{1.0000} \\
 & Correct & -1.9443 (0.4427) & 0.9993 (0.1939) & 0.0068 (0.3146) & 0.9524 (0.4843) \\
 & Misspecified & -0.7994 (0.1909) & 0.3607 (0.0887) & 1.7383 (0.1084) & 0.2212 (0.3696) \\
 & Bias & 1.1006 & -0.6393 & 1.7383 & -0.7788 \\
             \addlinespace[0.2cm]
\textbf{5} & \textbf{True values} & \textbf{-1.9000} & \textbf{1.0000} & \textbf{2.0000} & \textbf{1.0000} \\
 & Correct & -1.9285 (0.3105) & 1.0209 (0.1811) & 1.9961 (0.2065) & 1.0190 (0.3562) \\
 & Misspecified & -1.0088 (0.1240) & 0.4522 (0.0797) & 2.9451 (0.1373) & 0.2705 (0.2210) \\
 & Bias & 0.8912 & -0.5478 & 0.9451 & -0.7295 \\
             \addlinespace[0.2cm]
\textbf{6} & \textbf{True values} & \textbf{-1.9000} & \textbf{1.0000} & \textbf{4.0000} & \textbf{1.0000} \\
 & Correct & -1.9921 (0.2694) & 1.0570 (0.1675) & 4.0473 (0.2497) & 1.0799 (0.2934) \\
 & Misspecified & -1.1024 (0.1418) & 0.5281 (0.0937) & 4.5233 (0.2036) & 0.3995 (0.1889) \\
 & Bias & 0.7976 & -0.4719 & 0.5233 & -0.6005    \\ 
\bottomrule
\end{tabular}
    \begin{tablenotes}
      \footnotesize
      \item Displayed values are means with standard deviations in parentheses.
    \end{tablenotes}
\end{threeparttable}
\end{table}  

We can also examine how this problem affects counterfactuals. Dynamic game estimation results are often used for estimating the impact of a policy affecting firms' entry-exit decisions. Suppose that the entry costs are one million dollars and the government is considering subsidizing firms by two hundred thousand dollars. In particular, the policy would reduce the fixed costs by \$200K. Since the correct model and the misspecified model have different estimates for entry costs and fixed costs, the policy will result in different steady state outcomes.

To compare the different counterfactual predictions, we change the parameter values according to the counterfactual policy, recalculate the equilibrium, and compare the results over 50,000 simulations.
Table~\ref{table:cf} presents the average number of active firms before and after subsidization. The first column is the average number of active firms before firms receiving the subsidy. Then, in Experiment 1, if firms receive the subsidy the continuous time model predicts that the number of active firms increase by 5.6\% but the misspecified discrete time model suggests 20.4\% increase in the number of active firms.
The prediction from the discrete time model is four times larger.
This overestimation of the effect when using the misspecified model also occurs for the other experiments.

\begin{table}[htbp]
    \caption{Average Number of Active Firms Before and After Subsidization}
    \label{table:cf}
\resizebox{\linewidth}{!}{\begin{tabular}{@{\extracolsep{4pt}}ccccccccc@{}} 
        \toprule
\multirow{3}{*}{Exp.} & \multicolumn{2}{c}{Before subsidization}                          & \multicolumn{4}{c}{After subsidization}                                     & \multicolumn{2}{c}{$\Delta$\#Active firms (\%)}      \\ \cline{2-3} \cline{4-7} \cline{8-9}
                      & \multirow{2}{*}{Mean} &
    \centering \multirow{2}{*}{(s.d.)} & \multicolumn{2}{c}{Correct} & \multicolumn{2}{c}{Misspecified} & \multirow{2}{*}{Correct} & \multirow{2}{*}{Misspecified} \\
                      \cline{4-5} \cline{6-7}
            
                      &                            &                         & Mean      & (s.d.)     & Mean         & (s.d.)        &                          &              
                      \\
                      \midrule
1          & 3.70          & (1.47)     & 3.91           & (1.33)     & 4.46             & (0.83)        & 5.6                    & 20.4                       \\
2          & 2.77          & (1.53)     & 2.96           & (1.51)     & 3.63             & (1.32)        & 6.7                    & 30.8                       \\
3          & 2.05          & (1.25)     & 2.20           & (1.27)     & 2.68             & (1.37)        & 7.6                    & 31.1                       \\
4          & 2.74          & (1.39)     & -              & -          & -                & -             & -                      & -                          \\
5          & 2.80          & (1.66)     & 3.20           & (1.58)     & 3.94             & (1.23)        & 14.2                   & 40.5                       \\
6          & 2.82          & (1.81)     & 3.67           & (1.48)     & 4.47             & (0.85)        & 30.1                   & 58.6             \\
\bottomrule 
\end{tabular}}
\end{table}

\section{Conclusion}

This paper introduced an NPL estimator for dynamic discrete choice models in continuous time, which we refer to as the CTNPL estimator.
We derived its properties, and demonstrated the performance in a series of Monte Carlo experiments involving a model with five heterogeneous firms.
Specifically, we first showed that the CTNPL estimator is consistent and asymptotically normal both with and without initial consistent and asymptotically normal CCP estimates.
Second, we presented a local convergence condition in the iterative CTNPL algorithm.
Researchers have documented problems regarding the convergence of the NPL algorithm in discrete time.
We showed that the algorithm always converges in continuous-time single agent models and that provided simulation evidence that convergence failures are much less likely to affect comparable continuous-time games.
Third, our Monte Carlo experiments based on those of \cite{AM07} showed that the CTNPL estimator is more robust than two-step estimators initialized from estimates or random draws for the CCPs.
Finally, our results highlight the potential for economically misleading estimates and counterfactuals when estimating discrete time models in cases where events in the data generating process are unfolding in continuous time.


\bibliographystyle{chicago}
\bibliography{bibliography}

\begin{appendix}

\section{Proofs}
\label{sec:proofs}

\subsection{Proof for Proposition \ref{prop:zero}}

We introduce notation from \citetalias{Arcidiacono2016}, prove two preliminary Lemmas, and then prove the zero Jacobian property in a single agent dynamic choice model in continuous time.

For choice $j$ in state $k$, let $v_{jk}=\psi_{jk}+V_{l(j,k)}$ denote an arbitrary choice-specific valuation and $v_k=(v_{0k}, \ldots, v_{J-1,k}) \in \mathbb{R}^J$ denote a $J$-vector of valuations in state $k$. Define $\tilde{v}_k$ as a $(J-1)$-vector of normalized valuations based on choice 0 on the $(J-1)$-dimensional space $\mathcal{V}=\{\tilde{v}_k \in \mathbb{R}^{J-1}: \tilde{v}_{0k}=0\}$. 

Now consider the mapping $\tilde{H}_k: \mathcal{V} \to \mathcal{P}_k$ where $\mathcal{P}_k \in \Delta^{J-1}$ is the space of all CCP vectors $\sigma_k$ in state $k$. By Proposition 1 of \cite{Hotz93}, $\tilde{H}_k$ is one-to-one with $\tilde{H}_k(\sigma_k) = \tilde{v}_k$ and invertible. 

\begin{lemma} Let $\sigma^0_k$ denote an arbitrary vector of $J-1$ choice probabilities in state $k$. 

\begin{equation} \frac{\partial}{\partial \sigma_k} \sum_j\sigma^0_{jk} e_{jk}(\theta, \sigma^0_k) = -\tilde{H}_k^{-1}(\sigma^0_k) \end{equation}

where $e_{jk}(\theta, \sigma)$ is the expected value of $\varepsilon_{jk}$ given that choice $j$ is optimal at state $k$. 
\end{lemma}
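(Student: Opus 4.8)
The plan is to route the computation through McFadden's social surplus (expected-max) function and exploit the Williams--Daly--Zachary gradient identity together with the Hotz--Miller inversion. First I would define, for the normalized valuation vector $\tilde v_k \in \mathcal{V}$ (with $\tilde v_{0k}=0$), the surplus function $W(\tilde v_k) = \E[\max_{j}\{\tilde v_{jk} + \varepsilon_{jk}\}]$ and recall the standard decomposition $W(\tilde v_k) = \sum_j \sigma_{jk}\tilde v_{jk} + \sum_j \sigma_{jk} e_{jk}$, which holds because $e_{jk}$ is the conditional mean of $\varepsilon_{jk}$ on the event that $j$ is optimal. Since each $e_{jk}$ depends on $(\theta,\sigma_k)$ only through the normalized valuation differences, the object $\sum_j \sigma_{jk} e_{jk}$ is invariant to the choice-0 normalization and can be written purely as a function of $\tilde v_k$. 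Rearranging gives the working identity $\sum_j \sigma_{jk} e_{jk} = W(\tilde v_k) - \sum_j \sigma_{jk}\tilde v_{jk}$, which I will treat as a composite function of $\sigma_k$ through $\tilde v_k = \tilde H_k^{-1}(\sigma_k)$.

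Second, I would differentiate the right-hand side with respect to the free valuations $\tilde v_{1k},\dots,\tilde v_{J-1,k}$. The Williams--Daly--Zachary theorem gives $\partial W/\partial \tilde v_{ik} = \sigma_{ik}$, while the product rule applied to $\sum_j \sigma_{jk}\tilde v_{jk}$ produces a term $\sum_j (\partial \sigma_{jk}/\partial \tilde v_{ik})\tilde v_{jk}$ together with the term $\sigma_{ik}$ coming from $\partial \tilde v_{jk}/\partial \tilde v_{ik}=\delta_{ij}$. The two $\sigma_{ik}$ contributions cancel, leaving the compact expression $\nabla_{\tilde v_k}\big(\sum_j \sigma_{jk} e_{jk}\big) = -(\nabla_{\tilde v_k}\sigma_k)'\,\tilde v_k$, where $\nabla_{\tilde v_k}\sigma_k$ is the Jacobian of the CCP map $\tilde H_k$.

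Third, I would pass from the derivative in $\tilde v_k$ to the derivative in $\sigma_k$ via the chain rule and the inverse function theorem. By Proposition 1 of \cite{Hotz93} the map $\tilde H_k$ is a diffeomorphism, so $\nabla_{\sigma_k}\tilde v_k = (\nabla_{\tilde v_k}\sigma_k)^{-1}$, and substituting yields $\nabla_{\sigma_k}\big(\sum_j \sigma_{jk} e_{jk}\big) = -[(\nabla_{\tilde v_k}\sigma_k)^{-1}]'(\nabla_{\tilde v_k}\sigma_k)'\,\tilde v_k = -\tilde v_k = -\tilde H_k^{-1}(\sigma_k^0)$, using $(A^{-1})'A' = I$. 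The main obstacle is careful bookkeeping rather than depth: I must keep both the weights $\sigma_{jk}$ and the conditional means $e_{jk}$ as functions of $\sigma_k$ (so the product rule generates the cancelling $\sigma_{ik}$ term), keep everything confined to the $(J-1)$-dimensional space $\mathcal{V}$ where the normalization $\tilde v_{0k}=0$ makes $\tilde H_k$ invertible, and justify the WDZ differentiation, which relies on the smoothness of $W$ guaranteed by the absolute continuity of $F$ in Assumption~\ref{a:private}. An equivalent and slicker route I would note is to recognize $-\sum_j \sigma_{jk}e_{jk}$ as the convex (Legendre--Fenchel) conjugate of $W$, from which $\nabla W^*(\sigma_k) = \tilde H_k^{-1}(\sigma_k)$ delivers the claim immediately.
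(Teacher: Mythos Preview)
Your proof is correct and rests on the same three ingredients the paper uses---the surplus (expected-max) function, the Williams--Daly--Zachary gradient identity, and the Hotz--Miller inversion---so in that sense the approaches coincide. The organization differs, however. The paper (following \cite{AM02}) keeps the conditional means $e_{jk}$ as abstract functions $W_{jk}(\tilde v)$, differentiates the product $\sigma^0\!\cdot W(\tilde H^{-1}(\sigma^0))$ directly by the product and chain rules, and then separately differentiates the surplus identity $S(\tilde v)=[1-i'\tilde H;\tilde H](v+W(\tilde v))$ to back out the unknown pieces and match terms. You instead isolate the target upfront via the decomposition $\sum_j\sigma_{jk}e_{jk}=W(\tilde v_k)-\sum_j\sigma_{jk}\tilde v_{jk}$ and differentiate that; WDZ then produces the cancelling $\sigma_{ik}$ terms immediately, and the inverse-function step collapses $(\nabla_{\tilde v}\sigma)^{-1\prime}(\nabla_{\tilde v}\sigma)'$ to the identity. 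Your route is shorter and avoids the explicit $[-i;I]$ bookkeeping; the Legendre--Fenchel remark that $-\sum_j\sigma_{jk}e_{jk}=W^*(\sigma_k)$ with $\nabla W^*=\tilde H_k^{-1}$ is a genuinely cleaner conceptual framing that the paper does not make explicit. One notational caution: be consistent about whether $\sigma_k$ denotes the full $J$-vector or the $(J-1)$ free probabilities when you invoke the inverse-function theorem, since $\tilde H_k$ is only invertible on the $(J-1)$-dimensional space $\mathcal V$.
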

\begin{proof}
Let $W_{jk}(\tilde{v})$ represent the expectation $e_{jk}$ as a function of $J-1$ normalized valuations. Since $\tilde{H}(\cdot)$ is invertible, we can write $e_{jk}(\sigma^0) = W(\tilde{H}_k^{-1}(\sigma^0))$. 

\begin{align*}
\sum_{j=1}^J \sigma_j^0 e_{jk} (\theta, \sigma^0)& = \sum_{j=1}^J \sigma_{jk}^0 W_{jk}(\tilde{H}^{-1}(\sigma^0)) \\
& = \sigma^0 \cdot W(\tilde{H}^{-1}(\sigma^0)) \quad \text{in matrix form}
\end{align*}

Differentiating with respect to $\sigma = (\sigma_1, \sigma_2, \ldots, \sigma_J)$, 
\[\frac{\partial}{\partial \sigma} \sigma^0\cdot W(\tilde{H}^{-1}(\sigma^0)) = \frac{\partial \sigma^0}{\partial \sigma}W(\tilde{v}) + \frac{\partial W(\tilde{v})}{\partial \tilde{H}^{-1}(\sigma)}\frac{\partial \tilde{H}^{-1}(\sigma)}{\partial \sigma}\frac{\partial \sigma}{\partial \sigma^0}\sigma^0 
\]
Each term becomes
\begin{enumerate}
\item $\displaystyle\frac{\partial \sigma^0}{\partial \sigma}W(\tilde{v}) = \frac{\partial (\sigma_1, \cdots, \sigma_{J-1})'}{\partial (\sigma_0, \cdots, \sigma_{J-1})'}W(\tilde{v})
= \begin{bmatrix} -1 & -1 & -1 & \cdots & -1 \\ 1& 0 & 0 & \cdots & 0 \\ 0 & 1 & 0 & \cdots & 0\\ \vdots & \vdots & \vdots& \ddots & \vdots \\ 0 & 0 & 0 & \cdots & 1 \end{bmatrix}W(\tilde{v}) = [-i; I]W(\tilde{v})$
\item $\displaystyle\frac{\partial W(\tilde{v})}{\partial \tilde{H}^{-1}(\sigma)}= \frac{\partial W(\tilde{v})}{\partial \tilde{v}}$
\item $\displaystyle \frac{\partial \tilde{H}^{-1}(\sigma)}{\partial \sigma} = \frac{\partial \tilde{v}}{\partial \tilde{H}(\tilde{v})} = \Big[\frac{\partial \tilde{H}(\tilde{v})}{\partial \tilde{v}}\Big]^{-1}$
\item $\displaystyle\frac{\partial \sigma}{\partial \sigma^0}\sigma^0 = [-i; I]\sigma^0 = \begin{bmatrix}-\sigma_1-\cdots - \sigma_{J-1} \\ \sigma_1 \\ \vdots \\ \sigma_{J-1} \end{bmatrix} = \begin{pmatrix}1-i'\sigma^0 \\ \sigma^0\end{pmatrix}$
\end{enumerate}
where $i$ is a $1\times (J-1)$ vector of ones and $I$ is $(J-1) \times (J-1)$ identity matrix. From 1--4, we get an analogous equation to (Ap.1) in \cite{AM02}.
\begin{equation}\label{eq:lemma}\frac{\partial}{\partial \sigma}\sum_{j=1}^J \sigma_{j}^0 e_{j} = [-i; I] W(\tilde{v}) + \Big[\frac{\partial \tilde{H}(\tilde{v})}{\partial \tilde{v}}\Big]^{-1} \frac{\partial W(\tilde{v})'}{\partial \tilde{v}}\begin{pmatrix}1-i'\sigma^0 \\ \sigma^0\end{pmatrix} \end{equation}

From the surplus function, 
\[S(\tilde{v}) = [1-i'\tilde{H}(\tilde{v}); \tilde{H}(\tilde{v})](v+W(\tilde{v})).\]
Differentiating with respect to $\tilde{v}$, 
\begin{align*}
\sigma=\frac{\partial S(\tilde{v})}{\partial \tilde{v}} & = \begin{bmatrix}-i'\frac{\partial \tilde{H}(\tilde{v})}{\partial \tilde{v}} \\ \frac{\partial \tilde{H}(\tilde{v})}{\partial \tilde{v}}\end{bmatrix} [v+W(\tilde{v})] +  \Big[\frac{\partial v}{\partial \tilde{v}}+\frac{\partial W(\tilde{v})}{\partial \tilde{v}} \Big]\begin{bmatrix}1-i'\tilde{H}(\tilde{v}) \\ \tilde{H}(\tilde{v}) \end{bmatrix} \\ 
& = \frac{\partial \tilde{H}(\tilde{v})}{\partial \tilde{v}}\begin{bmatrix} -i \\ I \end{bmatrix}[v+W(\tilde{v})] + \Big[\frac{\partial v}{\partial \tilde{v}}+\frac{\partial W(\tilde{v})}{\partial \tilde{v}} \Big]\begin{bmatrix}1-i'\sigma^0 \\ \sigma^0\end{bmatrix} \\ 
\Big[\frac{\partial \tilde{H}(\tilde{v})}{\partial \tilde{v}}\Big]^{-1}\sigma& = \begin{bmatrix} -i \\ I \end{bmatrix}[v+W(\tilde{v})]+\Big[\frac{\partial \tilde{H}(\tilde{v})}{\partial \tilde{v}}\Big]^{-1} \Big[\frac{\partial v}{\partial \tilde{v}}+\frac{\partial W(\tilde{v})}{\partial \tilde{v}} \Big]\begin{bmatrix}1-i'\sigma^0\\ \sigma^0\end{bmatrix} \\ 
& = \begin{bmatrix} -i \\ I \end{bmatrix}v+\Big[\frac{\partial \tilde{H}(\tilde{v})}{\partial \tilde{v}}\Big]^{-1} \frac{\partial v}{\partial \tilde{v}}\begin{bmatrix}1-i'\sigma^0\\ \sigma^0\end{bmatrix} + \begin{bmatrix}-i\\I\end{bmatrix} W(\tilde{v}) + \Big[\frac{\partial \tilde{H}(\tilde{v})}{\partial \tilde{v}}\Big]^{-1} \frac{\partial W(\tilde{v})'}{\partial \tilde{v}}\begin{bmatrix}1-i'\sigma^0 \\ \sigma^0\end{bmatrix} \\
& = \begin{bmatrix} -i \\ I \end{bmatrix}v+\Big[\frac{\partial \tilde{H}(\tilde{v})}{\partial \tilde{v}}\Big]^{-1} \begin{bmatrix}-i\\I\end{bmatrix}\begin{bmatrix}1-i'\sigma^0\\ \sigma^0\end{bmatrix} +\frac{\partial}{\partial \sigma}\sum_{j=1}^J \sigma_{j}^0 e_{j} \ \text{from \eqref{eq:lemma}}  \\
& = \begin{bmatrix} -i \\ I \end{bmatrix}v+\Big[\frac{\partial \tilde{H}(\tilde{v})}{\partial \tilde{v}}\Big]^{-1} \sigma + \frac{\partial}{\partial \sigma}\sum_{j=1}^J \sigma_j^0 e_j 
\end{align*}
Rearranging, we have the stated result:
\[\frac{\partial}{\partial \sigma}\sum_{j=1}^J \sigma_{j}^0 e_{j}(\theta, \sigma^0) = -\tilde{v} = -\tilde{H}^{-1}(\sigma^0).\]
\end{proof}

\begin{lemma}
Let $\sigma^0$ be an arbitrary set of conditional choice probabilities, and define the mapping 
\[G(\sigma^0, V) = [(\rho+\lambda)I-\lambda\Sigma(\sigma)-Q_0]^{-1}[\lambda \Sigma(\sigma)V+u + \lambda E(\theta, \sigma^0)]. \]
Let $\sigma_k^0$ be $(J-1)$ column vector of CCPs in $\sigma^0$ associated with state $k$. Then (i) $\partial G_l/ \partial \sigma_k^0=0$ for $l \neq k$ and (ii) $\partial G_k/\partial \sigma_k^0=[(\rho+\lambda)I-\lambda\Sigma(\sigma)-Q_0]^{-1}\big[\lambda[\tilde{v}-\tilde{H}^{-1}_k(\sigma^0)]]$.
\end{lemma}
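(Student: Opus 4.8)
The plan is to read $G(\sigma^0,V)$ as the single-agent policy valuation operator and to differentiate it componentwise in the state-$k$ belief block $\sigma_k^0$, holding the continuation value $V$ fixed. The whole argument rests on one structural feature that is special to the single-agent case: both the induced transition matrix $\Sigma(\sigma^0)$ and the expected-payoff vector $E(\theta,\sigma^0)$ are \emph{state-local}, meaning that row $l$ of $\Sigma(\sigma^0)$ and the entry $E_l(\theta,\sigma^0)=\sum_j\sigma^0_{jl}[\psi_{jl}+e_{jl}]$ depend on $\sigma^0$ only through the own-state block $\sigma_l^0$. This is exactly what breaks down in a game, where rivals' CCPs enter through the other $\Sigma_m$ terms, and it is therefore the reason the zero-Jacobian property is single-agent-specific.

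Given that observation, part (i) is almost immediate. I would write out the bracketed term $b(\sigma^0,V)=\lambda\Sigma(\sigma^0)V+u+\lambda E(\theta,\sigma^0)$ and note, by state-locality, that its $l$-th entry does not involve $\sigma_k^0$ whenever $l\neq k$; differentiating and carrying the (constant-coefficient) resolvent through then yields $\partial G_l/\partial\sigma_k^0=0$ for $l\neq k$. The only care needed is to hold the resolvent $[(\rho+\lambda)I-\lambda\Sigma-Q_0]^{-1}$ and the continuation $V$ fixed while differentiating the flow beliefs, so that no spurious cross-state terms are generated.

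For part (ii) I would split the diagonal ($l=k$) derivative of $b$ into two pieces. The first collects the deterministic choice-specific values, $\lambda\sum_j\sigma^0_{jk}(\psi_{jk}+V_{l(j,k)})=\lambda\sum_j\sigma^0_{jk}v_{jk}$; differentiating under the simplex constraint $\sigma^0_{0k}=1-\sum_{j\geq1}\sigma^0_{jk}$ replaces each coordinate by its difference from the baseline choice $0$, so the derivative is exactly the normalized-valuation vector $\lambda\tilde{v}_k$. The second is the expected-error term $\lambda\sum_j\sigma^0_{jk}e_{jk}(\theta,\sigma^0)$, whose derivative is $-\lambda\tilde{H}_k^{-1}(\sigma^0)$ directly by Lemma~1. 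Adding the two gives $\lambda[\tilde{v}_k-\tilde{H}_k^{-1}(\sigma^0)]$, and premultiplying by the resolvent delivers the stated expression.

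I expect the main obstacle to be purely bookkeeping: handling the simplex differentiation correctly (treating $\sigma^0_{0k}$ as dependent, which is what produces the normalized $\tilde{v}_k$ and mirrors the normalization already built into Lemma~1), and being explicit about what is held fixed — the continuation $V$ and the policy in the resolvent — so that the $\tilde{v}_k$ term arises cleanly from the $\Sigma(\sigma^0)V$ and $\psi$ contributions alone. Lemma~1 does all the heavy lifting for the stochastic part, leaving only the elementary normalized-difference calculation for the deterministic part. Finally, the payoff of the lemma is transparent: at a fixed point the beliefs are consistent with the values, so $\tilde{v}_k=\tilde{H}_k^{-1}(\sigma^*)$ and the diagonal derivative vanishes as well, which is precisely the mechanism that drives $\Psi_\sigma=0$ in Proposition~\ref{prop:zero}.
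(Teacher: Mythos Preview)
Your proposal mirrors the paper's proof essentially step for step: part (i) invokes state-locality of the bracketed term (the paper simply asserts it ``follows trivially''), and part (ii) differentiates the deterministic portion to produce $\lambda\tilde v_k$ and applies Lemma~1 to the expected-error portion to obtain $-\lambda\tilde H_k^{-1}(\sigma^0)$, then premultiplies by the resolvent. Your bundling of $\psi_{jk}$ with $V_{l(j,k)}$ into $v_{jk}$ before differentiating is in fact slightly cleaner than the paper's own intermediate line, which displays only the $\Sigma V$ and $e$ derivatives and suppresses the $\psi$-contribution.
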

\begin{proof}
(i) Given that $G_k$ does not depend on the probabilities for states different to $k$, (i) follows trivially. 
\\ (ii) Using the results from Lemma 1 and denoting $\tilde{v}_k$ as $(K-1)$ column vector of differenced values corresponding to state $k$, 
\begin{align*}
\frac{\partial G_k}{\partial \sigma_k^0} &=[(\rho+\lambda)I-\lambda\Sigma(\sigma)-Q_0]^{-1}\Big[ \lambda\frac{\partial \Sigma_k(\theta, \sigma)}{\partial \sigma_k^0} V_k + \frac{\partial}{\partial \sigma} \sum_{j=1}^J \sigma^0_{jk}e_{jk}  \Big]\\
& = [(\rho+\lambda)I-\lambda\Sigma(\sigma)-Q_0]^{-1}\Big[\lambda [-i; I] V_{l(j,k)} - \lambda \tilde{H}_k^{-1}(\sigma^0)\Big] \\
& = [(\rho+\lambda)I-\lambda\Sigma(\sigma)-Q_0]^{-1}\big[\lambda[\tilde{v}_k - \tilde{H}_k^{-1}(\sigma^0)\big]]
\end{align*}
\end{proof}

We present the proof for Proposition \ref{prop:zero} below.

\begin{proof}
Rearranging \eqref{eq:value} and using the value function operator,
\[\Upsilon(\theta, \sigma) = [(\rho+\lambda)I-\lambda\Sigma(\sigma)-Q_0]^{-1}[\lambda \Sigma(\sigma)\Upsilon(\theta, \sigma)+u + \lambda E(\sigma)]\]
Therefore, $\Upsilon(\theta, \sigma)=G(\sigma, \Upsilon(\theta, \sigma))$. Differentiating on both sides with respect to $\sigma$, 
\begin{align*}
\frac{\partial \Upsilon(\sigma)}{\partial \sigma} & = \frac{\partial G(\cdot)}{\partial \sigma}+\frac{\partial G(\cdot)}{\partial \Upsilon(\cdot)}\frac{\partial \Upsilon(\sigma)}{\partial \sigma} \\
\Big[I-\frac{\partial G(\cdot)}{\partial \Upsilon(\cdot)}\Big]\frac{\Upsilon(\sigma)}{\partial \sigma} & = \frac{\partial G(\cdot)}{\partial \sigma} \\
\frac{\partial \Upsilon(\sigma)}{\partial \sigma} & =\Big[I-\frac{\partial G(\cdot)}{\partial \Upsilon(\cdot)}\Big]^{-1} \frac{\partial G(\cdot)}{\partial \sigma} 
\end{align*}

By Lemma 2, $\partial G_l/\partial \sigma_k=0$ for $l \neq k$ and $\partial G_k/ \partial \sigma_k=[(\rho+\lambda)I-\lambda\Sigma(\sigma)-Q_0]^{-1}\big[\lambda[\tilde{v}_k-\tilde{H}^{-1}_k(\sigma)]\big]$. In the latter case, let $\tilde{\sigma}$ be the fixed point of $\Upsilon(\theta, \sigma)$. Then $\tilde{v}_k = \tilde{H}^{-1}(\Upsilon(\theta, \tilde{\sigma}))=\tilde{H}^{-1}(\tilde{\sigma})$. So at the fixed point $\tilde{v}_k = \tilde{H}_k^{-1}(\sigma)$ which means $\partial G_k/\partial \sigma_k=0$, and hence, $\partial G(\cdot)/\partial \sigma^0=0$. Then, at the fixed point $\sigma^0$, $\partial \Upsilon(\sigma^0)/\partial \sigma^0=0$. 

Since $\Gamma(v)$ is continuous, and $\Psi(\theta, \sigma) = \Gamma(\Upsilon(\theta, \sigma))$, $\Psi_\sigma=0$ at the fixed point $\sigma$. 
\end{proof}

\subsection{Proof of Proposition~\ref{prop:single}}

We provide the proof for Proposition~\ref{prop:single} for single agent models.
The proof proceeds along similar lines as the proof of Proposition 4 in \cite{AM02}.
Let $g(k_m, \theta, \sigma) = \sum_{n=1}^T \ln P_{k_{m,n-1},k_{m,n}}(\Delta;\Psi(\theta,\sigma))$ for $k_m = (k_{m,0},k_{m,1}, \dots, k_{m,T})$ denote the likelihood function for an individual market $m$ so that the sample log likelihood function can be written as $L_M(\theta,\sigma) = M^{-1} \sum_{m=1}^M g(x_m, \theta, \sigma)$.

\textbf{Consistency.} First, we prove the consistency of $\hat{\theta}^{\npliter}$.

\textit{Step 1. If $\hat{\sigma}^{\npliter-1}$ is consistent, then $L_M(\theta, \hat{\sigma}^{\npliter-1})$ converges a.s. and uniformly in $\theta$ to a deterministic function $L(\theta, \sigma^*)$.}

By Lemma 24.1 of \cite{Gourieroux1995}, we have that if (i) $L_M(\theta, \sigma)$ converges a.s. and uniformly in $(\theta, \sigma)$ to $L(\theta, \sigma)$; (ii) $L(\theta, \sigma)$ is uniformly continuous in $(\theta, \sigma)$; and (iii) $\{\hat{\sigma}^{\npliter-1}\}$ converges a.s. to $\sigma^*$; then $L_M(\theta, \hat{\sigma}^{\npliter-1})$ converges a.s. and uniformly in $\theta$ to $L(\theta, \sigma^*)$.

To establish condition (i), we invoke the uniform strong law of large numbers of \cite[][Theorem 2]{Jennrich1969}.  This requires that (1) The data are i.i.d.; (2) $\Theta$ is compact.; (3) $g(k, \theta, \sigma)$ is continuous at each $(\theta, \sigma)$ for each $k$ and a measurable function of $k$ for each $(\theta,\sigma)$; (4) $\lvert g(k,\theta,\sigma) \rvert \leq h(k)$ for all $k$ and $(\theta,\sigma)$ where $\E[h(k)]<\infty$.
Condition (1) follows from our maintained assumption; condition (2) follows from assumption (a); condition (3) follows from assumption (b) and the fact that the matrix exponential preserves continuity; and condition (4) follows because the transition probability matrix for an irreducible continuous-time Markov jump process has elements strictly bounded between 0 and 1.

To establish condition (ii), note as for (3) above that $\Psi(\theta, \sigma)$ is continuous in $\theta$ by condition (b) and the matrix exponential operation preserves continuity.
Finally, condition (iii) holds by assumption (e).

\textit{Step 2. If $\hat{\theta}^{\npliter-1}$ is consistent, then $\hat{\theta}^{\npliter} \equiv \argmax_{\theta \in \Theta}L_M(\theta, \hat{\sigma}^{\npliter-1})$ converges a.s. to $\theta^*$.}

Let $\hat{\sigma}^{\npliter} = \Psi(\argmax_{\theta \in \Theta}L_M(\theta, \hat{\sigma}^{\npliter-1}), \hat{\sigma}^{\npliter-1})$. By Property 24.2 in \cite[][p.392]{Gourieroux1995}, if (i) $L_M(\theta, \sigma)$ converges a.s. and uniformly in $\theta$ to $L(\theta, \sigma)$ and (ii) $L(\theta,\sigma)$ has a unique maximum in $\Theta$ at $\theta^*$, then $\hat{\theta}^{\npliter}\equiv \argmax_{\theta \in \Theta} L_M(\theta, \hat{\sigma}^{\npliter-1})$ converges a.s. to $\theta^*$. Condition (i) was established above in Step 1.  Condition (ii) follows from the identification assumption (d), which implies that $\theta^*$ is the only element of $\Theta$ that maximizes $L_M(\theta, \hat{\sigma}^{\npliter-1})$, and the information inequality.

\textit{Step 3. For $\npliter \geq 1$, if $\hat{\sigma}^{\npliter-1}\xrightarrow{\text{a.s.}}\sigma^*$, then $\hat{\sigma}^{\npliter} \xrightarrow{\text{a.s.}} \sigma^*$.}

By definition, $\hat{\sigma}^{\npliter} = \Psi(\hat{\theta}^{\npliter}, \hat{\sigma}^{\npliter-1})$. By Step 2, $\hat{\theta}^{\npliter} \xrightarrow{\text{a.s.}} \theta^*$. Since $\Psi$ is continuous in $(\theta, \sigma)$, by Slutsky's theorem, $\hat{\sigma}^{\npliter} \xrightarrow{\text{a.s.}} \sigma^*$.
By assumption (e), $\hat{\sigma}^0 \xrightarrow{\text{a.s.}} \sigma^*$, then by induction on step 2 and 3, we have the result.

\textbf{Asymptotic normality.}
Now, we prove the asymptotic normality of $\hat{\theta}$. To apply an induction argument, we show that if $(\hat{\theta}^{\npliter}, \hat{\sigma}^{\npliter-1}$) is $\sqrt{M}$-consistent and asymptotic normal, then $\sqrt{M}(\hat{\theta}^{\npliter}-\theta^*) \xrightarrow{\text{d}} \Normal(0, \Omega_{\theta \theta'}^{-1})$, and $\hat{\sigma}^{\npliter} \equiv \Psi(\hat{\theta}^{\npliter}, \hat{\sigma}^{\npliter-1})$ is also $\sqrt{M}$-consistent and asymptotic normal.

\textit{Step 1. $\sqrt{M}(\hat{\theta}^{\npliter}-\theta^*) \xrightarrow{\text{d}} \Normal(0, V^*)$ and $V^*$ only depends on the upper left $r \times r$ submatrix of $\Omega$ where $r$ is the dimension of the parameter vector $\theta$.}

First, assume that $\hat{\sigma}^{\npliter-1}$ is a consistent estimator of $\sigma^*$ such that $[\sqrt{M}\nabla_\theta L_M(\theta^*, \sigma^*); \sqrt{M}(\hat{\sigma}^{\npliter-1}-\sigma^*)']' \xrightarrow{d} \Normal(0, \Omega)$. Given assumptions (b) and (d) and the definition of $\hat{\theta}^{\npliter}$, the first order conditions of optimality imply that $\nabla_\theta L_M(\hat{\theta}^{\npliter}, \hat{\sigma}^{\npliter-1})=0$. Since $L_M(\theta, \sigma)$ is twice continuously differentiable, we can apply the mean value theorem:
\[0 = \nabla_\theta L_M(\theta^*, \sigma^*) + \nabla_{\theta \theta'}L_M(\theta^*, \sigma^*)(\hat{\theta}^{\npliter}-\theta^*) +\nabla_{\theta \sigma'}L_M(\theta^*, \sigma^*)(\hat{\sigma}^{\npliter-1}-\sigma^*) + o_p(1)\]
Note that  $\nabla_{\theta \theta'} L_M(\theta^*, \sigma^*) \xrightarrow{\text{p}} - \Omega_{\theta \theta'}$ and $\nabla_{\theta \sigma'} L_M(\theta^*, \sigma^*) \xrightarrow{\text{p}} -\Omega_{\theta \sigma'}$, where $\Omega_{\theta \sigma'}\equiv \E[\nabla_\theta s_m \nabla_{\sigma'} s_m]$ by Theorems 4.2.1 and 4.1.5 of \cite{Amemiya1985}. Rearranging the equation above,  
\[\sqrt{M}(\hat{\theta}^l - \theta^*) = \Omega_{\theta\theta'}^{-1} \big[\sqrt{M} \nabla_\theta L_M(\theta^*, \sigma^*) - \Omega_{\theta \sigma'}\sqrt{M}(\hat{\sigma}^{l-1}-\sigma^*)\big] + o_p(1) \]
which leads to 
\[\sqrt{M}(\hat{\theta}^{\npliter}-\theta^*) \xrightarrow{\text{d}} \Normal(0, V^*)\]
where
\[V^* = \Omega_{\theta \theta'}^{-1}[I; \Omega_{\theta \sigma'}] \Omega  [I;\Omega_{\theta \sigma'}]' \Omega_{\theta \theta'}^{-1}\]
Note that by the zero Jacobian property of $\Psi_\sigma$ in Proposition~\ref{prop:zero},
\begin{align*}
    \nabla_\sigma L_M(\theta^*, \Psi(\theta^*, \sigma^*)) & = \frac{\partial L_M(\cdot)}{\partial \Psi(\cdot)}\frac{\partial \Psi(\theta^*, \sigma^*)}{\partial \sigma} \\ 
    & = 0 
\end{align*}
By the information matrix equality,
\[\nabla_{\theta \sigma'} L(\theta^*, \sigma^*) = \E\Big[\frac{\partial L_M(\theta^*, \sigma^*)}{\partial \theta}\frac{\partial L_M(\theta^*, \sigma^*)}{\partial \sigma'}\Big] = 0\]
and this leads to $\Omega_{\theta \sigma}=0$.
Then, 
\begin{align*}
V^* &= \Omega_{\theta \theta'}^{-1}\Omega_{r\times r} \Omega_{\theta \theta'}^{-1} \\
 & = \Omega_{\theta \theta'}^{-1}
 \end{align*}
where $\Omega_{r \times r}$ is the upper left $r \times r$ submatrix of $\Omega$ and the second equality comes from the fact that 
\begin{align*}
    \Omega_{r \times r} & = \var(\sqrt{M} \nabla_\theta L_M(\theta^*, \sigma^*)) \\
    & = M \var(\nabla_\theta L_M(\theta^*, \sigma^*)) \\ 
    & = M \var(\frac{1}{M}\sum_{m=1}^M\nabla_\theta s_m) \\
    & = \frac{1}{M}\sum_{m=1}^M \var(\nabla_\theta s_m) \\
    & = \frac{1}{M}\sum_{m=1}^M \E[\nabla_\theta s_m \nabla_{\theta'} s_m] \\
    & = \Omega_{\theta \theta'}.
\end{align*}

\textit{Step 2. $[\sqrt{M}\nabla_\theta L_M(\theta^*, \sigma^*); \sqrt{M}(\hat{\sigma}^{\npliter}-\sigma^*)']'\xrightarrow{\text{d}} \Normal(0, \Omega^*)$, and the upper left $r \times r$ submatrices of $\Omega$ and $\Omega^*$ are identical.}

Define 
\[\omega_M^{\npliter}\equiv [\sqrt{M}\nabla_\theta L_M(\theta^*, \sigma^*); \sqrt{M}(\hat{\sigma}^{\npliter}-\sigma^*)'\Big]'. \]
We know that $\hat{\sigma}^{\npliter+1} = \Psi(\hat{\theta}^{\npliter+1}, \hat{\sigma}^{\npliter})$ and $\sigma^* = \Psi(\theta^*, \sigma^*)$. We apply mean value expansion to ($\hat{\sigma}^{\npliter+1} - \sigma^*$):
\begin{align*}
    \hat{\sigma}^{\npliter+1} - \sigma^* &= \Psi(\hat{\theta}^\npliter, \hat{\sigma}^{\npliter} )- \Psi(\theta^*, \sigma^*) \\ & = \Psi_\theta(\theta^*, \sigma^*)(\hat{\theta}^{\npliter+1} - \theta^*) + \Psi_\sigma(\theta^*, \sigma^*)(\hat{\sigma}^{\npliter}-\sigma^*)+o_p(1)
\end{align*}
Then, from the mean value expansion in step 1, we have $\sqrt{M}(\hat{\theta}^\npliter- \theta^*) = \Omega_{\theta \theta'}^{-1}\sqrt{M}\nabla_\theta L_M(\theta^*, \sigma^*) +o_p(1)$, so we substitute into the equation above,
\[\sqrt{M}(\hat{\sigma}^{\npliter+1} - \sigma^*) = \Psi_\theta(\theta^*, \sigma^*)\Omega_{\theta\theta'}^{-1}\sqrt{M}\nabla_\theta L_M(\theta^*, \sigma^*) + \Psi_\sigma(\theta^*, \sigma^*)\sqrt{M}(\hat{\sigma}^{\npliter}-\sigma^*)+o_p(1).\]
We rewrite in matrix form:
\begin{equation}
\label{eq:matrix}
\begin{bmatrix} \sqrt{M}\nabla_\theta L_M(\theta^*, \sigma^*) \\ \sqrt{M} (\hat{\sigma}^{\npliter+1} - \sigma^*)\end{bmatrix}
= \begin{bmatrix} I & 0 \\ \Psi_\theta(\theta^*, \sigma^*)\Omega_{\theta \theta'}^{-1} & \Psi_\sigma(\theta^*, \sigma^*) \end{bmatrix} 
\begin{bmatrix} \sqrt{M}\nabla_\theta L_M(\theta^*, \sigma^*) \\ \sqrt{M} (\hat{\sigma}^{\npliter}- \sigma^*)\end{bmatrix} + o_p(1).
\end{equation}
So, we have $\omega_M^{\npliter+1} = A \omega_M^{\npliter+1}$ where $A$ is the first matrix in the RHS of \eqref{eq:matrix}. It follows that if $\omega_M^{\npliter}$ is asymptotically normal, $\omega_M^{\npliter+1}$ is also asymptotically normal. From Step 1, we also know that $V^*$ depends on upper $r \times r$ submatrix of $\Omega$ and the upper-left $r\times r$ matrix of $A$ is the identity matrix from \eqref{eq:matrix}. Therefore, $r \times r$ submatrices of $\Omega$ and $\Omega^*$ are equal. Since $\sqrt{M}(\hat{\theta}^{\npliter}-\theta^*)=\Omega_{\theta \theta'}^{-1}\sqrt{M}\nabla_\theta L_M(\theta^*, \sigma^*) + o_p(1)$, and denoting $\sqrt{M}(\hat{\theta}^{\npliter}-\theta^*) \xrightarrow{d} \Normal(0, V^{**})$, we have
\begin{align*}
     V^{**} &= \Omega_{\theta \theta'}^{-1}\Omega^*_{r \times r} \Omega_{\theta \theta'}^{-1} \\ 
     & = \Omega_{\theta \theta'}^{-1}\Omega_{r \times r} \Omega_{\theta \theta'}^{-1} \\
     & = \Omega_{\theta\theta'}^{-1}.
\end{align*}

From step 1, we showed that when $\hat{\sigma}^{\npliter-1}$ is a consistent estimator of $\sigma^*$ s.t. $\omega^{\npliter} \xrightarrow{d} \Normal(0, \Omega)$, it follows that $\sqrt{M}(\hat{\theta}^\npliter- \theta^*) \xrightarrow{d} \Normal(0, \Omega_{\theta \theta'}^{-1})$. Step 2 proves that $\omega^{\npliter+1}\xrightarrow{d} \Normal(0, \Omega^*)$ which implies $\sqrt{M}(\hat{\theta}^{\npliter+1} - \theta^*) \xrightarrow{d} \Normal(0, \Omega_{\theta \theta'}^{-1})$.   Since we start from asymptotic normal estimator $\hat{\sigma}^0$ such that $\omega^0 \xrightarrow{d} \Normal(0, \Omega)$, this is true for all $\npliter \leq L-1$ by induction. 

\subsection{Proof of Proposition~\ref{prop:largegame}}

We now extend our results for single agent models to the case of games and prove the large sample properties of the CTNPL estimator stated in Proposition~\ref{prop:largegame}.

\begin{proof}
  \textbf{Consistency.} 
Before proving the main result, we first establish that for an arbitrary consistent estimator $\tilde{\sigma}$ for $\sigma^*$, the estimator $\tilde{\theta} = \arg\max_{\theta\in\Theta} L_M(\theta,\tilde\sigma)$ is strongly consistent for $\theta^*$.  The argument is the same as in Step 1 of the proof of Proposition~\ref{prop:single}. First, by the strong uniform law of large numbers of \cite{Jennrich1969} we have a.s. convergence of $L_M$ to $L$ uniformly in $(\theta,\sigma)$. Second, we use Lemma 24.1 of \cite{Gourieroux1995} to establish a.s. convergence of $L_M(\theta, \tilde{\sigma})$ to $\theta$ to $L(\theta, \sigma^*)$ uniformly in $\theta$. Finally, strong consistency of the extremum estimator $\tilde\theta$ follows from Property 24.2 of \cite[][p.392]{Gourieroux1995}.

Now, we prove the main result $(\hat{\theta}^\npliter, \hat{\sigma}^{\npliter-1}) \asto (\theta^*,\sigma^*)$ by induction. For $\npliter=1$, $\hat{\sigma}^0 \asto \sigma^*$ follows directly from assumption (e), that we begin with a strongly consistent estimator $\hat{\sigma}^0$ for $\sigma^*$. To show strong consistency of $\hat{\theta}^1$ starting from $\hat{\sigma}^0$, we take $\hat{\sigma}^0$ to be the consistent estimator $\tilde{\sigma}$ above which yields strong consistency of $\hat{\theta}^1$ for $\theta^*$ at iteration $l = 1$.

Now assume that $(\hat{\theta}^\npliter, \hat{\sigma}^{\npliter-1}) \asto (\theta^*,\sigma^*)$ for all $\npliter \leq L-1$.
Then, $\hat{\sigma}^l = \Psi(\hat{\theta}^\npliter, \hat{\sigma}^{\npliter-1}) \asto \Psi(\theta^*, \sigma^*) = \sigma^*$.
Using $\hat{\sigma}^l$ as our consistent estimator $\tilde\sigma$, as above, yields $\hat\theta^{l+1} \asto \theta^*$.
Since we start from a strongly consistent estimator $\hat{\sigma}^0$, we have strong consistency of $\hat{\theta}^{\npliter}$ for all $\npliter \leq \Npliter$ by induction.

\textbf{Asymptotic normality.}
As with consistency, we prove that
$\hat{\sigma}^{\npliter-1} \dto \Normal(0, \Sigma^{\npliter-1})$
and
$\hat{\theta}^\npliter \dto \Normal(0, \Omega_{\theta \theta'} + \Omega_{\theta \theta'}\Omega_{\theta \sigma}\Sigma^{\npliter-1}\Omega_{\theta \sigma'} \Omega_{\theta \theta'}^{-1}$ by induction.
We first show that the result holds for $\npliter = 1$.
As in assumption (e), let $\hat\sigma^0 = \frac{1}{M}\sum_{m=1}^M r_m$ denote the initial nonparametric estimator for $\sigma^*$.
Let $\nabla_\theta s_m^\npliter = \nabla_\theta \ln P_{k_{m,n-1}, k_m,n}(\Delta; \Psi(\theta, \hat{\sigma}^{\npliter-1}))$ denote the pseudo score at iteration $\npliter$ based on estimated CCPs from iteration $\npliter-1$.
We expand the first order condition $\nabla_\theta L_M(\hat{\theta}^1, \hat{\sigma}^0) = 0$ around $(\theta^*,\sigma^*)$:
\begin{equation*}
0 = \nabla_\theta L_M(\theta^*, \sigma^*) + \nabla_{\theta \theta'}L_M(\theta^*, \sigma^*)(\hat{\theta}^1-\theta^*) + \nabla_{\theta \sigma'} L_M(\theta^*, \sigma^*)(\hat{\sigma}^0 - \sigma^*) + o_p(1)
\end{equation*}
By the generalized information matrix equality, $\nabla_{\theta \theta'} L_M(\theta^*, \sigma^*) \xrightarrow{p} -\Omega_{\theta \theta'}$ and $\nabla_{\theta \sigma'} L_M(\theta^*, \sigma^*) \xrightarrow{p} - \Omega_{\theta \sigma'}$. Then, 
\begin{equation}\label{eq:asym0}\sqrt{M}(\hat{\theta}^1 - \theta^*) = \Omega_{\theta \theta'}^{-1}\Big\{-\Omega_{\theta \sigma'} \Big(\frac{1}{\sqrt{M}} \sum_{m=1}^M \nabla_\sigma r_m \Big) + \Big(\frac{1}{\sqrt{M}}\sum_{m=1}^M \nabla_\theta s_m^0\Big)\Big\}+o_p(1).\end{equation}

From condition (b), $\Psi(\theta, \sigma)$ is continuously differentiable and all elements of $\exp(Q(\theta, \sigma))$ are strictly bounded between 0 and 1 since the Markov chain is irreducible. Then, by the generalized information matrix equality, $\E[\nabla_\sigma r_m \nabla_{\theta'} s_m^0] = 0$.%
Then,
\[\frac{1}{\sqrt{M}}\Big(\sum_{m=1}^M \nabla_\theta s_m^0 \Big) - \Omega_{\theta \sigma'} \Big(\frac{1}{\sqrt{M}} \sum_{m=1}^M \nabla_\sigma r_m\Big) \xrightarrow{d} \Normal(0, \Omega_{\theta \theta'}+ \Omega_{\theta \sigma'}\Sigma^0 \Omega_{\theta \sigma'}').\]
Substituting into equation (\ref{eq:asym0}), we have asymptotic normality for $\hat{\theta}^1$. 
\[\sqrt{M}(\hat{\theta}^1 - \theta^*) \xrightarrow{d}\Normal(0, \Omega_{\theta \theta'}^{-1}+\Omega_{\theta\theta'}^{-1}\Omega_{\theta \sigma'}\Sigma^0 \Omega_{\theta \sigma'}'\Omega_{\theta \theta'}^{-1}).\]

Similarly, we can expand the first order condition $\nabla_\sigma L_M(\hat{\theta}^1, \hat{\sigma}^0) = 0$ around $(\theta^*,\sigma^*)$:
\begin{equation*}
0 = \nabla_\theta L_M(\theta^*, \sigma^*) + \nabla_{\theta \theta'}L_M(\theta^*, \sigma^*)(\hat{\theta}^1-\theta^*) + \nabla_{\theta \sigma'} L_M(\theta^*, \sigma^*)(\hat{\sigma}^0 - \sigma^*) + o_p(1)
\end{equation*}

Now we assume that $\sqrt{M}(\hat{\sigma}^{\npliter-1} - \sigma^*) \xrightarrow{d} \Normal(0, \Sigma^{\npliter-1})$ and $\sqrt{M}(\hat{\theta}^\npliter - \theta^*) \xrightarrow{d}\Normal(0, \Omega_{\theta \theta'}^{-1}+\Omega_{\theta\theta'}^{-1}\Omega_{\theta \sigma'}\Sigma^{\npliter-1} \Omega_{\theta \sigma'}'\Omega_{\theta \theta'}^{-1})$ for some $\npliter \leq \Npliter-1$.
We can consider $\hat{\theta}^{\npliter+1}$ as a two step estimator based on the moment equation $\E[\nabla_\theta L_M(\theta, \hat{\sigma}^{\npliter})]=0$. We use the preliminary estimate $\hat{\sigma}^{\npliter}$ of $\sigma^*$ from the previous iteration based on the moment equation, $\E[\Psi(\hat{\theta}^{\npliter},\hat{\sigma}^{\npliter-1})-\sigma]=0$. We can write the moment conditions as $\tilde{g}(\theta, \sigma)$. 
\[
  \tilde{g}(\theta, \sigma) = \begin{bmatrix}
    g(\theta, \sigma) \\
    h(\sigma)
  \end{bmatrix} = \begin{bmatrix}
    \nabla_\theta L_M(\theta, \sigma) \\
    \Psi(\hat\theta^{\npliter}, \hat{\sigma}^{\npliter-1}) - \sigma
  \end{bmatrix}.
\]
Applying the generalized information matrix inequality,
\[\E[\nabla_\theta \tilde{g}(\theta,\sigma)] = -\E[\tilde{g}(\theta, \sigma) \nabla_\theta s_m^{\npliter+1 \prime}] \]

This results in $\E[\nabla_\theta h(\sigma)]=-\E[\nabla_\theta (\Psi_m(\hat{\theta}^{\npliter}, \sigma)- \sigma^*)\nabla_\theta s_m^{\npliter+1 \prime}]=0$. As before, we expand the first moment condition $\nabla_\theta L_M(\hat{\theta}^{\npliter+1}, \hat{\sigma}^{\npliter})=0$ using the intermediate value theorem.
\begin{equation}
\label{eq:normality2} 0= \nabla_\theta L_M(\theta^*, \sigma^*) + \nabla_{\theta \theta'}L_M(\theta^*, \sigma^*)(\hat{\theta}^{\npliter+1}-\theta^*) + \nabla_{\theta \sigma'} L_M(\theta^*, \sigma^*)(\hat{\sigma}^\npliter - \sigma^*) + o_p(1).
\end{equation}
Then, $\sqrt{M}(\hat{\theta}^{\npliter+1}-\theta^*)$ can be written as 
\begin{equation}
\label{eq:asym1}
\sqrt{M}(\hat{\theta}^{\npliter+1} - \theta^*) = \Omega_{\theta \theta'}^{-1}\left\{-\Omega_{\theta \sigma'} \sqrt{M}\left(\hat\sigma^{\npliter} - \sigma^*\right) + \left(\frac{1}{\sqrt{M}}\sum_{m=1}^M \nabla_\theta s_m^{\npliter+1}\right)\right\}+o_p(1)
\end{equation}
where $\sqrt{M}(\hat{\sigma}^{\npliter}-\sigma^*) \xrightarrow{d} \Normal(0, \Sigma^{\npliter})$.
Then, from equation (\ref{eq:asym1}),
\begin{equation*}
\sqrt{M}(\hat{\theta}^{\npliter+1} - \theta^*) \xrightarrow{d}\Normal(0, \Omega_{\theta \theta'}^{-1}+\Omega_{\theta\theta'}^{-1}\Omega_{\theta \sigma'}\Sigma^{\npliter} \Omega_{\theta \sigma'}'\Omega_{\theta \theta'}^{-1}).
\end{equation*}
The recursive expression for $\Sigma^l$ stated in the proposition follows by expanding $\hat\sigma^\npliter - \Psi(\hat\theta^\npliter, \hat\sigma^{\npliter-1})$ around $(\theta^*, \sigma^*)$ to obtain
\begin{equation*}
\hat\sigma^\npliter - \sigma^* = \nabla_\theta \Psi^* (\hat\theta^\npliter - \theta^*) + \nabla_\sigma \Psi^* (\hat\sigma^{\npliter-1} - \sigma^*) + o_p(1),
\end{equation*}
and using the asymptotic normality of both $\hat\theta^\npliter$ and $\hat\theta^\npliter$ from the inductive hypothesis.
\end{proof}

\subsection{Proof of Proposition~\ref{prop:largegame2}}

We omit the detailed proof of Proposition~\ref{prop:largegame2} as it is essentially the same as the proof in the Appendix of \cite{AM07}, since the CTNPL operator has the same required properties as the NPL operator. Changing $Q_0(\theta,P)$ to $L(\theta, \sigma)$ and $P$ to $\sigma$ will be sufficient to prove an analogous proposition for the CTNPL estimator.










\subsection{Proof of Proposition~\ref{convergence}}

We first prove an auxiliary proposition needed for convergence condition. This is analogous to Proposition 7 in \cite{KS12}.

\begin{proposition}\label{prop:update}
Suppose that Assumption \ref{convergence assumption} holds. Then, there exists a neighborhood $\mathcal{N}_1$ of $\sigma^*$ such that $\hat{\theta}^\npliter - \hat{\theta} = O(||\hat{\theta}^{\npliter-1}-\hat{\theta}||)$ a.s. and $\hat{\sigma}^\npliter - \hat{\sigma} = M_{\Psi_\theta} \Psi_{\sigma}^*(\hat{\sigma}^{\npliter-1} - \hat{\sigma}) + O(M^{-1/2}||\hat{\sigma}^{\npliter-1}-\hat{\sigma}|| +  ||\hat{\sigma}^{\npliter-1} - \hat{\sigma}||^2)$ a.s. uniformly in $\hat{\sigma}^{\npliter-1} \in \mathcal{N}_1$. 
\end{proposition}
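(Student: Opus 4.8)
The plan is to linearize one full CTNPL iteration around the NPL fixed point $(\hat{\theta}, \hat{\sigma})$ and to identify the leading coefficient of the resulting recursion as $M_{\Psi_\theta}\Psi_\sigma$. Recall that a single iteration sends $\hat{\sigma}^{\npliter-1}$ to $\hat{\sigma}^{\npliter} = \Psi(\hat{\theta}^{\npliter}, \hat{\sigma}^{\npliter-1})$ with $\hat{\theta}^{\npliter} = \argmax_{\theta \in \Theta} L_M(\theta, \hat{\sigma}^{\npliter-1})$, i.e.\ $\hat{\sigma}^{\npliter} = \phi_M(\hat{\sigma}^{\npliter-1})$, while the fixed point satisfies $\hat{\sigma} = \phi_M(\hat{\sigma})$ and $\hat{\theta} = \argmax_\theta L_M(\theta, \hat{\sigma})$. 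Assumption~\ref{convergence assumption} supplies exactly what the expansions require: part (c) makes $\nabla_{\theta\theta'} L_M$ invertible near the fixed point, and part (b) gives three continuous derivatives of $\Psi$, so that the second-order Taylor remainders are genuinely $O(\|\cdot\|^2)$.

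For the parameter step, I would start from the first-order conditions $\nabla_\theta L_M(\hat{\theta}^{\npliter}, \hat{\sigma}^{\npliter-1}) = 0$ and $\nabla_\theta L_M(\hat{\theta}, \hat{\sigma}) = 0$. Differencing these and applying the mean value theorem in both arguments yields $0 = \nabla_{\theta\theta'} L_M (\hat{\theta}^{\npliter} - \hat{\theta}) + \nabla_{\theta\sigma'} L_M (\hat{\sigma}^{\npliter-1} - \hat{\sigma}) + r$, with the Hessians evaluated at intermediate points and $r$ of second order. Solving for the first term gives $\hat{\theta}^{\npliter} - \hat{\theta} = -[\nabla_{\theta\theta'} L_M]^{-1}\nabla_{\theta\sigma'} L_M (\hat{\sigma}^{\npliter-1} - \hat{\sigma}) + O(\|\hat{\sigma}^{\npliter-1} - \hat{\sigma}\|^2)$, which in particular establishes that the $\theta$-update error is $O(\|\hat{\sigma}^{\npliter-1} - \hat{\sigma}\|)$, i.e.\ governed by the lagged belief error, as in the first claim.

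For the CCP step, I would substitute this expansion into a first-order expansion of $\Psi(\hat{\theta}^{\npliter}, \hat{\sigma}^{\npliter-1}) - \Psi(\hat{\theta}, \hat{\sigma})$, obtaining $\hat{\sigma}^{\npliter} - \hat{\sigma} = [\Psi_\theta \nabla_\sigma \tilde{\theta}_M + \Psi_\sigma](\hat{\sigma}^{\npliter-1} - \hat{\sigma}) + O(\|\hat{\sigma}^{\npliter-1} - \hat{\sigma}\|^2)$, where $\nabla_\sigma \tilde{\theta}_M = -[\nabla_{\theta\theta'} L_M]^{-1}\nabla_{\theta\sigma'} L_M$ is the implicit derivative from the $\theta$-step. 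The key algebraic simplification is to replace the sample Hessians by their population limits: by the generalized information matrix equality invoked in Proposition~\ref{prop:largegame}, $\nabla_{\theta\theta'} L_M \asto -\Omega_{\theta\theta'}$ and $\nabla_{\theta\sigma'} L_M \asto -\Omega_{\theta\sigma'}$, so that $\nabla_\sigma \tilde{\theta}_M \to -\Omega_{\theta\theta'}^{-1}\Omega_{\theta\sigma'}$. Using the representations $\Omega_{\theta\theta'} = \Psi_\theta' \Delta_\sigma \Psi_\theta$ and $\Omega_{\theta\sigma'} = \Psi_\theta' \Delta_\sigma \Psi_\sigma$ from the text, the bracketed coefficient collapses to $\Psi_\sigma - \Psi_\theta(\Psi_\theta' \Delta_\sigma \Psi_\theta)^{-1}\Psi_\theta' \Delta_\sigma \Psi_\sigma = M_{\Psi_\theta}\Psi_\sigma$, precisely the matrix appearing in the spectral-radius condition of Proposition~\ref{convergence}.

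The main obstacle is the bookkeeping that combines the two distinct error sources into the stated remainder $O(M^{-1/2}\|\hat{\sigma}^{\npliter-1} - \hat{\sigma}\| + \|\hat{\sigma}^{\npliter-1} - \hat{\sigma}\|^2)$, uniformly over $\hat{\sigma}^{\npliter-1}$ in a fixed neighborhood $\mathcal{N}_1$ and almost surely. The quadratic term is routine, arising from the second-order Taylor remainders controlled by the bounded third derivatives in Assumption~\ref{convergence assumption}(b). The $O(M^{-1/2}\|\cdot\|)$ term is more delicate: it reflects that the sample Jacobian coefficient $\Psi_\theta \nabla_\sigma \tilde{\theta}_M + \Psi_\sigma$ differs from its population limit $M_{\Psi_\theta}\Psi_\sigma$ by the sampling error in the Hessians, which is $O_p(M^{-1/2})$ and multiplies $(\hat{\sigma}^{\npliter-1} - \hat{\sigma})$. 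I would establish this uniformly via a strong uniform law of large numbers for the second derivatives (as in the consistency argument of Proposition~\ref{prop:single}) together with the almost-sure $\sqrt{M}$-rate, shrinking $\mathcal{N}_1$ if necessary so that all intermediate evaluation points stay inside the region where Assumption~\ref{convergence assumption} applies. With Proposition~\ref{prop:update} in hand, Proposition~\ref{convergence} follows by a contraction argument: when $\rho(M_{\Psi_\theta}\Psi_\sigma) < 1$ the linearized recursion is a contraction in a suitable norm, so the higher-order and $O(M^{-1/2})$ terms are dominated and $\hat{\sigma}^{\npliter} \to \hat{\sigma}$ almost surely.
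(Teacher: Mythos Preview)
Your approach is essentially the same as the paper's: linearize the first-order condition for $\hat{\theta}^{\npliter}$ around $(\hat{\theta},\hat{\sigma})$ via the mean value theorem, substitute into a Taylor expansion of $\Psi$, and then replace sample Hessians by their population limits using $\sqrt{M}$-consistency and the information-matrix equalities, so that the leading coefficient collapses to $M_{\Psi_\theta}\Psi_\sigma$. The remainder bookkeeping you outline also matches.

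One point that deserves more care than ``shrinking $\mathcal{N}_1$ if necessary'' is ensuring that $\hat{\theta}^{\npliter}$, which is a \emph{global} maximizer over $\Theta$, actually lies in the neighborhood where your expansions are valid. This does not follow from merely restricting $\hat{\sigma}^{\npliter-1}$; the paper devotes a separate step to it, using an identification argument: define the gap $\Delta = L(\theta^*,\sigma^*) - \sup_{\theta \notin \mathcal{N}_\theta} L(\theta,\sigma^*) > 0$, then combine uniform convergence of $L_M$ to $L$ with continuity of $L(\theta,\cdot)$ to show that for $\hat{\sigma}^{\npliter-1}$ close enough to $\sigma^*$ the maximizer cannot escape $\mathcal{N}_\theta$. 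Without this, the mean-value expansion at intermediate points $(\bar{\theta},\bar{\sigma})$ is not justified.
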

\begin{proof}
For $\varepsilon>0$, define a neighborhood $\mathcal{N}(\varepsilon) = \{(\theta, \sigma): ||\theta - \theta^*|| + ||\sigma - \sigma^*|| < \varepsilon\}$. Then, there exists $\varepsilon_1>0$ such that $\mathcal{N}(\varepsilon_1) \subset \mathcal{N}$
 and $\sup_{(\theta, \sigma) \in \mathcal{N}(\varepsilon_1)}||\nabla_{\theta \theta'}L(\theta, \sigma)^{-1}||<\infty$ since $\nabla_{\theta \theta'}L(\theta, \sigma)$ is continuous and $\nabla_{\theta \theta'} L(\theta^*, \sigma^*)$ is nonsingular. 
 
 We assume that $(\hat{\theta}^{\npliter}, \hat{\sigma}^{\npliter-1}) \in \mathcal{N}(\varepsilon_1)$ which we will show in the end of this proof. 
 
 \textit{Step 1.} We first prove the first statement: $\hat{\theta}^\npliter - \hat{\theta} = O(||\hat{\theta}^{\npliter-1}-\hat{\theta}||)$ a.s.
 
 From the first step in NPL iteration, we know that $\nabla_\theta L_M(\hat{\theta}^\npliter, \hat{\sigma}^{\npliter-1})=0$. Applying the mean value theorem around $(\hat{\theta}, \hat{\sigma})$, we have 
\begin{equation}\label{step1} 0 = \nabla_{\theta \theta'}L_M(\bar{\theta}, \bar{\sigma})(\hat{\theta}^\npliter-\hat{\theta}) + \nabla_{\theta \sigma'} L_M(\bar{\theta}, \bar{\sigma})(\hat{\sigma}^{\npliter-1} - \hat{\sigma})+o_p(1).\end{equation}

where $(\bar{\theta}, \bar{\sigma})$ lie between $(\hat{\theta}^\npliter, \hat{\sigma}^{\npliter-1})$ and $(\hat{\theta}, \hat{\sigma})$. 

Then, we can rewrite 
\begin{equation}
\label{eq:theta}
\hat{\theta}^\npliter - \hat{\theta}  = - \nabla_{\theta \theta'} L_M(\bar{\theta}, \bar{\sigma})^{-1}\nabla_{\theta \sigma'}L_M(\bar{\theta}, \bar{\sigma})(\hat{\sigma}^{\npliter-1} - \hat{\sigma})+o_p(1).
\end{equation}
Note that (i) $(\bar{\theta}, \bar{\sigma}) \in \mathcal{N}(\varepsilon_1)$ from assumption and $(\bar{\theta}, \bar{\sigma}) \in \mathcal{N}$ since $\mathcal{N}(\varepsilon_1) \subset \mathcal{N}$, and (ii) $\sup_{(\theta, \sigma) \in \mathcal{N}(\varepsilon_1)}||\nabla_{\theta\theta'}L_M(\theta, \sigma)^{-1}\nabla_{\theta \sigma'} L_M(\theta, \sigma)|| = O(1)$ a.s. since $\sup_{(\theta, \sigma) \in \mathcal{N}(\varepsilon_1)}||\nabla_{\theta \theta'}L(\theta, \sigma)^{-1}|| < \infty$ and $\sup_{(\theta, \sigma) \in \mathcal{N}}||\nabla^2L_M(\theta, \sigma) - \nabla^2L(\theta, \sigma)|| = o(1)$ by general uniform convergence.\footnote{(i) $\Theta$ is compact implies total boundedness. (ii) $\nabla^2 L_M(\theta, \sigma) \to \nabla^2 L(\theta, \sigma) \ a.s.\ \forall \theta \in \Theta$. (iii) $\nabla^2L(\theta, \sigma)$ is a nonrandom function that is uniformly continuous in $\theta \in \Theta$, and $|\nabla^2L_M(\theta, \sigma) - \nabla^2 L_M(\theta', \sigma')|\leq |\nabla^2L_M(\theta, \sigma) - \nabla^2 L_M(\theta^*, \sigma^*)|+|\nabla^2L_M(\theta', \sigma') - \nabla^2 L_M(\theta^*, \sigma^*)| \leq 2\delta \leq \delta'd((\theta, \sigma), (\theta', \sigma')) \ \text{for some small} \ \delta \ \text{and} \ \delta', \ \forall \theta',\theta \in \mathcal{N}$ a.s. since $d(\cdot)<\varepsilon$ and $L_M(\theta, \sigma)$ is continuous.  Then, by Theorem 2 and Lemma 1 in \cite{Andrews}, the result follows.}

Then, we have the first result: $\hat{\theta}^{\npliter} - \hat{\theta} = O(||\hat{\sigma}^{\npliter-1} - \hat{\sigma}||)$ a.s.

\textit{Step 2.} We now prove $\hat{\sigma}^\npliter - \hat{\sigma} = M_{\Psi_\theta} \Psi_{\sigma}^*(\hat{\sigma}^{\npliter-1} - \hat{\sigma}) + O(M^{-1/2}||\hat{\sigma}^{\npliter-1}-\hat{\sigma}|| - ||\hat{\sigma}^{\npliter-1} + \hat{\sigma}||^2)$ a.s. uniformly in $\hat{\sigma}^{\npliter-1} \in \mathcal{N}_1$. 

We use the Taylor expansion, root-M consistency of $(\hat{\theta}, \hat{\sigma})$, and the information matrix equality for the results below:

\begin{align}\label{eq:step2}
    \nabla_{\theta\theta'}L_M(\hat{\theta}, \hat{\sigma}) & = - \Omega_{\theta\theta'} + O(M^{-1/2})\\
    \nabla_{\theta\sigma'}L_M(\hat{\theta}, \hat{\sigma}) & = - \Omega_{\theta\sigma'} + O(M^{-1/2})\nonumber \\
    \nabla_{\theta} \Psi(\hat{\theta}, \hat{\sigma}) & = \Psi_{\theta}^* + O(M^{-1/2}) \nonumber \\
    \nabla_{\sigma}\Psi(\hat{\theta}, \hat{\sigma}) & = \Psi_{\sigma}^* + O(M^{-1/2}) \nonumber
\end{align}

We use the second step of NPL estimation $\hat{\sigma}^\npliter = \Psi(\hat{\theta}^\npliter, \hat{\sigma}^{\npliter-1})$ and expand twice around $(\hat{\theta}, \hat{\sigma})$. 
\begin{align}\label{eq:sigma} 
\hat{\sigma}^\npliter - \hat{\sigma} & = \nabla_\theta \Psi(\hat{\theta}, \hat{\sigma})(\hat{\theta}^{\npliter}-\hat{\theta}) + \nabla_\sigma \Psi(\hat{\theta}, \hat{\sigma})(\hat{\sigma}^{\npliter-1}-\hat{\sigma}) + O(||\hat{\sigma}^{\npliter-1} - \hat{\sigma}||^2)  \\ 
& = \Psi_\theta^*(\hat{\theta}^{\npliter} - \hat{\theta}) + \Psi_\sigma^*(\hat{\sigma}^{\npliter-1} - \hat{\sigma}) + O(||\hat{\sigma}^{\npliter-1} - \hat{\sigma}||^2) +O(M^{-1/2}||\hat{\sigma}^{\npliter-1} - \hat{\sigma}||). \nonumber
\end{align}
where the first equality follows from Step 1 and $\sup_{(\theta, \sigma)\in \mathcal{N}(\varepsilon_1)}\nabla^3 \Psi(\theta, \sigma)<\infty$ in assumption (b) and the second equality comes from \eqref{eq:step2}.

We expand $\nabla_{\theta \theta'}L_M(\bar{\theta},\bar{\sigma})$ around ($\hat{\theta}, \hat{\sigma})$. Since $||\bar{\theta} - \hat{\theta}|| \leq ||\hat{\theta}^{\npliter} -\hat{\theta}||$, $||\bar{\sigma} - \hat{\sigma}|| \leq ||\hat{\sigma}^{\npliter-1}-\hat{\sigma}||$ and $\hat{\theta}^{\npliter-1} - \hat{\theta} = O(||\sigma^{\npliter-1}-\hat{\sigma}||)$ from Step 1, we have $\nabla_{\theta\theta'}L_M(\bar{\theta}, \bar{\sigma}) = \nabla_{\theta \theta'}L_M(\hat{\theta}, \hat{\sigma}) + O(||\hat{\sigma}^{\npliter-1} - \hat{\sigma}||)$ a.s. Using \eqref{eq:step2} and repeating the similar process for $\nabla_{\theta \sigma'}L_M(\bar{\theta}, \bar{\sigma})$, 
\begin{align*}
\nabla_{\theta \theta'} L_M(\bar{\theta}, \bar{\sigma}) & = - \Omega_{\theta \theta'} + O(M^{-1/2}) + O(||\hat{\sigma}^{\npliter-1}- \hat{\sigma}||) \\
\nabla_{\theta \sigma'} L_M(\bar{\theta}, \bar{\sigma}) & = - \Omega_{\theta \sigma'} + O(M^{-1/2}) + O(||\hat{\sigma}^{\npliter-1}- \hat{\sigma}||)
\end{align*}
Then, applying \eqref{eq:theta},
\[\hat{\theta}^{\npliter}-\hat{\theta} = - \Omega_{\theta \theta'}^{-1} \Omega_{\theta\sigma'}(\hat{\sigma}^{\npliter-1} - \hat{\sigma}) + O(M^{-1/2}||\hat{\sigma}^{\npliter-1} - \hat{\sigma}|| + ||\hat{\sigma}^{\npliter-1}-\hat{\sigma}||^2).\]
Substituting the equation above into \eqref{eq:sigma}, we have the result.

\textit{Step 3.} Now we need to show that $(\hat{\theta}^{\npliter}, \hat{\sigma}^{\npliter-1}) \in \mathcal{N}(\varepsilon_1)$. We first prove that $||\hat{\theta}^{\npliter} - \hat{\theta}^0||< \varepsilon/2$ and then, show that $||\hat{\sigma}^{\npliter-1} - \hat{\sigma}|| < \varepsilon/2$ if we set $\mathcal{N}_1$ sufficiently small. 

Let $\mathcal{N}_\theta \equiv \{\theta: ||\theta - \theta^*||<\varepsilon_1/2\}$ and define $\Delta = L(\theta^*, \sigma^*) - \sup_{\theta \in \mathcal{N}_\theta^c \cap \Theta}L(\theta, \sigma^*) >0$. The inequality follows from information inequality, compactness of $\mathcal{N}_\theta^c \cap \Theta$, and continuity of $L(\theta, \sigma)$. Then, if $\hat{\theta}^\npliter \notin \mathcal{N}_\theta$, then $L(\theta^*, \sigma^*) - L(\hat{\theta}^\npliter, \sigma^*) \geq \Delta$. We can also derive that 
\begin{align*}
   & L(\theta^*, \sigma^*)  - L(\hat{\theta}^\npliter, \sigma^*) \\
    & \leq L_M(\theta^*, \sigma^*) - L_M(\hat{\theta}^\npliter, \sigma^*) + 2\sup\nolimits_{(\theta, \sigma) \in \Theta \times \beliefspace} |L_M(\theta, \sigma) - L(\theta, \sigma)| \\
    & \leq L_M(\theta^*, \hat{\sigma}^{\npliter-1}) - L_M(\hat{\theta}^\npliter, \hat{\sigma}^{\npliter-1}) + 2\sup\nolimits_{(\theta, \sigma) \in \Theta \times \beliefspace} |L_M(\theta, \sigma) - L(\theta, \sigma)| + 2\sup\nolimits_{\theta \in \Theta}|L(\theta, \sigma^*) - L(\theta, \hat{\sigma}^{\npliter-1})| \\
    & \leq 2\sup\nolimits_{(\theta, \sigma) \in \Theta \times \beliefspace} |L_M(\theta, \sigma) - L(\theta, \sigma)| + 2\sup\nolimits_{\theta \in \Theta}|L(\theta, \sigma^*) - L(\theta, \hat{\sigma}^{\npliter-1})|.
\end{align*}
The last equality follows from $\hat{\theta}^{\npliter} = \argmax_{\theta \in \Theta}L_M(\theta, \hat{\sigma}^{\npliter-1})$. We know that $2\sup\nolimits_{(\theta, \sigma) \in \Theta \times \beliefspace} |L_M(\theta, \sigma) - L(\theta, \sigma)| = o(1)$ a.s. from Step 1 of Proof for consistency in Proposition \ref{prop:largegame2}. Since $L(\theta, \sigma)$ is continuous, there exists $\varepsilon_\Delta$ such that  $2\sup\nolimits_{\theta \in \Theta}|L(\theta, \sigma^*) - L(\theta, \hat{\sigma}^{\npliter-1})|<\Delta/2$ if $||\sigma^* - \hat{\sigma}^{\npliter-1}||\leq \varepsilon_\Delta$. It follows that if $||\sigma^* - \hat{\sigma}^{\npliter-1}||\geq \varepsilon_\Delta$, $L(\theta^*, \sigma^*) - L(\hat{\theta}^\npliter, \sigma^*)\leq\Delta$ which means $\hat{\theta}^\npliter \in \mathcal{N}_\theta$. 

Then, if we set $\mathcal{N}_1 = \{\sigma:||\sigma - \sigma^* || \leq \min\{\varepsilon_1/2, \varepsilon_{\Delta}\}\}$, we have $||\sigma - \sigma^*||<\varepsilon/2$ which gives $(\hat{\theta}^{\npliter}, \hat{\theta}^{\npliter-1}) \in \mathcal{N}(\varepsilon_1)$ a.s. 
\end{proof}

We now present a proof for Proposition \ref{convergence}. This is analogous to Proposition 2 in \cite{KS12}.

\begin{proof}
Let $b>0$ be a constant such that 
\[\rho (M_{\Psi_\theta} \Psi_\sigma)+2b <1. \]
From Lemma 5.6.10 of \cite{Horn1985}, there is a matrix norm $||\cdot||_\alpha$ such that 
\[||M_{\Psi_\theta}\Psi_\sigma||_\alpha \leq r(M_{\Psi_\theta}\Psi_\sigma)+b.\]
Define a vector norm $||\cdot||_\beta$ for $x \in \mathbb{R}^{NJK}$ as $||x||_\beta \equiv ||[x \ 0 \cdots 0]||_\alpha$, then a direct calculation gives 
\[||Ax||_\beta = ||A [x\ 0 \cdots 0 ]||_\alpha \leq ||A||_\alpha ||x||_\beta\]
for any matrix $A$. 

From the equivalence of vector norms in $\mathbb{R}^{NJK}$, we can restate Proposition \ref{prop:update} in terms of $\beta$: there exists $c>0$ such that 
\[\hat{\sigma}^{\npliter}-\hat{\sigma} = M_{\Psi_\theta}\Psi_\sigma(\hat{\sigma}^{\npliter-1}-\hat{\sigma})+O(M^{-1/2}||\hat{\sigma}^{\npliter-1}-\hat{\sigma}||_\beta + ||\hat{\sigma}^{\npliter-1}-\hat{\sigma}||_\beta^2)\]
a.s. holds uniformly in $\hat{\sigma}^{\npliter-1} \in \{\sigma: ||\sigma-\sigma^*||_\beta<c\}$.

We rewrite this statement further so that it is amenable to recursive substitution.\\
(i) $||M_{\Psi_\theta}\Psi_\sigma(\hat{\sigma}^{\npliter-1}-\hat{\sigma})||_\beta \leq ||M_{\Psi_\theta}\Psi_\sigma||_\alpha ||\hat{\sigma}^{\npliter-1}-\hat{\sigma}||_\beta \leq (r(M_{\Psi_\theta}\Psi_\sigma)+b)||\hat{\sigma}^{\npliter-1}-\hat{\sigma}||_\beta$.\\
(ii) The remainder term can be written as $O(M^{-1/2}+||\hat{\sigma}^{\npliter-1}-\hat{\sigma}||_\beta)||\hat{\sigma}^{\npliter-1}-\hat{\sigma}||_\beta$. Then setting $c<b$ and using consistency of $\hat{\sigma}$, this term is smaller than $b||\hat{\sigma}^{\npliter-1}-\hat{\sigma}||_\beta$ a.s.\\
(iii) Since $\hat{\sigma}$ is consistent, $\{\sigma:||\hat{\sigma}-\hat{\sigma}||_\beta<c/2\} \subset \{\sigma:||\sigma-\sigma^*||_\beta<c\}$ a.s.\\
From (i) to (iii), 
\[||\hat{\sigma}^{\npliter-1}-\hat{\sigma}||_\beta \leq (r(M_{\Psi_\theta}\Psi_\sigma)+2b)||\hat{\sigma}^{\npliter-1}-\hat{\sigma}||_\beta\]
holds a.s. for all $\hat{\sigma}^{\npliter-1} \in \{\sigma:||\hat{\sigma}^{\npliter-1}-\hat{\sigma}||_\beta<c/2\}$. Because each NPL updating of $(\theta, \sigma)$ uses the same pseudo-likelihood function, we may recursively substitute for the $\hat{\sigma}_j$'s.
\begin{align*}
    ||\hat{\sigma}^{\npliter-1}-\hat{\sigma}||_\beta & \leq (r(M_{\Psi_\theta}\Psi_\sigma)+2b)||\hat{\sigma}^{\npliter-1}-\hat{\sigma}||_\beta \\
    & \leq (r(M_{\Psi_\theta}\Psi_\sigma)+2b)^2||\hat{\sigma}^{\npliter-2}-\hat{\sigma}||_\beta \\
    & \vdots \\
    & \leq (r(M_{\Psi_\theta}\Psi_\sigma)+2b)^\npliter||\hat{\sigma}^0-\hat{\sigma}||_\beta
\end{align*}
Then, $\lim_{\npliter\to \infty}\hat{\sigma}^\npliter=\hat{\sigma}$ a.s. if $||\hat{\sigma}^{\npliter-1}-\hat{\sigma}||_\beta<c/2$. Applying the equivalence of vector norms in $\mathcal{R}^L$ to $||\hat{\sigma}^0-\hat{\sigma}||_\beta$ and $||\hat{\sigma}^0-\hat{\sigma}||$ and consistency of $\hat{\sigma}$, the result follows. 
\end{proof}

\section{Implementation Details}
\label{app:detail}

In this section, we provide more details on estimation process. Our goal is to estimate $\theta = (\theta_{\text{FC,1}}, \ldots, \theta_{\text{FC,5}}, \theta_{\text{RS}}, \theta_{\text{RN}}, \theta_{\text{EC}})$ using the NPL algorithm. We first review two steps needed for estimating:
\begin{enumerate}
\item Given $\hat{\sigma}^{\npliter-1}$, update $\hat{\theta}$ by 
\[ \hat{\theta}^{\npliter} = \argmax_{\theta \in \Theta} L_M(\theta, \hat{\sigma}^{\npliter-1})= \argmax_{\theta \in \Theta}\frac{1}{M}\sum_{m=1}^M \sum_{n=1}^T \ln P_{k_{m,n-1},k_{mn}}(\Delta; \Psi(\theta, \hat{\sigma}^{\npliter-1})) \tag{\ref{eq:npl1}}.\]
\item Update $\hat{\sigma}$ using the equilibrium condition, i.e.
\[\hat{\sigma}^{\npliter} = \Psi(\hat{\theta}^{\npliter}, \hat{\sigma}^{\npliter-1}) \tag{\ref{eq:npl2}} \]
where 
\[ \hat{\sigma}_{jk} = \frac{\exp(\psi_{jk}+V_{l(j,k)})}{\sum_j' \exp(\psi_{j'k}+V_{l(j',k)})}.\]
\end{enumerate}

We need to express the likelihood function in terms of $\theta$ to maximize it with regard to $\theta$. First, we start from calculating the value function in \eqref{eq:ccpexp} using value function below: 
\[V_i(\theta, \sigma)=\Big[(\rho+N\lambda)I-\lambda \sum_{m=1}^N \Sigma_m(\sigma_m) -Q_0\Big]^{-1}[u_i(\theta)+\lambda_i E_i(\theta, \sigma)]  \tag{\ref{eq:value}} \]
where $\Sigma_m(\sigma_m)$ is the $K\times K$ state transition matrix induced by the actions of player $m$ given the choice probabilities $\sigma_m$ and where $E_i(\theta, \sigma)$ is a $K \times 1$ vector where each element $k$ is the ex-ante expected value of the choice-specific payoff in state $k$, $\sum_j \sigma_{ijk}[\psi_{ijk}+e_{ijk}(\theta, \sigma)]$.

We can define the first parenthesis in \eqref{eq:value} as $\Xi$. We assume $\rho_i=1$, $\lambda_{ik}=1$ for all $i=1, \ldots, N$ and $k=1, \ldots, K$, and that $Q_0$ is known. We can also calculate $\Sigma_m(\sigma_m)$ using $\hat{\sigma}_m^{\npliter-1}$. We start from the true probabilities $\sigma^*$ for the 2S-True estimator and from random draws for $\hat{\sigma}^0$ for NPL-random estimator.  

Then, we rewrite the second parenthesis in \eqref{eq:value}, in terms of $\theta$. The flow payoff $u_{ik}$ can be rewritten as:
\begin{align*}
u_{ik} & =\theta_{\text{RS}}\ln(s_k) - \theta_{\text{RN}}\ln\big(1+\sum_{m\neq i}a_{mk}\big)- \theta_{FC,i} \\
& = z_{ik}^u\theta^u
\end{align*}
where $z^u_{ik} = [I_{ik} \ \ln(s_k) \ \ln\big(1+\sum_{m\neq i}a_{mk}\big)]$ where $I_{ik}$ is a $1\times 5$ vector with 1 in $i$-th position and zero elsewhere, and $\theta^u = (\theta_{\text{FC,1}}, \ldots, \theta_{\text{FC,5}}, \theta_{\text{RS}}, \theta_{\text{RN}})'$. 

Now we express $E_{ik}(\theta, \sigma)$, corresponding element of $E(\theta, \sigma)$ in term of $\theta$. The determinant part of instantaneous payoff has the structure:
\[\psi_{ijk} = z^\psi_{ijk}\theta_{\text{EC}}\]
where
\[z_{ijk}^\psi=\begin{cases} -1 & \text{if} \ j=1 \\
0 & \text{otherwise} \end{cases}\]

The expected value of deterministic instantaneous payoff  $z^\psi_{ijk}$ differs by the choice, so
\begin{align*}
\E z_{ik}^\psi & = \sum_{j\in \mathcal{A}}\sigma_{ijk} z_{ijk}^\psi \\
&= \lambda[\sigma_{0k} \times 0 + \sigma_{1k} \times (-1)]
\end{align*}

The stochastic part $\varepsilon_{ijk}$ has the expected value $e_{ijk}=\E[\varepsilon_{ijk}]$:
\begin{equation*}
  e_{ijk} = \lambda[\sigma_{0k}(\gamma-\log(\sigma_{0k})) + \sigma_{1k}(\gamma-\log(\sigma_{1k})) + \sigma_{-1,k}(\gamma-\log(\sigma_{-1,k}))]
\end{equation*}
using T1EV distribution of $\varepsilon_{ijk}$. 

Collecting all terms for the value function, we can rewrite the value function as
\begin{align*} V_{ij}(\theta, \sigma) & = \Xi^{-1}[u(\theta) + \lambda E(\theta, \sigma)] \\
& = \Xi^{-1}[z_{ij}^u\theta^u + \lambda \E(z_{ij}^\psi\theta_{\text{EC}} + \varepsilon_{ij})] \\ 
& = \Xi^{-1}\E[z_{ij}\theta+\varepsilon_{ij}] \\
& = W_{ij} \theta' + \tilde{e}_{ij}. \end{align*}
where $z_{ij}$ is a $K \times |\theta|$ vector with each $k$th element $(z_{ik}^u, z_{ijk}^\psi)$,  $W_{ijk}$ is a $K \times 1$ vector which equals $\Xi^{-1}\E z_{ij}$, and $\tilde{e}_{ij}$ is $\Xi^{-1}e_{ij}$.  Now we can write conditional choice probabilities in terms of $\theta$, 
\begin{equation}\label{eq:estimation} \hat{\sigma}_{ijk} = \frac{\exp(W_{ijk}\theta'+\tilde{e}_{ijk})}{\sum_{j'} \exp(W_{ij'k}\theta'+\tilde{e}_{ij'k})}\end{equation}

Since we can calculate $W_{ijk}$ and $\tilde{e}_{ijk}$ from known parameters or conditional choice probabilities, the only unknown term in $\hat{\sigma}$, hence the likelihood function, is $\theta$. Then, we can form $Q$ in terms of $\theta$ using $Q_0$ and $\hat{\sigma}$, and maximize the likelihood with regard to $\theta$. We use starting values of $\theta^0= (1,1,1,1,1,1,1,1)$. We use the solver \texttt{fminunc} in MATLAB. 

Once $\hat{\theta}^\npliter$ is estimated, the conditional choice probability $\hat{\sigma}^\npliter$ is updated. The entire process is repeated 20 times (\Npliter = 20). For Monte Carlo experiments, we generate 100 different data set using same parameters and estimate separately for each data set. Then, we report the mean of estimates as estimates for each estimator and the standard deviation from 100 estimates as standard error. 

\section{Estimation of Misspecified Continuous Time Models}
\label{app:dtct}

Discrete time models are widely used in the literature and so
previously in Section~\ref{sec:mis} we considered estimating a
misspecified discrete time, simultaneous move model when the data
generating process was a continuous time model with asynchronous
moves.
Here, we report the results of the reverse form of misspecification:
the data generating process is a simultaneous move game in discrete
time and the estimated model is a continuous time model with
asynchronous moves.
As before, we suppose the data has been aggregated to discrete time
intervals and note that this is an additional disadvantage for the
continuous time model, in addition to misspecification.

The results are reported in Table~\ref{table:mis:dtdt:ctdt}.
In this case there are also large biases in the parameter estimates.
Therefore, if simultaneous moves are an important institutional
feature in an application (e.g., a repeated sealed-bid auction game),
then estimating a continuous time model with asynchronous moves would
likely result in misleading inference.
When considering a continuous time model, we suggest that researchers
ask whether or not an ideal (i.e., not time aggregated) dataset would
contain distinct dates or times of player actions.
If so, then an asynchronous move, continuous time model may be well
suited to the application.
If not, then a simultaneous move model in discrete time is likely to
be a better choice.

\begin{table}[tbph]
\centering
\begin{small}
\begin{threeparttable}
\caption{Misspecified Monte Carlo Results (Continuous Time Estimation of Discrete Time DGP)}
\label{table:mis:dtdt:ctdt}
\begin{tabular}{llccccc}
\toprule
        &                      & \multicolumn{4}{c}{Parameters}                                          \\
        \cline{3-6}
Exp. & Values  & $\theta_{\text{FC,1}}$ & $\theta_{\text{RS}}$ & $\theta_{\text{EC}}$ & $\theta_{\text{RN}}$ \\ \midrule
\textbf{1} & \textbf{True values} & \textbf{-1.9000} & \textbf{1.0000} & \textbf{1.0000} & \textbf{0.0000} \\
 & Correct & -1.8830 (0.4711) & 1.0383 (0.1952) & \ 1.0146 (0.3040) & \  0.0747 (0.4094)  \\
 & Misspecified & -5.0609 (0.7332) & 1.4650 (0.2091) & -1.2079 (0.3799) & -1.9307 (0.6838)  \\
 & Bias & -2.3015 & 0.4821 & -2.2504 & -1.1784 \\
             \addlinespace[0.2cm]
\textbf{2} & \textbf{True values} & \textbf{-1.9000} & \textbf{1.0000} & \textbf{1.0000} & \textbf{1.0000} \\
 & Correct & -1.9558 (0.3573) & 1.0461 (0.1775) & \ 0.9880 (0.2527) & \ 1.1051 (0.4033) \\
 & Misspecified & -5.5997 (0.8459) & 1.8943 (0.2463) & -1.7383 (0.4201)  & -0.0256 (0.327) \\
 & Bias &  -2.4794 & 0.4477 & -2.7987 & -1.1142 \\
             \addlinespace[0.2cm]
\textbf{3} & \textbf{True values} & \textbf{-1.9000} & \textbf{1.0000} & \textbf{1.0000} & \textbf{2.0000} \\
 & Correct &  -1.9521 (0.3397) & 1.0416 (0.1953) & \ 0.9912 (0.2159) & \ 2.0869 (0.5427) \\
 & Misspecified & -5.5421 (0.9863) & 1.4382 (0.3297)& -1.5748 (0.4795) &  \ 0.2851 (0.7541)  \\
 & Bias & -2.2535 & 0.0458 & -2.5610 & -1.9033 \\
             \addlinespace[0.2cm]
\textbf{4} & \textbf{True values} & \textbf{-1.9000} & \textbf{1.0000} & \textbf{0.0000} & \textbf{1.0000} \\
 & Correct & -1.9443 (0.4427) & 0.9993 (0.1939) & \  0.0068 (0.3146) & \ 0.9524 (0.4843) \\
 & Misspecified & -8.6037 (1.1178) & 2.7119 (0.4770) & -4.3979 (0.6835) & -0.4729 (0.6099)   \\
 & Bias & -3.9897 & 0.8670 & -4.9044 & -1.3647 \\
             \addlinespace[0.2cm]
\textbf{5} & \textbf{True values} & \textbf{-1.9000} & \textbf{1.0000} & \textbf{2.0000} & \textbf{1.0000} \\
 & Correct &  -1.9285 (0.3105) & 1.0209 (0.1811) & \ 1.9961 (0.2065) & \ 1.0190 (0.3562) \\
 & Misspecified & -3.4500 (0.4940)	& 1.2746 (0.1470) &	\ 0.7175 (0.2127) & \ 0.3706 (0.4182)	 \\
 & Bias & -1.7883 & 0.3181 & -1.3200 & -0.7196 \\
             \addlinespace[0.2cm]
\textbf{6} & \textbf{True values} & \textbf{-1.9000} & \textbf{1.0000} & \textbf{4.0000} & \textbf{1.0000} \\
 & Correct &  -1.9921 (0.2694) & 1.0570 (0.1675) & \ 4.0473 (0.2497) & \ 1.0799 (0.2934) \\ 
 & Misspecified & -2.2741 (0.2431) & 0.9331 (0.1407) &  \ 3.5844 (0.2987)   & \ 0.5177 (0.1965)\\
 & Bias &   -2.0902     &    0.6627  &        0.1001     &    -0.4752 \\
\bottomrule
\end{tabular}
\begin{tablenotes}
  \footnotesize
\item Displayed values are means with standard deviations in parentheses.
\end{tablenotes}
\end{threeparttable}
\end{small}
\end{table}

\section{Multiplicity and Stability}
\label{sec:psd08}

To investigate multiplicity of equilibria and stability of CNTPL fixed points, we considered
a continuous-time version of the simple entry-exit model of \cite{PS08}.
This is a discrete time model that was also considered by
\cite{Dearing-blevins-2021} and \cite{Aguirregabiria2019}.
In the model there are $N=2$ players with $J=2$ actions.
The only state variable in the model is the incumbency status of the firms.
Let $x_{k,i} = 1$ denote firm $i$'s activity in the market and $x_{k,i} = 0$ denote that firm $i$ is inactive.
The state of the model can equivalently be represented by a single integer state with $K=4$ values $\mathcal{K} = \lbrace 1,2,3,4 \rbrace$ representing the previous action tuples $x_k = (x_{k,1},x_{k,2}) \in \mathcal{X} = \lbrace (0,0), (0,1), (1,0), (1,1) \rbrace$.
Exit is not permanent and firms can re-enter after exiting.
Monopoly firms in the market earn a constant flow payoff $\theta_\text{M}$ while firms in duopoly earn $\theta_{\text{M}} + \theta_{\text{C}}$, where $\theta_{\text{C}}$ is negative.
Firms pay a cost $\theta_{\text{EC}}$ to enter and exiting firms receive a scrap value of $\theta_{\text{SV}}$.

In terms of our general model notation we have
\begin{equation*}
  u_{ik} = \begin{cases}
             0 & \text{if } x_{k,i} = 0, \\
             \theta_{\text{M}} & \text{if } x_{k,i} = 1 \text{ and } x_{k,2-i} = 0,\\
             \theta_{\text{M}} + \theta_{\text{C}} & \text{if } x_{k,i} = 1 \text{ and } x_{k,2-i} = 1,
           \end{cases}
           \quad
  \psi_{ijk} =
  \begin{cases}
    0 & \text{if } j=0\\
    \theta_{\text{EC}} & \text{if } j=1 \text{ and } x_{k,i} = 0 \\
    \theta_{\text{SV}} & \text{if } j=1 \text{ and } x_{k,i} = 1 \\
  \end{cases}.
\end{equation*}

We searched for equilibria under several parameter vectors.
In all cases we specified $\lambda = 1$ and $\rho = 0.05$.
In each case we solved the system of equilibrium conditions with 10,000 random initial choice probability vectors.
Here, we summarize two specifications.\\
\textbf{Specification 1:} $(\theta_{\text{M}},\theta_{\text{C}},\theta_{\text{EC}},\theta_{\text{SV}}) = (1.2, -2.4, -0.2, 0.1)$.
Here, we found a unique stable, symmetric equilibrium with
\begin{align*}
\sigma_{11\cdot} &= ( 0.8326, 0.3990, 0.1436, 0.5619 ), \\
\sigma_{21\cdot} &= ( 0.8326, 0.1436, 0.3990, 0.5619 ).
\end{align*}\\
\textbf{Specification 2:} $(\theta_{\text{M}},\theta_{\text{C}},\theta_{\text{EC}},\theta_{\text{SV}}) = (2.0, -4.0, -1.0, 0.1)$.
In this case, we found one unstable, symmetric equilibrium with
\begin{align*}
\sigma_{11\cdot} &= ( 0.9577, 0.4469, 0.0131, 0.3570 ),\\
\sigma_{21\cdot} &= ( 0.9577, 0.0131, 0.4469, 0.3570 ),
\end{align*}
as well as two stable, asymmetric equilibria with
\begin{align*}
\sigma_{11\cdot} &= ( 0.6677, 0.1020, 0.1753, 0.7794 ),\\
\sigma_{21\cdot} &= ( 0.9952, 0.0010, 0.8526, 0.0608 ),
\end{align*}
and vice versa.

Therefore, we should not expect that in general the continuous time model will have a unique equilibrium or that the CTNPL mapping will be stable.
Although we did not find evidence of multiple equilibria in the five-firm heterogeneous entry-exit model used for the Monte Carlo section, this is a much higher-dimensional model and so a similar equilibrium search is more difficult in that setting.

\end{appendix}
\end{document}